\newcommand{\etal}{\textit{et al}.}
\newcommand{\mnote}{\marginnote}
\newcommand{\argmin}{\operatornamewithlimits{argmin}}
\newcommand{\argmax}{\operatornamewithlimits{argmax}}
\newcommand{\tnote}[1]{\ifthenelse{\boolean{disable-comments}}{}{ \mnote{#1}}}   
\newcommand{\BigO}{\mathcal{O}}
\newcommand{\BigOT}{\tilde{\mathcal{O}}}
\newcommand{\ds}{\mathcal{D}}
\newcommand{\eps}{\varepsilon}
\newcommand{\poly}{\mathrm{poly}}
\newcommand{\ignore}[1]{}
\newcommand{\todonote}[1]{\ifthenelse{\boolean{disable-todos}}{}{ \todo[inline]{#1}}}
\newcommand{\origG}{\mathcal{G}}
\newcommand{\K}{\EuScript{K}}
\newcommand{\Gd}{\mathbb{G}}
\newcommand{\subcell}{\xi}
\newcommand{\init}{\textsc{Build}}
\newcommand{\search}{\textsc{HungarianSearch}}
\newcommand{\augment}{\textsc{Augment}}
\newcommand{\genduals}{\textsc{GenerateDuals}}
\newcommand{\X}{\mathbb{X}}
\newcommand{\B}{\mathbb{B}}
\newcommand{\A}{\mathbb{A}}
\newcommand{\polys}{\poly\{\log{n}, 1/\eps\}}
\newcommand{\expect}[1]{\mathbb{E}[#1]}
\newcommand{\prob}[1]{\mathbf{Pr}[#1]}
\newcommand{\cell}{\square}
\newcommand{\disteuc}[2]{\| #1 - #2 \|}
\newcommand{\distsq}[2]{\| #1 - #2 \|^2}
\newcommand{\subcells}[1]{\mathbb{G}[#1]}
\newcommand{\cost}[1]{w(#1)}
\newcommand{\sqcost}[1]{c(#1)}
\newcommand{\res}[1]{\mathcal{G}_{#1}}
\newcommand{\Mopt}{M_{\text{OPT}}}
\newcommand{\Mopth}{\hat{M}_{\text{OPT}}}
\newcommand{\opt}{\text{OPT}}
\newcommand{\sync}{\textsc{Sync}}
\newcommand{\construct}{\textsc{Construct}}
\newcommand{\wspd}{\mathcal{W}}
\newcommand{\entry}{\downarrow}
\newcommand{\exit}{\uparrow}
\newcommand{\aff}[1]{\EuScript{A}(#1)}
\newcommand{\affj}[2]{\EuScript{A}_{#2}(#1)}
\newtheorem{lemma}{Lemma}[section]
\newtheorem{theorem}[lemma]{Theorem}
\newtheorem{cor}[lemma]{Corollary}
\begin{document}

\begin{titlepage}

\title{An $\tilde{O}(n^{5/4})$ Time $\varepsilon$-Approximation Algorithm for RMS Matching in a Plane}
\author{
Nathaniel Lahn\thanks{Department of Computer Science, Virginia Tech. Email:
 \texttt{lahnn@vt.edu}} \and
Sharath Raghvendra\thanks{Department of Computer Science, Virginia Tech. Email:
 \texttt{sharathr@vt.edu}       
}}
\date{}
\maketitle

\begin{abstract}
The 2-Wasserstein distance (or RMS distance) is a useful measure of similarity
between probability distributions that has exciting applications in machine
learning. For discrete distributions, the problem of computing this distance
can be expressed in terms of finding a minimum-cost perfect matching on a
complete bipartite graph given by two multisets of points $A,B \subset
\mathbb{R}^2$, with $|A|=|B|=n$, where the ground distance between any two
points is the squared Euclidean distance between them. Although there is a
near-linear time relative $\varepsilon$-approximation algorithm for the case
where the ground distance is Euclidean (Sharathkumar and Agarwal, JACM 2020),
all existing relative $\varepsilon$-approximation algorithms for the RMS
distance take $\Omega(n^{3/2})$ time. This is primarily because, unlike
Euclidean distance, squared Euclidean distance is not a metric. In this paper,
for the RMS distance, we present a new $\varepsilon$-approximation algorithm
that runs in $\BigO(n^{5/4}\poly\{\log n,1/\varepsilon\})$ time.

Our algorithm is inspired by a recent approach for finding a minimum-cost
perfect matching in bipartite planar graphs (Asathulla et al., TALG 2020).
Their algorithm depends heavily on the existence of sub-linear sized vertex
separators as well as shortest path data structures that require planarity.
Surprisingly, we are able to design a similar algorithm for a complete
geometric graph that is far from planar and does not have any vertex
separators. Central components of our algorithm include a quadtree-based
distance that approximates the squared Euclidean distance and a data structure
that supports both Hungarian search and augmentation in sub-linear time.
\end{abstract}
\end{titlepage}

\section{Introduction}
Given two sets $A$ and $B$ of $n$ points in $\mathbb{R}^2$, let $\origG(A \cup B, A\times B)$ be the complete bipartite graph on $A,B$. A matching $M$ is a set of vertex-disjoint edges of $\origG$. The matching $M$ is \emph{perfect} if it has cardinality $n$. For any $p \ge 1$, the cost of an edge $(a,b)$ is simply $\|a-b\|^p$; here, $\|a-b\|$ is the Euclidean distance between $a$ and $b$. Consider the problem of computing a matching $M$ that minimizes the sum of all its edges' costs, i.e., the matching with smallest $w_p(M)=\sum_{(a,b)\in M} \|a-b\|^p$. When $p=1$, this problem is the well-known \emph{Euclidean bipartite matching problem}. When $p=2$, the matching computed minimizes the sum of the squared Euclidean distances of its edges and is referred to as the \emph{RMS matching}. For $p =\infty$, the matching computed will minimize the largest cost edge and is referred to as the \emph{Euclidean bottleneck matching}. For a parameter $\eps >0$ and $p  \ge 1$, we say that the matching $M$ is an $\eps$-approximate matching if $w_p(M)\le (1+\eps)w_p(M_{OPT})$ where $M_{OPT}$ is a matching with the smallest cost. In this paper, we consider the problem of computing an $\eps$-approximate RMS matching in the plane and present a randomized $\BigOT(n^{5/4})$ time\footnote{We use $\BigOT(\cdot)$ to hide $\mathrm{poly}\{\log n, 1/\eps\}$ factors in the complexity. } algorithm. For the remainder of the paper, we assume that $w(M) = w_2(M)$.

When $A$ and $B$ are multi-sets, the cost of the RMS matching is also known as the $2$-Wasserstein distance -- a popular measure of similarity between two discrete distributions. Wasserstein distances are very popular in machine learning applications. \ignore{For instance, $2$-Wasserstein distance has been successfully used as a similarity metric for images where each image is viewed as a $2$-dimensional discrete distribution.} For instance, $2$-Wasserstein distance has been used as a similarity metric for images using color distributions~\cite{2WassersteinUsingColors}. A $2$-dimensional grayscale image can be represented as a discrete distribution on $2$-dimensional points, and Wasserstein distance can be used to compare the similarity between such distributions in a fashion similar to~\cite{altschulerNIPS17,cuturiNIPS13,dvurechenskyICML18}. The $2$-Wasserstein distance has also been used for $2$-dimensional shape reconstruction~\cite{2dShapeReconstruction}.

Wasserstein distance is also used as an objective function for generative adversarial neural networks (GANs). GANs are used to generate fake objects, such as images, that look realistic~\cite{wgan,improvedwgan,liu2019qwgan}. Here, we have a `real' distribution $\mathcal{R}$ and a `fake' distribution $\mathcal{F}$. Sampling $m$ images from both $\mathcal{F}$ and $\mathcal{R}$ and computing the Wasserstein distance between the two samples gives a measure of how good the fake image generator imitates real data. The matchings (or maps) corresponding to the $2$-Wasserstein distance are also attractive because they permit a unique interpolation between the distributions; see for instance~\cite{solomon2015convolutional}.

\paragraph{Previous Results.} For any weighted bipartite graph with $m$ edges and $n$ vertices, the fundamental Hungarian algorithm can be used to find a minimum-cost maximum-cardinality matching in $\BigO(mn+n^2 \log n)$ time \cite{hungarian_56}.\footnote{Note that $m=\BigO(n^2)$ in our setting.} When edge costs are positive integers upper-bounded by a value $C$, the algorithm given by Gabow and Tarjan  computes a minimum-cost maximum cardinality matching in $\BigO(m\sqrt{n}\log (nC))$ time. These combinatorial algorithms execute $\BigO(\sqrt{n})$ phases where each phase executes an $\BigO(m)$ time search on a graph to compute a set of augmenting paths. 

In geometric settings, one can use a dynamic weighted nearest neighbor data structure to efficiently execute the search for an augmenting path in $\BigOT(n)$ time. Consequently, there are many $\BigOT(n^{3/2})$ time exact and approximation algorithms for computing matchings in geometric settings~\cite{efrat_algo,argawal_phillips_rms,s_socg13,v_focs98}. Improving upon the execution time of $\Omega(n^{3/2})$ for exact and approximation algorithms remains a major open question in computational geometry. There are no known exact geometric matching algorithms for $2$-dimensions or beyond that break the $\Omega(n^{3/2})$ barrier. However, there has been some progress for approximation algorithms for $p=1$, which we summarize next.

For the Euclidean bipartite matching problem, Agarwal and Varadarajan~\cite{av_scg04} gave an $\BigO(\log(1/\eps))$ approximation algorithm that executes in $\BigO(n^{1+\eps})$ time. Indyk~\cite{i_soda07} extended this approach to obtain a constant approximation algorithm that runs in near-linear time. Sharathkumar and Agarwal~\cite{sa_stoc12} presented a near-linear time $\eps$-approximation algorithm for the Euclidean bipartite matching problem. Each of these algorithms rely on approximating the Euclidean distance by using a ``randomly shifted'' quadtree. Extending this to $p > 1$ seems very challenging since the expected error introduced by the randomness grows very rapidly when $p = 2$ and beyond. 

When the costs satisfy metric properties, the uncapacitated minimum-cost flow between multiple sources and sinks is the same as  minimum-cost matching problem. Using a generalized preconditioning framework, Sherman~\cite{sherman} provided an $\BigO(m^{1+o(1)})$ time approximation algorithm to compute the uncapacitated minimum-cost flow in any weighted graph $G$ with $m$ edges and $n$ vertices, where the cost between any two vertices is the shortest path cost between them in $G$. Using this algorithm, one can use Euclidean spanners of small size to obtain an $\BigO(n^{1+o(1)})$ time algorithm that $\eps$-approximates the Euclidean bipartite matching cost. Khesin~\etal~\cite{preconditioningTransport} provided a more problem-specific preconditioning algorithm that returns an $\eps$-approximate Euclidean bipartite matching with an improved execution time of $\BigOT(n)$. Unlike with Euclidean costs, the squared Euclidean costs do not satisfy triangle inequality, and the reduction to uncapacitated minimum-cost flow does not apply. Furthermore, there are no known spanners of small size for squared Euclidean costs.  Therefore, these previous techniques seem to have limited applicability in the context of RMS matching.

Recently, Asathulla~\etal~\cite{soda-18} as well as Lahn and Raghvendra~\cite{lr_socg19,lr_soda19} presented algorithms that exploit sub-linear sized graph separators to obtain faster algorithms for minimum-cost matching as well as maximum cardinality matching on bipartite graphs. For instance, for any bipartite graph with $m$ edges and $n$ vertices and with a balanced vertex separator of size $n^{\delta}$, for $1/2\le \delta < 1$, Lahn and Raghvendra~\cite{lr_socg19} presented a $\BigOT(mn^{\delta/(1+\delta)})$ time algorithm to compute a maximum cardinality matching.  The $\eps$-approximate bottleneck matching problem can be reduced to finding a maximum cardinality matching in a grid-based graph.  Using the fact that a $d$-dimensional grid has a balanced, efficiently computable vertex separator of size $\BigO(n^{1-1/d})$, they obtain an $\BigOT(n^{1+\frac{d-1}{2d-1}})$ time algorithm to compute an $\eps$-approximate bottleneck matching of two sets of $d$ dimensional points. 

Given the wide applicability of Wasserstein distances, machine learning researchers have designed algorithms that compute an approximate matching within an additive error of $\eps n$. Some of these algorithms run in $\BigOT(n^2C/\eps)$ for arbitrary costs~\cite{our-neurips-2019-otapprox,quanrudSOSA19}; recollect that $C$ is the diameter of the input point set. For $2$-Wasserstein distance, such a matching can be computed in time that is near-linear in $n$ and $C/\eps$. Some of the exact and relative approximation algorithms~\cite{sa_soda12} have informed the design of fast methods for machine learning applications~\cite{our-neurips-2019-otapprox}.

\paragraph{Our Results:}

Our main result is the following.
\begin{theorem}
\label{theorem:main}
For any point sets $A,B \subset \mathbb{R}^2$, with $|A|=|B|=n$, and for any parameter $0 < \eps \leq 1$, an $\eps$-approximate RMS matching can be computed in $\BigO(n^{5/4}\poly\{\log{n}, 1/\eps\})$ time with high probability.
\end{theorem}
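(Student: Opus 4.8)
The plan is to transplant the separator-based minimum-cost matching framework of Asathulla et al.\ onto the complete geometric graph $\origG$ by replacing its two planar ingredients --- sublinear-sized vertex separators and planar shortest-path data structures --- with quadtree-based substitutes. First I would pass from the squared Euclidean cost to a quadtree distance: snap $A\cup B$ to a sufficiently fine grid, impose a quadtree with an independent random shift, truncate it to $\BigO(\log(n/\eps))$ levels, and define a cost $\dQ{a}{b}$ that is determined only by the cell structure separating $a$ from $b$. The distance has to be designed so that it never underestimates $\|a-b\|^2$ by more than a constant factor and so that, in expectation over the random shift, its total value on $\Mopt$ is at most $(1+\eps)\,w(\Mopt)$ plus an additive term negligible against $w(\Mopt)$; then a minimum-cost matching under $\dQ{\cdot}{\cdot}$ is an $\eps$-approximate RMS matching with good probability, and after rescaling its costs are polynomially bounded integers, so cost-scaling techniques apply with only $\BigO(\log(n/\eps))$ scales. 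Alongside the quadtree I would compute a well-separated pair decomposition (WSPD) so that the $\Theta(n^2)$ edges of $\origG$ are represented by $\BigO(n)$ well-separated pairs, each carrying a single representative $\dQ{\cdot}{\cdot}$-cost; all subsequent searches and augmentations operate on this compressed edge set rather than on individual edges.

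Next, from the $\BigO(\log(n/\eps))$-level quadtree I would carve out an $r$-division-style decomposition: a family of $\BigO(n/r)$ \emph{pieces}, each a union of quadtree cells holding $\BigO(r)$ input points and interfacing with the rest of the instance only through $\BigOT(\sqrt r)$ ``portal'' cells. The role that a small balanced vertex separator plays in the planar algorithm is here played by the portals of the quadtree hierarchy together with the WSPD: although $\origG$ has no small separators, every residual path can be routed between pieces only through portal cells, so the ``between-piece'' residual structure is small. On top of this I would build the central data structure: for each piece, a compact encoding of all portal-to-portal residual shortest-path costs (the analogue of a planar dense distance graph, computed from the piece's own sub-quadtree and WSPD super-edges), glued into a single global compressed residual graph on the $\BigOT(n/\sqrt r)$ portals. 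Crucially I would exploit the product-like structure of well-separated pairs to make this compressed graph searchable, like a planar dense distance graph, in time proportional to the number of portals rather than to the number of super-edges. Setting $r=\sqrt n$, building a piece's encoding costs $\BigOT(n^{3/4})$, the global compressed graph has $\BigOT(n^{3/4})$ portals, and a Dijkstra-type pass over it runs in $\BigOT(n^{3/4})$.

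The algorithm is then a Hungarian/Gabow--Tarjan cost-scaling loop of $\BigOT(\sqrt n)$ phases; each phase performs a Hungarian search (a Dijkstra-type computation of residual distances and the induced dual adjustments) on the global compressed graph, opens up only the pieces actually touched to recover the true path, and augments along a maximal set of shortest residual paths, updating the encodings of the affected pieces and rebuilding a piece's encoding whenever its internal potentials have drifted too far. I would charge the per-phase search and the augmentation work (plus each piece's amortized share of rebuilds over the whole run) to $\BigOT(n^{3/4})$, so that $\BigOT(\sqrt n)$ phases give $\BigOT(n^{5/4})$; the preprocessing --- quadtree, WSPD, and the $n/r$ initial piece encodings at $\BigOT(n^{3/4})$ each --- fits within the same $\BigO(n^{5/4}\poly\{\log n,1/\eps\})$ budget, and the correctness and approximation guarantees follow from maintaining $\eps$-feasible dual potentials on the original instance throughout the scaling.

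I expect two steps to be the real obstacles. The first is the quadtree distance itself: a single randomly shifted quadtree distorts the squared cost of an individual edge by a factor that is large in expectation (roughly proportional to the spread of the point set), so the distance must be apportioned across levels and calibrated against $w(\Mopt)$ carefully enough that the distortion of the \emph{whole} optimal matching, not of any single edge, stays within $(1+\eps)$. The second, and harder, is engineering the search-and-augment data structure without planarity: computing and maintaining portal-to-portal residual distances inside a piece, obtaining a geometric substitute for the efficiently searchable planar dense distance graph, bounding how many pieces a Hungarian search or an augmentation can touch so that the $\BigOT(n^{3/4})$ per-phase accounting and the rebuild amortization actually go through, and keeping the compressed-graph duals consistent with a globally feasible potential on $\origG$ at every step.
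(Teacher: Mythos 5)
Your high-level plan --- replace planar separators by a randomly shifted quadtree plus a WSPD, build a compressed residual graph, and run a primal-dual search with lazy duals and amortized piece rebuilds --- is indeed the paper's strategy in spirit. But two of the concrete steps you rely on do not go through as stated, and the second one forces a completely different time accounting than the one you propose.

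First, your budget of $\BigOT(\sqrt n)$ phases, each performing one Hungarian search on a single global compressed graph with $\BigOT(n^{3/4})$ portals in $\BigOT(n^{3/4})$ time, presupposes a Fakcharoenphol--Rao-style structure that lets Dijkstra run in time proportional to the number of portals rather than the number of compressed edges. The paper explicitly identifies the construction of such a structure for the geometric associated graph as an \emph{open problem} (it would yield a near-linear algorithm); its Hungarian search instead costs time linear in the number of internal edges of the associated graph. Without that structure your global compressed graph has $\BigOT(n)$ internal edges ($\BigOT(r)$ per piece times $n/r$ pieces), so each of your $\BigOT(\sqrt n)$ searches costs $\BigOT(n)$ and the total is $\BigOT(n^{3/2})$, not $\BigOT(n^{5/4})$. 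The paper's accounting is structurally different: only $\BigO(\log n)$ phases; in phase $i$ every search is confined to an \emph{active} level-$i$ cell (edges between active cells have cost at least $\mu_i^2$ and cannot become admissible, where $\mu_i\approx 2^{i/2}$), the associated graph of an active cell has $\BigOT(\mu_i^{8/3})$ internal edges, a duality argument bounds the number of searches in phase $i$ by $\BigOT(n/\mu_i^2)$, the phases stop when $\mu_i^2\approx n^{3/4}$, and the remaining $\BigOT(n^{1/4})$ free vertices are matched by explicit $\BigOT(n)$-time global Hungarian searches. Second, your amortization of rebuilds (``whenever its internal potentials have drifted too far'') is not a mechanism: both here and in Asathulla et al.\ the bound on the number of affected pieces comes from charging each boundary-crossing (non-local) edge an explicit additive cost ($\mu_{ab}^2$ in the paper, baked into the feasibility condition $y(a)+y(b)\le d_Q(a,b)+\mu_{ab}^2$) and proving that the total additive cost over all augmenting paths is $\BigOT(n)$, so that level-$j$ pieces are rebuilt only $\BigOT(n/\mu_j^2)$ times. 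Relatedly, the distance function you defer is itself nontrivial: a uniform subcell grid gives either too many subcells or too much distortion for squared costs, and the paper needs an exponential subcell grid with minimum subcell size roughly $2^{i/2}$ so that a level-$i$ cell has only $\BigOT(2^{i/2})$ subcells while the additive error $\mu_{ab}^2$ stays an $\eps$-fraction of $\|a-b\|^2$ in expectation. These gaps --- especially the reliance on the searchable compressed graph --- mean the proposal as written does not establish the $\BigOT(n^{5/4})$ bound.
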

All previous algorithms that compute an $\eps$-approximate RMS matching take $\Omega(n^{3/2})$ time. 

\paragraph{Basics of Matching:}

Given a matching $M$, an \emph{alternating path} is a path whose edges alternate between edges of $M$ and edges not in $M$. A vertex is \emph{free} if it is not matched in $M$. An \emph{augmenting path} is an alternating path that begins and ends at a free vertex. Given an augmenting path $P$, it is possible to obtain a new matching $M' \leftarrow M \oplus P$ of one higher cardinality by \emph{augmenting} along $P$. 

Standard algorithms for minimum-cost bipartite matching use a \emph{primal-dual} approach where in addition to a matching $M$, the algorithm also maintains a set of \emph{dual weights} $y(\cdot)$ on the vertices.  A  matching $M$ along with a set of dual weights $y(\cdot)$ is \emph{feasible} if, for every edge $(a,b)$, in the input graph:
\begin{align*}
    y(a) + y(b) &\leq c(a,b). \\
    y(a) + y(b) &= c(a,b) \text{\quad if } (a, b) \in M.
\end{align*}
Here, $c(a,b)$ is the cost of the edge $(a,b)$. It can be shown that any feasible perfect matching is also a minimum-cost perfect matching.

The \emph{slack} of any edge with respect to these feasibility conditions is given by $s(a,b) = c(a,b) - y(a) - y(b)$. A set of edges is \emph{admissible} if it has zero slack. The fundamental Hungarian algorithm~\cite{hungarian_56} computes a minimum-cost matching by iteratively adjusting the dual weights and finding an augmenting path $P$ containing zero slack edges. Augmenting along this admissible path does not violate feasibility. As a result, the Hungarian algorithm arrives at an optimal matching in $n$ iterations.
\section{Overview of our Approach}
Our algorithm draws insight from a recent  $\BigOT(n^{4/3})$ time algorithm for computing a minimum-cost perfect matching in bipartite planar graphs~\cite{soda-18}. The algorithm of~\cite{soda-18} relies on the existence of a planar vertex separator of size $\BigO(\sqrt{n})$. A complete bipartite graph is far from planar and does not have any vertex separators. Despite this, we are able to adapt the approach of~\cite{soda-18} to our setting. We begin with a summary of their algorithm.  
 
\paragraph{Planar Bipartite Matching Algorithm:} The algorithm of~\cite{soda-18} is a primal-dual algorithm that iteratively adjusts the dual weights of the vertices to find an augmenting path containing zero `slack' edges and then augments the matching along this path. For a parameter $r >0$, their algorithm conducts an $\BigO(n\sqrt{r})$ time pre-processing step and computes a matching of size $n-\BigO(n/\sqrt{r})$. After this, their algorithm finds the remaining augmenting paths in sub-linear time by the use of an $r$-division: An $r$-division divides any  planar graph into $\BigO(n/r)$ edge-disjoint pieces, each of size $\BigO(r)$, with only $\BigO(n/\sqrt{r})$ many \emph{boundary vertices} that are shared between pieces. The algorithm then conducts a search for each augmenting path as follows:
\begin{itemize}
    \item Using an $r$-division of a planar bipartite graph $G(A\cup B, E)$, the algorithm constructs a compact residual graph $\tilde{G}$ with a set $\tilde{V}$ of $\BigO(n/\sqrt{r})$ vertices -- each boundary vertex of the $r$-division is explicitly added to this vertex set. In addition, the compact graph has $\BigO(r)$ edges per piece and $\BigO(n)$ edges in total. The algorithm assigns a dual weight for every vertex of $\tilde{V}$ that satisfies a set of dual feasibility constraints on the edges of $\tilde{G}$. Interestingly, given dual weights on $\tilde{V}$ that satisfy the \emph{compressed feasibility} conditions, one can derive dual weights for $A\cup B$ satisfying the classical dual feasibility conditions, and vice versa. Therefore, instead of conducting a search on $G$, their algorithm searches for an augmenting path in the compact residual graph $\tilde{G}$.
    \item Their algorithm builds, for each piece of $G$, a data structure in $\BigOT(r)$ time (see~\cite{fr_dijkstra_06}). This data structure stores the $\BigO(r)$ edges of $\tilde{G}$ belonging to the piece and using this data structure, the algorithm conducts a primal-dual search for an augmenting path  in $\BigOT(|\tilde{V}|) = \BigOT(n/\sqrt{r})$ time. Over $\BigO(n/\sqrt{r})$ augmenting path searches, the total time taken is bounded by $\BigOT(n^2/r)$.
\end{itemize}

\ignore{After a $\BigO(n\sqrt{r})$ time pre-processing step that computes a matching of size $n-\BigO(n/\sqrt{r})$, their algorithm builds a shortest path data structure of~\cite{klein_mssp_05} for the directed residual graph within each of the $\BigO(n/r)$ pieces in $\BigO(r)$ time each. Using these data structures, Dijkstra's algorithm can be executed in time near-linear in the number of boundary vertices. They also observed that fully up-to-date dual weights are only needed on the boundary vertices; the other dual weights can be updated in a lazy fashion. Combining these observations, a primal-dual search for an augmenting path consisting of zero slack edges can be executed in $\BigOT(n/\sqrt{r})$ time. Therefore, the total search time for the remaining $\BigO(n/\sqrt{r})$ augmenting paths is only $\BigOT(n^2/r)$.
}

Augmenting along a path reverses the direction of its edges in the residual graph. Therefore, their algorithm has to re-build the shortest path data structure for every \emph{affected piece}, a piece containing at least one edge of the augmenting path. This can be done in $\BigOT(r)$ time per piece. In order to reduce the number of affected pieces, an additive cost of $\sqrt{r}$ is introduced to every edge incident on the boundary vertices. It is then shown that the total additive cost across all augmenting paths found by the algorithm cannot exceed $\BigO(n\log n)$, implying that the number of affected pieces is at most $\BigO((n/\sqrt{r})\log n)$. The  time taken to re-build the data structure for the affected pieces is $\BigOT(n/\sqrt{r} \log n)\times\BigOT(r)=\BigOT(n\sqrt{r})$. By choosing $r=n^{2/3}$, they balance the search time with the re-build time, leading to an $\BigOT(n^{4/3})$ time algorithm. 
\ignore{
It is worth noting that there is a recent improvement to the $\BigOT(n^{4/3})$ time algorithm of Asathulla~\etal\ that runs in $\BigOT(n^{6/5})$ time~\cite{lr_soda19} for planar graphs. The improvement results from techniques that allow multiple augmenting paths to be found each iteration. Unfortunately, attempting to apply these same techniques to our setting seems to offer no additional benefit over the algorithm of Asathulla~\etal, at least, without new observations.
}

The successful application of a compact residual network as well as the additive cost of $\sqrt{r}$ on the edges relies on the existence of an $r$-division in planar graphs. 
In order to extend these techniques to the geometric setting, we build upon ideas from another matching algorithm, which produces an $\eps$-approximation  for the Euclidean bipartite matching problem~\cite{sa_stoc12}. We give a brief overview of this algorithm next.
 
\paragraph{Approximate Euclidean Matching:} The algorithm of~\cite{sa_stoc12} introduces an $\eps$-approximation of the Euclidean distance based on a quad-tree $Q$. The input is transformed so that the optimal matching cost is $\BigO(n/\eps)$ and the height of the quad-tree $Q$ is $\BigO(\log n)$.  Any edge of the complete bipartite graph \emph{appears} at the least common ancestor of its endpoints in $Q$. The set of edges appearing within each quadtree square is then partitioned into $\polys$ many \emph{bundles} and all edges within the same bundle are assigned the same cost. This assigned cost is an upper bound on the actual Euclidean cost. Furthermore, the authors show that, if the quad-tree is randomly shifted, the expected cost assigned to any edge is at most $(1+\eps)$ times the Euclidean distance. Using this, the authors switch to computing a matching with respect to this new quad-tree distance.

Using the edge bundles and certain carefully pre-computed shortest paths in the residual graph, the algorithm of~\cite{sa_stoc12} stores a $\polys$ size \emph{associated graph} at each square of the quad-tree. Their algorithm iteratively finds a minimum-cost augmenting path $P$. Note that this is not done by using a primal-dual method, but by executing a Bellman-Ford search on the associated graph of each square that contains at least one point on the path $P$. Since each point of $P$ has at most $\BigO(\log n)$ ancestors and the size of the associated graph is $\polys$ within each square, the total time taken to find an augmenting path can be bounded by $\BigOT(|P|)$.  Augmenting the matching along $P$ requires the associated graph to be reconstructed for the $\BigO(\log n)$ ancestors of each of the points of $P$. This again can be done using the Bellman-Ford algorithm, resulting in a total update time of $\BigOT(|P|)$. The total length of all the augmenting paths computed by the algorithm can be shown to be $\BigOT(n\log n)$, and so the total time taken by the algorithm is near-linear in $n$.

\paragraph{Our Algorithm:}
Similar to the Euclidean case, we can transform our input so that our optimal matching cost is $\BigO(n/\eps^2)$ (see Section \ref{subsec:transform}) and store the input in a quadtree $Q$ of height $\BigO(\log{n})$. 
For the squared Euclidean distance, we combine the ideas from the two algorithms of~\cite{soda-18} and~\cite{sa_stoc12} in a non-trivial fashion. First, we note that using $\BigO(\polys)$ edge bundles leads to an explosion in the expected distortion. In order to keep the expected distortion small, we create approximately $\BigOT(2^{i/2})$ edge bundles for a square of side-length $2^i$\footnote{Throughout this paper, we set the side-length of the square to be the difference in the x-coordinate values of the the vertical boundaries, i.e., the Euclidean length of each of its four edges.}. This causes larger squares have many more bundles of edges (See Section~\ref{sec:distance}). For instance, a square of side-length $n$ can have roughly  $\sqrt{n}$ edge bundles. A useful property of this distance approximation is that any edge appearing in a square of side-length $2^i$ has a quad-tree distance value roughly between $\Omega(2^i)$ and $\BigO(2^{2i})$. This implies that all edges with a small quad-tree distance appear within edge bundles of the smaller squares. Like in the Euclidean case, we can show that our distance is an upper bound on the squared Euclidean distance. Furthermore, if $Q$ is a randomly shifted quad-tree, we can show that the expected cost of our distance is at most $(1+\eps)$ times the squared Euclidean distance.

In the squared Euclidean quad-tree distance, the number of edge bundles at each square of the quad tree is a polynomial in $n$.  Using these bundles, we define a sub-linear sized associated graph. However, unlike the algorithm of~\cite{sa_stoc12}, using the Bellman-Ford search procedure to find an augmenting path in the associated graph will lead to an $\Omega(n^{3/2})$ time algorithm. Therefore, instead of the Bellman-Ford algorithm, we employ a primal-dual approach. 

Prior to describing our algorithm and the data structure, we note that primal-dual search procedures, such as Hungarian search and our algorithm, find augmenting paths in increasing order of their ``costs''. As a result, such a search on quad-tree distances will initially involve only the edges with small quadtree distance and, as the algorithm progresses, larger quad-tree distances get involved. Therefore, the searches can initially be localized to smaller squares of the quad-tree and our algorithm only needs to build the associated graphs in the smaller squares. As the algorithm progresses, however, longer edges participate in the augmenting paths, which forces our algorithm to build associated graph data structures in larger squares, increasing the time taken to conduct a Hungarian search. We refer to these squares where the data structure is maintained as \emph{active} squares.

Now we present an overview of our algorithm and the data structure within an active square $\cell^*$ of width $2^i$. We partition $\cell^*$ into $\BigO(2^{2i/3})$ \emph{pieces} using a grid of side-length $2^j=2^{\lfloor 2i/3 \rfloor}$. Each piece is further recursively divided into four squares. The entire hierarchical structure is stored within a carefully defined \emph{active tree}. We build an associated graph $\tilde{G}$ at each node of the active tree. For the first level of the active tree, we build the associated graph as follows: The vertex set $\tilde{V}$ contains $\BigOT(2^{i/3})$ vertices per piece and $\BigOT(2^i)$ vertices in total. For pairs of vertices $u,v$ that belong to the same piece, we explicitly store an edge; we refer to these edges as \emph{internal} edges. There are $\BigOT(2^{2i/3})$ internal edges per piece and the internal edges in each piece can be constructed in $\BigOT(2^i)$ time (see Sections \ref{subsec:associatedgraphvertices}--\ref{subsec:edgesofassociatedgraph}). Similar associated graphs are also constructed for every subsequent levels of the active tree. Similar to the approximate Euclidean matching algorithm, these internal edges of the associated graph represent certain shortest paths in the residual graph. Additionally, for any pair of vertices $u,v \in \tilde{V}$, we add a \emph{bridge edge} between them with a cost that is approximately the squared Euclidean distance between the end-points. We do not store the bridge edges explicitly. Instead, we build an $\eps$-Well Separated Pair Decomposition (WSPD) of size $\BigOT(2^i)$ to store them. Therefore, the total size of the graph is restricted to $\BigOT(2^i)$ vertices and $\BigOT(2^{4i/3})$ edges.

Next, we define dual weights on every vertex of the associated graph and define compressed feasibility conditions that are satisfied by its edges (see Section \ref{subsec:compressedfeasibility}). 
Recollect that for planar graphs, compressed feasibility conditions are defined only on a single global compressed residual graph. In our case, however, residual graph is represented in a compressed fashion via a hierarchical set of associated graphs defined on every node of the active tree. It is significantly more challenging to design compressed dual feasibility conditions that allows for a sub-linear time Hungarian search procedure on such a hierarchical structure. Interestingly, one can use the feasible dual weights on the associated graph vertices to derive a set of dual weights satisfying the classical matching feasibility conditions (see Section \ref{subsec:sync}). Using compressed feasibility, we provide a quick way to conduct primal-dual searches on the associated graph resulting in a running time of $\BigOT(2^{4i/3})$ per search (see Section \ref{subsec:hungariansearch}). We show that the number of primal-dual searches on the associated graph of any active square with side-length $2^i$ is only $\BigOT(n/2^i)$ (see Section \ref{subsec:analysisofalg}). Therefore, the total time spent for all searches within active squares of side-length $2^i$ is $\BigOT(n2^{i/3})$. 

Suppose the primal-dual search at $\cell^*$ returns an admissible augmenting path. The algorithm then augments the matching along this path. Augmentation forces the algorithm to rebuild the set of internal edges within every \emph{affected piece} of the associated graph at $\cell^*$, i.e., pieces that contain at least one edge of $P$. In order to reduce the number of such updates, similar to~\cite{soda-18}, we assign an additive cost of roughly $\frac{\eps^22^{2i/3}}{\log{n}}$ to every bridge edge of the associated graph. We argue that this additional error does not increase the optimal matching cost by more than a multiplicative factor of $\eps$.

To bound the time taken to rebuild the internal edges, similar to~\cite{soda-18}, we argue that the total additive cost of the edges on the augmenting paths, computed over the entire algorithm, cannot exceed $\BigOT(n)$ (see Section \ref{subsec:analysisofalg}). Every bridge edge of the associated graph $\tilde{G}$ has an error of at least $\frac{\eps^22^{2i/3}}{\log{n}}$. Therefore, the number of times such edges participate across all augmenting paths is only $\BigOT(\frac{n}{2^{2i/3}})$. As a result, the total number of rebuilds of internal edges, for pieces of all active squares of side-length $2^i$, across the entire algorithm, is $\BigOT(n/2^{2i/3})$. Rebuilding the internal edges of one piece takes $\BigOT(2^{i})$ time (see Section \ref{subsec:construct}). Therefore, the total time spent rebuilding pieces  is $\BigOT(n2^{i/3})$, which matches the total time taken for all searches on the associated graph for layer $i$ active squares.

As the algorithm progresses, larger squares become active. When the side-length of the active square is approximately $n^{3/4}$, the time taken to execute a single search on the associated graph  becomes $\Omega(n)$. At this point, we show that there are only $\BigOT(n^{1/4})$ free vertices remaining. Each remaining free vertex can be matched by conducting an efficient Hungarian search on the original points in $\BigOT(n)$, taking $\BigOT(n^{5/4})$ time in total. The total time spent on searches and rebuilds on active squares with side-length at most $2^{(3/4)\log_2n} = n^{3/4}$ using our data structure is $\BigOT(n2^{(1/4)\log_2 n})=\BigOT(n^{5/4})$, giving a total running time of $\BigOT(n^{5/4})$.

\paragraph{Comparison with~\cite{lr_soda19}:} Following the work of Asathulla~\etal~\cite{soda-18}, using the same framework, Lahn and Raghvendra presented a faster $\BigOT(n^{6/5})$ algorithm to compute a minimum-cost perfect matching in planar graphs. Their main idea was to carefully compute multiple augmenting paths in one scan of the graph, leading to a faster convergence to the optimal matching. We would like to note that any augmenting path found in our algorithm is localized within an active square. Therefore, our algorithm identifies one augmenting path in a single access to an active square and many augmenting paths in a single access to the entire graph (spanning all the active squares). Unlike in the case of planar graphs, employing the approach of Lahn and Raghvendra~\cite{lr_soda19} in our setting does not lead to any additional advantage in terms of the convergence to a perfect matching. 

\paragraph{Extensions and Open Problems:}    
Achieving a near-linear execution time in the two-dimensional case and $o(n^{3/2})$ time algorithms for $d$-dimensions remain important open questions. Our approach can achieve this goal provided we overcome the following difficulty:  Currently, Hungarian search runs in time linear in the number of internal edges. In planar graphs, although the compressed residual graph has $n$ edges, one can use a shortest-path data structure by Fakcharoenphol and Rao~\cite{fr_dijkstra_06} to execute each Hungarian search in $\BigOT(|\tilde{V}|)= \BigOT(n^{2/3})$ time. Design of a similar data structure that conducts Hungarian search on associated graph in time $\BigOT(|\tilde{V}|)$ will lead to a near-linear time $\eps$-approximation algorithm for RMS matching in two-dimensions and an $o(n^{3/2})$ time algorithm in higher dimensions. 

\paragraph{Organization:} The remainder of the paper is organized as follows: In Section \ref{sec:distance} we describe the details of our distance function while highlighting differences from the distance function of~\cite{sa_stoc12}. In Section \ref{sec:dualfeas} we introduce a quad-tree based dual-feasibility condition that incorporates an additional additive cost on each edge. In Section \ref{sec:algorithm}, we give a detailed description of the algorithm, along with its analysis. The algorithm description assumes the existence of a data structure built on active squares. This data structure includes the compressed feasible matching as well as several procedures, such as the sub-linear time Hungarian search and augment that operate on a compressed feasible matching, and is described in detail in Section \ref{sec:ds}. 
\ignore{
Prior to describing the novelty of our data structure, we note that primal-dual search procedures such as the Hungarian Search finds augmenting paths in increasing order of their ``costs''. As a result, such a search on quad-tree distances will initially involve only the edges with small quadtree distance and as the algorithm progresses, larger quad-tree distances get involved. Initially, therefore, the searches can be localized to smaller squares of the quad-tree and so, we build our associated graph only at the smaller squares.   As the algorithm progresses, however, longer edges participate in the augmenting paths, which forces us to build a data structure for larger squares, increasing the time taken to conduct a Hungarian search. We refer to these squares for which a data structure is maintained as an \emph{active} square.

Despite these observations, a major impediment in the design of a sub-linear time primal-dual search algorithm is that this search can cause an update of the dual weights of $\Omega(n)$ vertices. This   

\begin{itemize}
    \item[(1)] Using the quad-tree distance we present a data structure that stores the matching along with the dual weights of all the $n$ points compactly.  Using these dual weights, we can find augmenting paths using Hungarian search as opposed to the Bellman-Ford method. Note that Hungarian search may update the dual weight of every vertex of the graph in the worst-case. Using a lazy update of dual weights, our search for an augmenting path can be conducted in sub-linear time.
    \item [(2)] Note that augmentation causes the \emph{affected cells} of the quadtree to rebuild their data structure. As larger cells have more edge-bundles, we need more time to rebuild their data structures. We observe that, due to the quad-tree being randomly shifted, only $\BigO(n/\ell)$ edges of an appropriately scaled optimal matching (with cost $\approx n/\eps^2$) cross the boundary of some quadtree cell with side-length $\ell$. Therefore, we can introduce an additive error of roughly $\eps\ell$ on every edge that crosses the boundary of any cell with side-length $\ell$. This additional additive error will increase the cost of the optimal matching by at most $\eps$ times the optimal cost.
    Furthermore, we can show that the total error across all edges of the augmenting paths is bounded by $\BigO(n \log n)$ implying that the squares of side-length $\ell$ are affected only $\BigOT(n/\ell)$ times. This implies that data structures at higher levels of the quadtree need to be rebuilt less often than data structures at lower levels of the quadtree. 
\end{itemize}
}

\section{Our Distance Function}
\label{sec:distance}
\subsection{Initial Input Transformation}
\label{subsec:transform}
For the purposes of describing both the distance function and our algorithm, it is useful to make some assumptions about the input to our problem. Given any point sets $A', B'\subset \mathbb{R}^2$ of $n$ points, we generate point sets $A$ and $B$ with $n$ points such that each point of $A'$ (resp. $B'$) maps to a unique point of $A$ (resp. $B$) and:
\begin{itemize}
     \item[(A1)] Every point in $A \cup B$ has non-negative integer coordinates bounded by $\Delta =n^{\BigO(1)}$, 
    \item[(A2)] No pair of points $a,b$ where $a \in A$ and $b \in B$ are co-located, i.e., $\|a-b\| \ge 1$,
    \item[(A3)] The optimal matching of $A$ and $B$ has a cost of at most $\BigO(n/\eps^2)$, and,
    \item[(A4)] Any $\eps$-approximate matching of $A$ and $B$ corresponds to an $3\eps$-approximate matching of $A'$ and $B'$.
\end{itemize}
The details of this transformation are described in Section~\ref{sec:transform}, but the approach can be summarized as follows: First, we obtain an $n^{O(1)}$-approximation of the optimal matching in linear-time. We further refine this estimate by making $\BigO(\log{n})$ guesses of the optimal cost, and at least one guess gives a $2$-approximation. By executing our algorithm $\BigO(\log{n})$ times, one for each guess, at least one algorithm will have a $2$-approximation of the optimal matching cost. Using this refined estimate, we rescale the points such that the optimal cost becomes $\BigO(n/\eps^2)$. Finally, we show that rounding the resulting points to integers, such that no point of $A$ is co-located with a point of $B$, does not contribute too much error. As a result, in the rest of the paper, we assume that properties (A1)--(A4) hold. Given these assumptions, we can proceed with defining our distance function. Next, we describe a quad-tree based distance denoted by $d_Q(\cdot,\cdot)$ that approximates the squared Euclidean distances between points.

\subsection{Randomly Shifted Quadtree Decomposition}
Similar to~\cite{sa_stoc12}, we define our distance function based on a randomly-shifted quadtree. Without loss of generality, assume  $\Delta$ is a power of $2$.  First, we pick a pair of integers $\langle x,y \rangle$ each independently and uniformly at random from the interval $[0,\Delta]$. We define a square $G=[0,2\Delta]^2 - \langle x,y \rangle$ that contains all points of $A \cup B$. This square will form the root node of our quadtree, and each internal node of the tree is subdivided into $4$ equal-sized squares to form its children in the tree.

Specifically, for $\delta=\log_2(2\Delta)$ and a constant $c_1> 0$, we construct a quadtree $Q$ of height $\delta + 2\log(\log(\Delta)/\eps) + c_1 = \BigO(\log n)$ (from (A1)). The layers of $Q$ can be seen as a sequence of grids $\langle G_\delta,\ldots, G_0, \ldots,$\  $ G_{-2\log (\log(\Delta)/\eps)-c_1} \rangle$. The grid $G_i$ is associated with squares with side-length $2^i$ and the grid of leaf nodes $G_{-2\log (\log(\Delta)/\eps) -c_1}$ is associated with cells of width $1/2^{2\log (\log(\Delta)/\eps) + c_1}$. Although, cells of grid $G_0$ contain at most one point (or possibly multiple copies of the same point) and can be considered leaf nodes of the quadtree, it is notationally convenient to allow for us to define grids $G_i$ for all $i \geq -2\log(\log(\Delta)/\eps)-c_1$ and consider their cells to be part of the quadtree. Specifically, the additional levels help facilitate a cleaner definition of subcells (see Section~\ref{subsec:division-subcells}).  We say that a square $\square$ has a \emph{level} $i$ if $\square$ is a cell in grid $G_i$.  For any two cells $\square$ and $\square'$ , let $\ell_{\min}(\square,\square')$ (resp. $\ell_{\max}(\square,\square'))$ be the minimum (resp. maximum) distance between the boundaries of $\square$ and $\square'$, i.e., the minimum distance between any two points $u$ and $v$ where $u$ is on the boundary of $\square$ and $v$ is on the boundary of $\square'$. Next, we describe how any cell of this quadtree that has a level greater than or equal to $0$ can be divided into subcells, a concept essential to describe our distance function.

\subsubsection{Division of a Cell Into Subcells}
\label{subsec:division-subcells}
For any grid $G_i$ with $i \geq 0$, we define the \emph{minimum subcell size} to be 
$\mu_i = 2^{\lfloor i/2\rfloor - 2\log \frac{\log \Delta}{\eps} - c_1}$,
where $c_1 > 0$ is the constant used in the construction of the quad-tree. Each cell $\cell \in G_i$ is subdivided into a set of subcells, with each subcell having width at least $\mu_i$. In~\cite{sa_stoc12}, the minimum subcell size was much larger, being roughly $2^{i - \BigO(\log{\frac{\log{\Delta}}{\eps}})}$. In their case, dividing $\cell$ into subcells using a uniform grid of side-length $\mu_i$ was sufficient, resulting in $\BigO((2^i/\mu_i)^2) = \polys$ subcells. However, for squared Euclidean distances, much smaller subcells are required, and using a uniform grid would result in $\Omega(2^i)$ subcells, which is too large for our purposes. Instead, we replace the uniform grid of subcells with an \emph{exponential} grid of subcells, reducing the number of subcells to $\BigOT(2^i/\mu_i) = \BigOT(2^{i/2})$. We describe this process of forming the exponential grid next. For a visual example of the exponential grid, see Figure \ref{fig:subcells}.

For any cell $\square$ of $Q$ with a level $i \geq 0$, let $\square_1, \square_2, \square_3$ and $\square_4$ be its four children. We define \emph{subcells} of any cell $\cell$ as the leaf nodes of another quadtree $Q_{\square}$ with $\square$ as its root and its four children recursively sub-divided in $Q_\square$ as follows.   Let $u \leftarrow \square_1$, we recursively divide $u$ into four cells until:
\begin{itemize}
\item[(a)]   Either the side-length of $u$ is the minimum subcell size $\mu_i$, or
\item[(b)]  The side-length of $u$ is at most $(\eps/144) \ell_{\min}(\square_1,u)$.
\end{itemize}
Similarly, we decompose $\square_2, \square_3$ and $\square_4$ into subcells as well. Note that every cell of the quadtree $Q_{\square}$ is also a cell in the quadtree $Q$ and the leaves of $Q_\square$ (the \emph{subcells} of $\cell$) will satisfy (a) or (b).  We denote the subcells of $\square$ by $\subcells{\square}$. Note that, for any subcell $u \in \subcells{\square}$ where $u$ is a descendant of $\square_1$, the side-length of $u$ is larger than the minimum subcell size if and only if $\ell_{\min}(\square_1,u)$ is sufficiently large. i.e., as we move away from the boundary of $\square_1$, the subcell size becomes larger. Using this, in Lemma~\ref{lem:subcellcount}, we show that the total number of subcells for any cell $\square \in G_i$ is $\BigOT(\mu_i)$. For brevity, the proof of Lemma~\ref{lem:subcellcount} is included in Section~\ref{sec:ommittedproofs}, but the argument can be seen intuitively from the fact that the outermost ring of subcells along the boundary of $\cell_1$ has size $\BigOT(\mu_i)$. Furthermore, subcells increase in size as we move towards the center of $\cell_1$, implying that their count decreases geometrically.

\begin{lemma}
\label{lem:subcellcount}
For any cell $\cell$ of $Q$ with level $i$, the total number of subcells is $\BigOT(\mu_i)$. 
\end{lemma}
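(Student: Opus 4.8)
Since the root of $Q_\square$ is $\square$ and each of its four children $\square_1,\square_2,\square_3,\square_4$ is subdivided by the identical rule (relative to its own boundary), the four subtrees are isomorphic, so it suffices to count the leaves of $Q_\square$ that are descendants of $\square_1$ and multiply by $4$. Write $L := 2^{i-1}$ for the side-length of $\square_1$. Every such leaf is a dyadic sub-square of $\square_1$ of side-length $s$ for some $s\in\{\mu_i,2\mu_i,4\mu_i,\dots\}$ with $\mu_i\le s\le L/2$: the lower bound holds because rule (a) stops the recursion exactly at side $\mu_i$ and $\square_1$ has side $L\ge\mu_i$ for $i\ge 0$ (since $c_1$ is a large constant), and the upper bound because $\ell_{\min}(\square_1,\square_1)=0$ forces rule (b) to fail at the root, so $\square_1$ is always subdivided at least once. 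The plan is to bound, for each fixed $s$, the number $N_s$ of leaves of side exactly $s$, and then sum the $N_s$ over the $\BigO(\log n)$ values of $s$.

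For the per-$s$ bound I would first record the elementary geometric fact that for a sub-square $X\subseteq\square_1$, the quantity $\ell_{\min}(\square_1,X)$ is simply the distance from $\bar X$ to $\partial\square_1$; hence it is monotone under containment, and it changes by at most $\mathrm{diam}(X')=2\sqrt2\,s$ when we pass from a leaf $u$ of side $s$ to its parent $X'$ of side $2s$. Now fix a leaf $u$ of side $s$. If $s>\mu_i$, then $u$ is a leaf by rule (b), so $s\le(\eps/144)\,\ell_{\min}(\square_1,u)$. In every case the parent of $u$ is not a leaf, so it fails rule (b), i.e. $2s>(\eps/144)\,\ell_{\min}(\square_1,\mathrm{parent}(u))$; combining these with the stability estimate gives, for $s>\mu_i$,
\[
\frac{144\,s}{\eps}\;\le\;\ell_{\min}(\square_1,u)\;\le\;\frac{288\,s}{\eps}+2\sqrt{2}\,s\;\le\;\frac{290\,s}{\eps},
\]
and $\ell_{\min}(\square_1,u)\le 290\mu_i/\eps$ when $s=\mu_i$. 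Consequently a side-$s$ leaf lies inside the square annulus
\[
F_s\;=\;\Bigl\{\,p\in\square_1 \;:\; \tfrac{144 s}{\eps}\le \mathrm{dist}(p,\partial\square_1)\le \tfrac{291 s}{\eps}\,\Bigr\}
\]
(for $\eps\le 1$; when $s=\mu_i$ the inner radius is replaced by $0$), which is the region between two concentric squares of side $\le L$ whose side-lengths differ by $\BigO(s/\eps)$, so $\mathrm{area}(F_s)=\BigO(Ls/\eps)$. Since side-$s$ leaves are interior-disjoint squares of area $s^2$ contained in $F_s$, we get $N_s=\BigO(L/(\eps s))$, and the analogous bound $\BigO(L/(\eps\mu_i))$ for the number of side-$\mu_i$ leaves.

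Summing the (geometric) series, $\sum_s N_s=\BigO\!\bigl(\sum_{s\ge\mu_i}L/(\eps s)\bigr)=\BigO(L/(\eps\mu_i))$, so it remains to check $L/(\eps\mu_i)=\BigOT(\mu_i)$, i.e. $L/(\eps\mu_i^2)=\poly\{\log n,1/\eps\}$. Substituting $L=2^{i-1}$ and $\mu_i=2^{\lfloor i/2\rfloor-2\log(\log\Delta/\eps)-c_1}$,
\[
\frac{L}{\eps\,\mu_i^2}\;=\;2^{\,i-1-2\lfloor i/2\rfloor}\cdot\frac{(\log\Delta)^4\,2^{2c_1}}{\eps^5},
\]
and since $i-1-2\lfloor i/2\rfloor\in\{-1,0\}$ this is $\BigO((\log\Delta)^4/\eps^5)=\poly\{\log n,1/\eps\}$ (using $\Delta=n^{\BigO(1)}$ and $c_1=\BigO(1)$). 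Multiplying by the four children completes the argument.

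The main obstacle, such as it is, is purely the geometric bookkeeping: getting the identity "$\ell_{\min}(\square_1,\cdot)=$ distance to $\partial\square_1$" and the $\BigO(s)$-stability under passing to the parent stated cleanly (this is where one uses that every descendant of $\square_1$ in $Q_\square$ is a genuine sub-square of $\square_1$), and handling the degenerate small-$i$ regime where $\mu_i<1$ or $F_s$ "wraps around" all of $\square_1$. In that regime one falls back on $\mathrm{area}(F_s)\le L^2$ together with the observation that $F_s\neq\emptyset$ forces $s=\Theta(\eps L)$, whence $N_s=\BigO(1/\eps^2)=\BigOT(\mu_i)$ (note $\mu_i\cdot(\log\Delta/\eps)^2 2^{c_1}=2^{\lfloor i/2\rfloor}\ge 1$, so the $\poly$-factor hidden in $\BigOT(\mu_i)$ comfortably absorbs all the constants). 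Everything else is routine dyadic summation.
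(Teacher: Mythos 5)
Your proof is correct and takes essentially the same route as the paper's: both arguments pack the subcells into exponentially scaled annuli around the boundary (using rule (b) to tie subcell size to distance from the boundary) and sum the resulting geometric series to get $\BigO(L/(\eps\mu_i)) = \BigOT(\mu_i)$. The only cosmetic difference is that you index the annuli by subcell side-length $s$, placing all side-$s$ leaves at distance $\Theta(s/\eps)$ from $\partial\square_1$, whereas the paper indexes by distance from the boundary and lower-bounds the subcell size within each annulus; your version is, if anything, slightly more careful in measuring distance to the child's boundary and in handling the extreme sizes.
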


For some edge $(a,b) \in A \times B$, let $\square$ be the least common ancestor of $a$ and $b$ in $Q$. Suppose that $\cell \in G_i$; then we say that the edge $(a,b)$ \emph{appears at} level $i$. Note that, from (A2), all edges of $A \times B$ appear at or above level $1$. The quadtree distance between $a$ and $b$ defined in~\cite{sa_stoc12} is given by the distance between the subcells $\xi_a$ and $\xi_b$ of $\subcells{\square}$ that contain $a$ and $b$ respectively.  As a result, the set of edges that appear at layer $i$ can be represented using pairs of subcells from the set $\bigcup_{\cell' \in G_i} \subcells{\cell'}$. However, the use of all pairs of subcells is prohibitively expensive. We further reduce the number of pairs of subcells by grouping them into a Well-Separated Pair Decomposition which we describe next.

\subsubsection{Well-Separated Pair Decompositions}
In this section, we extend Well-Separated Pair Decomposition (WSPD) that is commonly defined for points to approximate distances between pairs of subcells. 
A Well-Separated Pair Decomposition (WSPD) is a commonly used tool that, given a set $P$ of $n$ points, compactly approximate all $\BigO(n^2)$ distances between points of $P$ by using a sparse set  $\mathcal{W}$ of only $\BigOT(n)$ well-separated pairs. Each pair $(S,T) \in \mathcal{W}$ consists of two subsets $S,T \subseteq P$ of points. For any pair of points $(u,v) \in P \times P$, there is a unique pair $(S,T) \in \mathcal{W}$ such that $(u,v) \in S \times T$. For each pair $(S,T)$, an arbitrary pair of representatives $s \in S$ and $t \in T$ can be chosen, and the distance between any pair $(s',t') \in S \times T$ can be approximated using the distance between the representatives $s$ and $t$. This approximation will be of good quality so long as the pair $(S,T)$ is well-separated, meaning the distance between any pair of points \emph{within} $S$ or \emph{within} $T$ is sufficiently small compared to the distance between any pair of points \emph{between} $S$ and $T$. 

For any parameter $\eps > 0$, using the construction algorithm  of~\cite{har2011geometric}, it is possible to build in $\BigOT(n\polys)$ time a WSPD of the edges of $A \times B$ where the costs of the edges belonging to any pair in the decomposition are within a factor of $(1+\eps)$ of each other. Furthermore, if the ratio of the largest edge to smallest edge cost is bounded by $n^{\BigO(1)}$, then it can be shown that every point participates in only $\polys$ pairs. Such a WSPD can be used to execute a single Hungarian search in near-linear time in the number of points. However, in order to execute a Hungarian search in sub-linear time, we must build a WSPD on the sub-linear number of subcells instead of the original points. Luckily, the algorithm of~\cite{har2011geometric} can be applied in a straightforward fashion to generate a WSPD on subcells. Next, we describe the properties of our WSPD on subcells.

\begin{figure}
    \centering
    \includegraphics[width=\textwidth,height=\textheight,keepaspectratio]{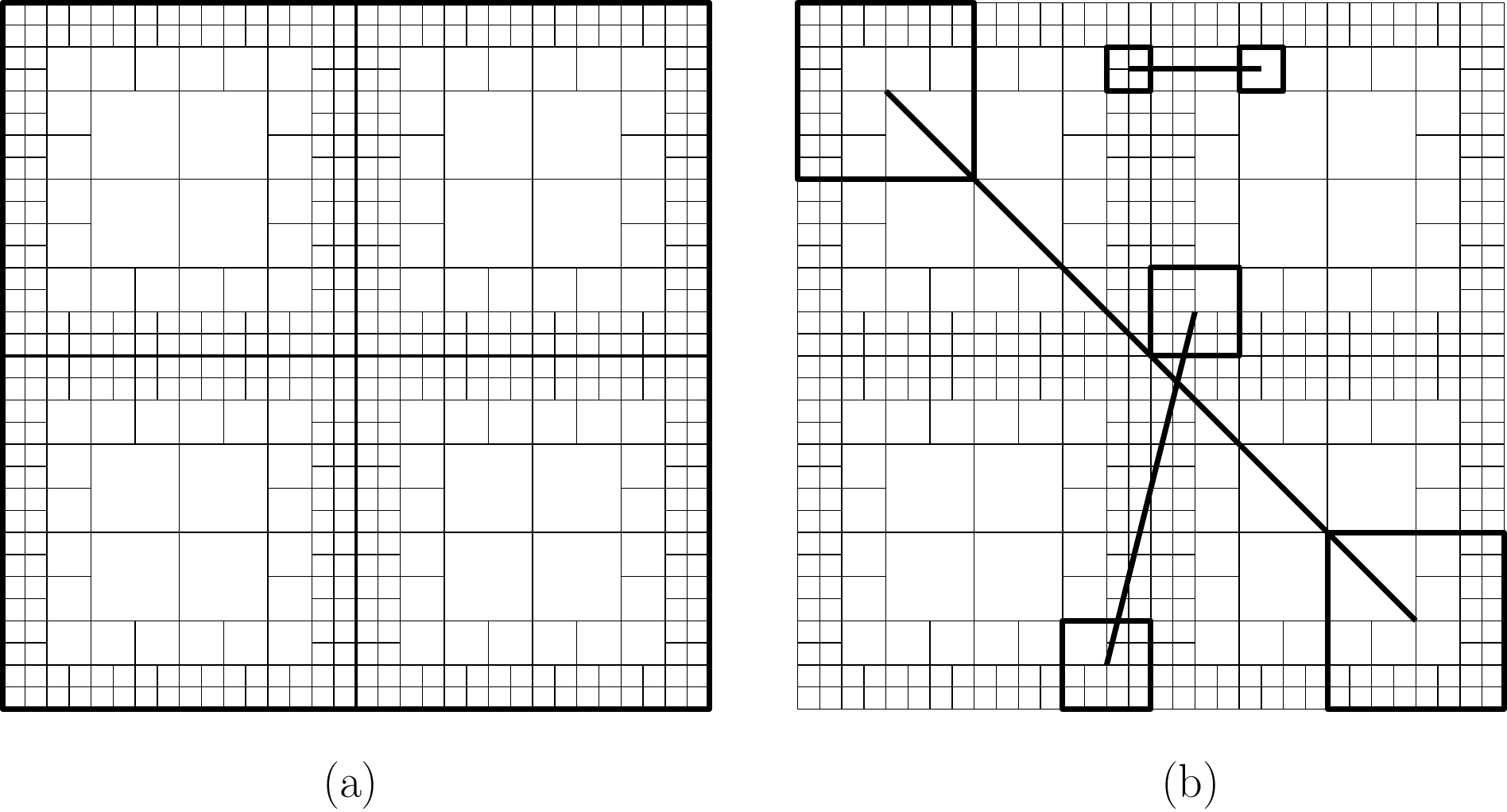}
    \caption{(a) A division of $\cell$ into subcells. (b) Examples of a few possible WSPD pairs of $\wspd_{\cell}$. Every pair of subcells in different children of $\cell$ would be represented by some pair.}
    \label{fig:subcells}
\end{figure}

For any level $i$ cell $\square$ of $Q$, consider two subsets of subcells, $S \subseteq \subcells{\square}$ and $T \subseteq \subcells{\square}$. We define $\ell_{\max}(S,T)= \max_{\xi \in S, \xi' \in T} \ell_{\max}(\xi,\xi')$. We say that $S$ and $T$ are $\eps$-well separated if, for every pair of subcells $\xi \in S$ and $ \xi'\in T$,  
\begin{equation}
\label{eq:wspdguarantee}
 \ell_{\max}(S, T) \le (1+\eps/12)\ell_{\max}(\xi,\xi').    
\end{equation}
For each cell $\square$ let $\square_1,\square_2,\square_3$ and $\square_4$ be its four children. We precompute a WSPD $\wspd_\cell=\{(S_1, T_1),$
$ \ldots, (S_r, T_r)\}$, where  $S_i \subseteq \subcells{\square}$, $T_i \subseteq \subcells{\square}$ and $S_i, T_i$ are $\eps$-well separated. Furthermore, for every pair of subcells $(\xi_1,\xi_2) \in \subcells{\square}\times \subcells{\square}$ (resp.$(\xi_2,\xi_1) \in \subcells{\square}\times \subcells{\square}$)   where $\xi_1$ and $\xi_2$ are in two different children of $\square$, there is a unique \emph{ordered} pair in $(X,Y) \in \wspd_{\square}$ (resp. $(Y,X) \in \wspd_{\square}$) such that $\xi_1 \in X$ and $\xi_2 \in Y$. We denote the ordered pair $(X,Y) \in \wspd_{\square}$ that the pair of sub-cells $(\xi_1,\xi_2)$ maps to as $(S_{\xi_1}, T_{\xi_2})$. For notational convenience, we prefer that the pairs within the WSPD are ordered.    Such an $\eps$-WSPD can be constructed by executing a standard quadtree based construction algorithm presented in~\cite{har2011geometric}. This algorithm uses the subtree of $Q$ rooted at $\square$ to build the WSPD. Since we are interested in $\xi_1$ and $\xi_2$ that are contained inside two different children of $\square$, we can trivially modify the algorithm of~\cite{har2011geometric} to guarantee that every pair $(S_i,T_i)$ in the WSPD is such that the subcells of $S_i$ and the subcells of $T_i$ are contained in two different children of $\square$. See Figure \ref{fig:subcells} for examples of WSPD pairs in $\wspd_\cell$. Finally, the algorithm of~\cite{har2011geometric} naturally generates unordered pairs. To ensure that every pair of subcells is covered by an ordered pair in the WSPD, for every pair $(X,Y) \in \wspd_{\square}$ generated by the algorithm, we add  $(Y,X)$ to $\wspd_{\square}$. 

Next, we define terms that will be helpful in describing our data structure in Section \ref{sec:ds}. Any point $p \in A\cup B$ is contained inside one cell of each of the grids $G_i$ in $Q$. Let $\cell=\cell_p^i$ be the cell of $G_i$ that contains $p$. Let  $\xi_{p}^\cell \in \subcells{\cell}$ be the subcell that contains $p$. As a property of the WSPD construction algorithm, the decomposition $\wspd_{\cell}$ ensures $\xi_p^{\cell}$ participates in $\BigOT(1)$ WSPD pairs of $\wspd_{\cell}$. Let this set be denoted by $N^i(p)$. All edges of level $i$ incident on $p$ are represented by exactly one pair in $N^i(p)$. Since there are $\BigO(\log n)$ levels, every edge incident on $p$ is represented by $\BigOT(1)$ WSPD pairs. We refer to these WSPD pairs as $N^*(p) = \bigcup_{i }N^i(p)$. 
 
 We can have a similar set of definitions for a subcell $\xi$ instead of a point $p$. Consider any cell $\cell \in G_i$ and a subcell $\xi \in \subcells{\cell}$. Using a similar argument, we conclude that all edges of level $i$ incident on any vertex of $(A\cup B)\cap \xi$ are uniquely represented by $\BigOT(1)$ WSPD pairs denoted by $N^i(\xi)$. Furthermore, all edges of level $\ge i$ are uniquely represented by $N^*(\xi) = \bigcup_{j\ge i} N^j(\xi)$. Note that $|N^*(\xi)|=\BigOT(1)$. 

\subsubsection{Distance Function} Given the definitions of subcells and the WSPDs, we can finally define the distance function. For $p,q \in A \cup B$, let $\square$ be the least common ancestor of $p$ and $q$ in $Q$ and let $i$ be the level of $\square$. We denote the \emph{level} of the edge $(p,q)$ to be the level of the least common ancestor of its end points, i.e., the level of $\square$. For some edge $(p,q)$ with least common ancestor $\cell$, let $\xi_p$ and $\xi_q$ be subcells from $\subcells{\square}$ that contain $p$ and $q$ respectively. Note that $\xi_p$ and $\xi_q$ are contained inside two different children of $\square$. There is a unique ordered \emph{representative pair} $(\Psi_p,\Psi_q)\in \mathcal{W}_{\cell}$ with $\xi_p \in \Psi_p$ and $\xi_q \in \Psi_q$.  We set the distance between $p$ and $q$ to be 
$$d_Q(p,q) = (\ell_{\max}(\Psi_p,\Psi_q))^2.$$
From the properties of our WSPD, if the unique representative pair of $(p,q)$ is $(X,Y)$, then the representative pair for $(q,p)$ will be $(Y,X)$, implying that our distance $d_Q(\cdot,\cdot)$ is symmetric.
For any subset $E\subseteq A\times B$ of edges, we define its cost by $d_Q(E) = \sum_{(a,b) \in E} d_Q(a,b)$. 
Since $p \in \xi_p, q \in \xi_q$ and $\xi_p\in \Psi_p$, $\xi_q \in \Psi_q$, we have
\begin{equation}
\label{eq:distlb}
    \distsq{p}{q} \le (\ell_{\max}(\xi_p,\xi_q))^2 \le (\ell_{\max}(\Psi_p,\Psi_q))^2 = d_Q(p,q).
\end{equation}
Furthermore, it can be shown that if $Q$ is a randomly shifted quad-tree, any optimal matching $\Mopt$ with respect to the original squared Euclidean costs satisfies
\begin{equation}
    \expect{d_Q(M_{\opt})} \leq (1+\varepsilon)\cdot\sum_{(a,b) \in \Mopt} \distsq{p}{q}. \label{eq:approxmatch}
\end{equation}
As noted before, we introduce an additional additive cost to all the edges. This additive cost on the edges is crucial in minimizing the number of data structure updates. Instead of proving~\eqref{eq:approxmatch}, in Section~\ref{sec:dualfeas}, we introduce this additional additive cost as part of the dual feasibility conditions.  We show that, to compute an $\eps$-approximate RMS matching, it suffices to compute a feasible perfect matching. 

\ignore{
\paragraph{Note on multisets:} Note that, for two points $a \in A$, $b\in B$ that share the same location, we have defined $d_Q(a,b)=0$. In order to simplify the notations in the rest of this paper, it is useful to extend some of the definitions above to points that have the same location. In particular, for any such $a, b$, let $\cell \in G_0$ be the node that contains $a$ and $b$. 
\begin{itemize}
    \item We define $\cell$ as their least common ancestor,
    \item Let $\xi \in \subcells{\cell}$ be the subcell that contains $a$ and $b$. We set $\xi_a = \xi_b = \xi$. 
    \item We set the minimum subcell size $\mu_0=0$ and also define $\ell_{\max}(\xi,\xi)=\ell_{\min}(\xi,\xi)$ to be  $0$.
    \item Finally, we create exactly one pair $\{(\xi,\xi)\}$ in $\wspd_{\cell}$.
\end{itemize}
It is straight-forward to verify that the useful properties of the WSPD established earlier in this section will also hold for $\wspd_{\cell}$.
}
\section{Dual Feasibility Conditions} 
\label{sec:dualfeas}
 In this section, we introduce a new set of feasibility conditions based on the randomly shifted quadtree. These feasibility conditions will allow our algorithm to find minimum-cost augmenting paths more efficiently. In order to describe this distance function, we partition the edges into a set of local edges and a set of non-local as described next. A similar definition of local and non-local edges was used in~\cite{sa_stoc12}.
 
\paragraph{Local and Non Local edges:} For any two matching edges $(a,b)\in M$ and $(a',b')\in M$, we say that they belong to the same equivalence class if and only if they have the same least common ancestor $\square$ and their ordered representative pairs in $\wspd_{\square}$ are the same, i.e., $(\Psi_a, \Psi_b)=(\Psi_{a'},\Psi_{b'})$.    Let $\K_M = \{M_1, ..., M_h\}$ be the resulting partition of matching edges into classes. For each $M_k$ for $1 \leq k \leq h$, let $A_k = \bigcup_{(a_j, b_j) \in M_k} a_j$ and $B_k = \bigcup_{(a_j, b_j) \in M_k} b_j$. The set $\{A_1,...,A_h\}$ partitions the matched vertices of $A$ and $\{B_1,...,B_h\}$ partitions the matched vertices of $B$.  For any edge $(a,b) \in A \times B$, we say $(a,b)$ is \textit{local} if $(a,b) \in A_k \times B_k$ for some $1 \leq k \leq h$. All other edges are \textit{non-local}. We refer to the local edges (both non-matching and matching) of $A_k \times B_k$ as \textit{class} $k$. 
\begin{figure}
    
    \centering
    \includegraphics[width=2\textwidth/3,height=\textheight,keepaspectratio]{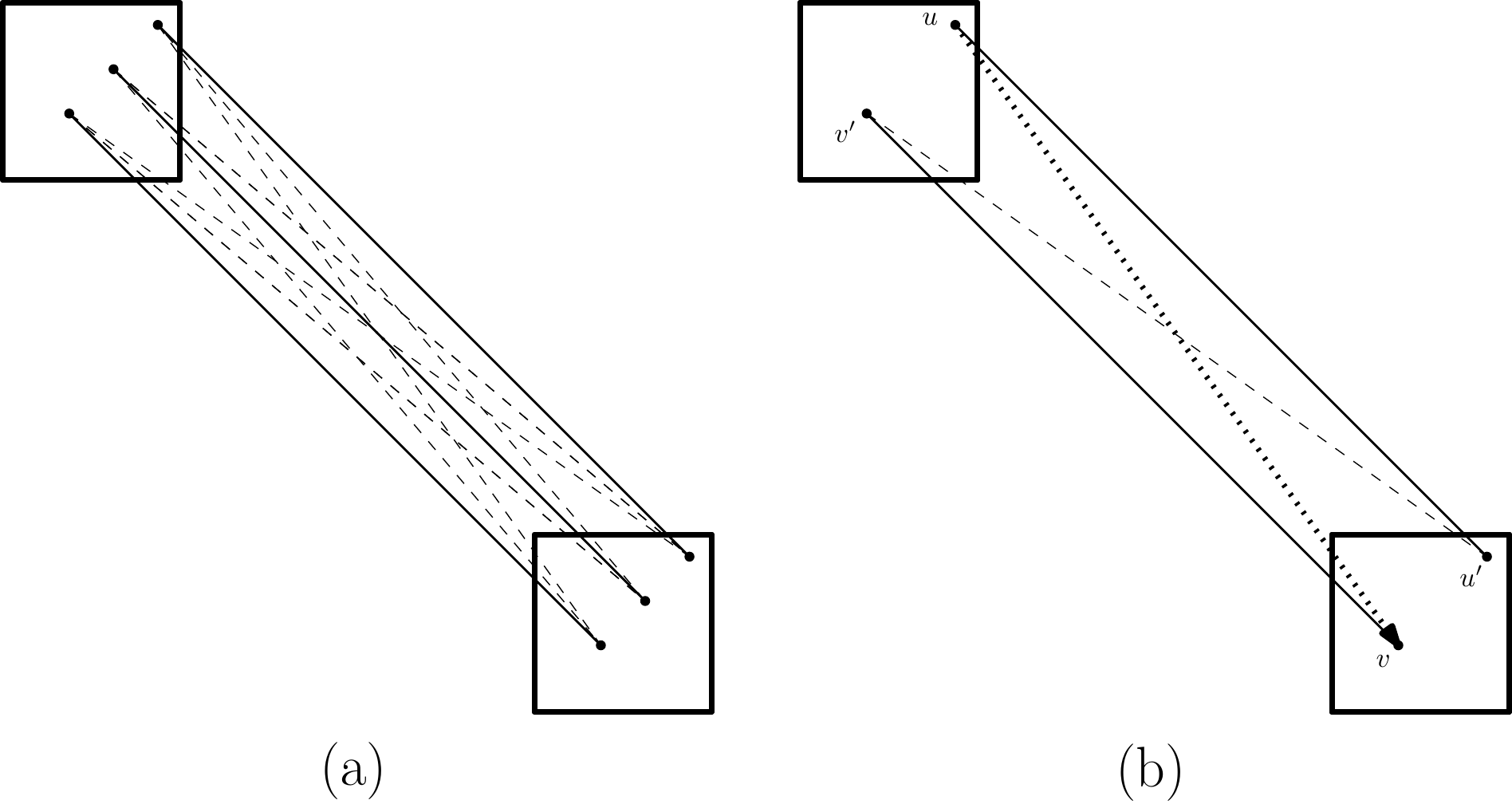}
    \caption{(a) A set of local edges between a WSPD pair of cells. Solid edges are in the matching, and dashed edges are not. (b) A local non-matching edge from $u \in B$ to $v \in A$ implies the existence of a length $3$ alternating path $P=\langle u,u',v',v\rangle$ with net-cost $\phi(P) = d_Q(u,v)$.}
    \label{fig:local-nonlocal}
\end{figure}

\ignore{For each $M_k$ for $1 \leq k \leq h$, let $\alpha_k = \bigcup_{(a_j, b_j) \in M_k} a_j$ and $\beta_k = \bigcup_{(a_j, b_j) \in M_k} b_j$. The set $\{\alpha_1,...,\alpha_h\}$ partitions the matched vertices of $A$ and $\{\beta_1,...,\beta_h\}$ partitions the matched vertices of $B$.  We  create a \emph{compact} graph with vertex sets $\mathbb{A}=\{\alpha_1,\ldots, \alpha_h\}\cup \bigcup_{a\in A_F} \{a\}$ and $\mathbb{B}=\{\beta_1,\ldots, \beta_h\}\cup \bigcup_{b \in B_F}\{b\}$. For any edge $(\alpha,\beta) \in \mathbb{A}\times\mathbb{B}$, we set its cost to be $d_Q(\alpha,\beta)=\min_{a \in \alpha, b\in \beta}d_Q(a,b)$. We then define a \emph{compact matching} $\mathbb{M}$ to be the edges $(\alpha_j,\beta_j)$ for all $1\le j\le h$ with a cost $d_Q(\alpha_j,\beta_j)$.  Note that any two pairs of points $(p, q), (p',q') \in \alpha_k\times \beta_k$ have the same cost, i.e., $d_Q(p,q)=d_Q(p',q')$ and so we can express the cost $d_Q(\alpha_j,\beta_j)$ by the cost of any pair $(p,q) \in \alpha_k\times\beta_k$, i.e., $d_Q(\alpha_k,\beta_k)=d_Q(p,q)$. We use $\alpha, \alpha'$ and $\alpha_j$ to denote points of $\mathbb{A}$ and $\beta, \beta'$ and $\beta_j$  to denote points of $\mathbb{B}$. We also extend the definition of level to any edge $(\alpha,\beta) \in \A\times \B$ to be the smallest value $i$ such that there is an edge $(a,b) \in \alpha \times \beta$ with level $i$. 
}
Next, we define a set of feasibility conditions based on the randomly-shifted quadtree. For a matching $M$ in the graph, $G(A\cup B, A\times B)$, we assign a \emph{dual weight} $y(v)$ for every $v \in A\cup B$. Recall that $\mu_i$ is the minimum subcell size at level $i$ in the quadtree. For any edge $(a, b)$ of level $i$, let $\mu_{ab} = \mu_{i}$. We say that a matching $M$ and set of dual weights $y(\cdot)$ are $Q$-feasible if for every edge $(a,b)$,
\begin{align}
    y(a) + y(b) &\leq d_Q(a, b) + \mu_{ab}^2. \label{eq:feas1}\\
    y(a) + y(b) &= d_Q(a, b) \text{\quad if } (a, b) \text{ is a local edge.} \label{eq:feas2}
\end{align}

A $Q$-feasible perfect matching is a \emph{$Q$-optimal matching}. Let $\Mopt$ be the optimal RMS matching in $G(A\cup B, A\times B)$. Similar to the Gabow-Tarjan~\cite{gt_sjc89} and Asathulla \etal~\cite{soda-18} algorithms, the addition of an additive error of $\mu_{ab}^2$ for non-local edges distorts the cost of non-local edges of $\Mopt$ by $\mu_{ab}^2$. However, it can be shown that this additional error for any non-local edge $(a,b)$ of the optimal matching is, in expectation, less than $\eps\distsq{a}{b}/2$ due to the random shift. This follows from the fact that short edges of the optimal matching have a small probability of appearing at higher levels of the quadtree. By combining this argument with properties of the distance function, we can show the following lemma, whose proof is delayed until Section~\ref{sec:ommittedproofs}: 

\begin{lemma}
\label{lem:distapprox}
For $A, B \subset \mathbb{R}^2$, let $\Mopt$ be the optimal RMS matching. For a parameter $\eps > 0$, given a randomly shifted quadtree $Q$ and the distance $d_Q(\cdot,\cdot)$, let $M$ be any $Q$-optimal matching. Then, 
$$  \expect{w(M)} \le (1+\eps/2)\sum_{(a,b) \in M_{\opt}}\distsq{a}{b}.$$
\end{lemma}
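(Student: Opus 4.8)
The plan is to run the standard complementary-slackness argument comparing a $Q$-optimal matching $M$ against the optimal RMS matching $\Mopt$, and then take the expectation over the random shift of $Q$. First I would fix a $Q$-optimal matching $M$ together with its dual weights $y(\cdot)$, and observe that every matched edge is local (it lies in $A_k\times B_k$ for its own class $k$); hence by \eqref{eq:feas2} and the fact that $M$ is perfect, $\sum_{v\in A\cup B}y(v)=\sum_{(a,b)\in M}\bigl(y(a)+y(b)\bigr)=d_Q(M)$. Applying the same vertex-sum identity to the perfect matching $\Mopt$ together with the feasibility inequality \eqref{eq:feas1} gives
\begin{equation*}
d_Q(M)\;=\;\sum_{(a,b)\in\Mopt}\bigl(y(a)+y(b)\bigr)\;\le\;d_Q(\Mopt)+\sum_{(a,b)\in\Mopt}\mu_{ab}^2 .
\end{equation*}
Since $w(M)=\sum_{(a,b)\in M}\distsq{a}{b}\le d_Q(M)$ by \eqref{eq:distlb} applied edge by edge, this yields $w(M)\le d_Q(\Mopt)+\sum_{(a,b)\in\Mopt}\mu_{ab}^2$. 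Because $\Mopt$ does not depend on the random shift, taking expectations and using linearity reduces the lemma to two per-edge bounds, valid for every fixed pair $(a,b)$ with $\disteuc{a}{b}\ge 1$ (which holds by (A2)): (i) $\expect{d_Q(a,b)}\le(1+\eps/4)\distsq{a}{b}$, and (ii) $\expect{\mu_{ab}^2}\le(\eps/4)\distsq{a}{b}$. Summing (i) over $\Mopt$ controls $\expect{d_Q(\Mopt)}$ and summing (ii) controls $\expect{\sum_{(a,b)\in\Mopt}\mu_{ab}^2}$, which together give the $(1+\eps/2)$ factor.

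For (ii), let $L$ be the (random) level of $(a,b)$, so $\mu_{ab}=\mu_L$. The edge has level at least $i$ exactly when $a$ and $b$ lie in different cells of $G_{i-1}$; by the usual random-shift argument this happens with probability at most $\min\{1,\ 4\disteuc{a}{b}/2^i\}$ (a grid line of spacing $2^{i-1}$ separates two coordinates with probability at most their gap divided by $2^{i-1}$, then union-bound over the two axes). Using $\mu_i^2\le 2^i(\eps/\log\Delta)^4 2^{-2c_1}$ and summing $\mu_i^2\,\prob{L\ge i}$ over the $\BigO(\log\Delta)$ levels of $Q$ — a geometric sum I would split at $i\approx\log_2\disteuc{a}{b}$ — gives $\expect{\mu_L^2}\le\BigO(\eps^4/(\log\Delta)^3)\,2^{-2c_1}\,\disteuc{a}{b}$. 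Since $\disteuc{a}{b}\ge 1$ I may replace $\disteuc{a}{b}$ by $\distsq{a}{b}$, and since $\eps\le 1$ the coefficient falls below $\eps/4$ once $c_1$ is a large enough absolute constant. The same computation with $\mu_i\le 2^{i/2}(\eps/\log\Delta)^2 2^{-c_1}$ gives $\expect{\mu_L}\le\BigO(\eps^2/(\log\Delta)^2)\,2^{-c_1}\disteuc{a}{b}$, which I will reuse below.

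For (i), let $\xi_a,\xi_b$ be the subcells of the least common ancestor $\square$ of $a$ and $b$ that contain them, with side lengths $s_a,s_b$. The WSPD guarantee \eqref{eq:wspdguarantee} gives $d_Q(a,b)\le(1+\eps/12)^2(\ell_{\max}(\xi_a,\xi_b))^2$, and the triangle inequality gives $\ell_{\max}(\xi_a,\xi_b)\le\disteuc{a}{b}+\sqrt{2}(s_a+s_b)$. The key deterministic observation is that subcells are small: by the subdivision rule, either $s_a=\mu_L$, or $s_a$ is at most $\eps/144$ times the distance between $\xi_a$ and the boundary of the child of $\square$ containing $a$; and in the latter case that distance is at most $\disteuc{a}{b}$, since the segment $ab$ must leave that child while $a$'s distance to the child's boundary is at least the $\xi_a$-to-boundary distance (when $\xi_a$ touches that boundary we are necessarily in the first case). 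Hence $s_a\le\mu_L+(\eps/144)\disteuc{a}{b}$ and likewise for $s_b$, so $\ell_{\max}(\xi_a,\xi_b)\le(1+\sqrt{2}\eps/72)\disteuc{a}{b}+2\sqrt{2}\,\mu_L$. Squaring, taking expectations, and bounding the cross term and the $\mu_L^2$ term via the two estimates from the previous paragraph (each is $\le\BigO(\eps^2/(\log\Delta)^2)\,2^{-c_1}\distsq{a}{b}$, again using $\disteuc{a}{b}\ge 1$) leaves $\expect{(\ell_{\max}(\xi_a,\xi_b))^2}\le(1+\BigO(\eps))\distsq{a}{b}$; the constants $12,72,144$ together with a sufficiently large $c_1$ are exactly what makes $(1+\eps/12)^2$ times this at most $(1+\eps/4)\distsq{a}{b}$.

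The main obstacle is bound (i): getting the geometry of the exponential subcell grid right — in particular the estimate $s_a\le\mu_L+(\eps/144)\disteuc{a}{b}$, including the boundary cases — and then meshing this deterministic bound with the probabilistic tail on the level $L$, while tracking every multiplicative constant so that the final distortion is genuinely $1+\eps/2$ rather than merely $1+\BigO(\eps)$. By contrast, the complementary-slackness manipulation and the geometric-series summations are routine.
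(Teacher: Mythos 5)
Your proposal is correct, and its first half --- the complementary-slackness chain $w(M)\le d_Q(M)=\sum_v y(v)=\sum_{(a,b)\in\Mopt}(y(a)+y(b))\le\sum_{(a,b)\in\Mopt}(d_Q(a,b)+\mu_{ab}^2)$ --- is exactly the paper's Lemma~\ref{lem:qopt}. Both arguments then reduce to a per-edge expected-distortion bound via linearity of expectation, but you organize that bound differently. The paper conditions on the level $i$ of the least common ancestor and splits into two cases: when $\disteuc{p}{q}\ge(144/\eps)\mu_i$ it proves the \emph{deterministic} bound $d_Q(p,q)+\mu_{pq}^2\le(1+3\eps/8)\distsq{p}{q}$ (Lemma~\ref{lem:distaux}(i)), and when $\disteuc{p}{q}<(144/\eps)\mu_i$ it crudely bounds $d_Q+\mu_{pq}^2=\BigO(\mu_i^2/\eps^2)$, observes both subcells have minimum size, and absorbs the damage using $\prob{\square\in G_i}\le\distsq{p}{q}/2^{i-1}$ and the choice of $c_1$, summing to an extra $\eps/8\cdot\distsq{p}{q}$. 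You instead bound $\expect{d_Q(a,b)}$ and $\expect{\mu_{ab}^2}$ separately, the latter by a direct tail-sum over levels, and the former via the uniform deterministic estimate $s_a\le\mu_L+(\eps/144)\disteuc{a}{b}$ on subcell side-lengths (which correctly exploits that in subdivision case (b) one has $\ell_{\min}(\square_1,\xi_a)\le\disteuc{a}{b}$ because $b$ lies outside the child containing $a$), then expands the square and controls the cross term with $\expect{\mu_L}$. Your route avoids the paper's explicit case split at the price of tracking a cross term; the paper's route avoids the cross term at the price of a separate ``short edge at a high level'' case. Both ultimately rest on the same two ingredients --- the random-shift level-tail bound and the exponential-grid property that subcell size is at most $\max\{\mu_i,(\eps/144)\cdot\text{distance to the child's boundary}\}$ --- and your constant bookkeeping ($(1+\eps/12)^2(1+\sqrt2\eps/72)^2$ plus $2^{-c_1}$-suppressed terms staying below $1+\eps/4$) checks out.
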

From Lemma \ref{lem:distapprox}, it follows that any $Q$-optimal matching is, in expectation, an $\eps$-approximate RMS matching. Therefore, it suffices to design an efficient algorithm for computing a $Q$-optimal matching. By executing such an algorithm $\BigO(\log{n})$ times, we can obtain an $\eps$-approximate RMS matching with high probability (see Section \ref{sec:ommittedproofs}).

\section{Algorithm}
\label{sec:algorithm}

\paragraph{Matching Preliminaries:} For any matching $M$, an \textit{alternating path} (resp. \textit{alternating cycle}) with respect to $M$ is one which alternates between edges of $M$ and edges not in $M$. A vertex is \textit{free} if it is not the endpoint of any edge of $M$ and \textit{matched} otherwise. We use $A_F$ (resp. $B_F$) to denote the set of free vertices of $A$ (resp. $B$). An \textit{augmenting path} $P$ is an alternating path between two free vertices. The matching $M' = M \oplus P$ has one higher cardinality than $M$.
An alternating path $P$  is called \textit{compact} if the largest contiguous set of local edges of $P$ has size at most $3$ (see Figure \ref{fig:local-nonlocal}).
Throughout this paper, we use the notation $a, a'$ and $a_j$ for $1\le j \le n$ to denote points in $A$ and $b, b'$ and $b_j$ for  $1\le j \le n$ to denote points in $B$.

For any non-local edge $(a, b)$, we define its \textit{slack} as $s(a, b) = d_Q(a, b) + \mu_{ab}^2- y(a) - y(b)$, i.e.,  how far the feasibility constraint~\eqref{eq:feas1} for $(u,v)$ is from holding with equality. For all local edges the slack $s(a,b)$ is defined to be $0$. Note that, for a $Q$-feasible matching, the slack on any edge is non-negative.  We say any edge is \textit{admissible} with respect to a set of dual weights if it has zero slack. The admissible graph is simply the subgraph induced by the set of zero slack edges. Note that all local edges are also admissible.

 As is common, we define the residual graph $\res{M}$ of a matching $M$ by assigning directions to edges of the graph $\mathcal{G}$. For any edge $(a,b) \in A \times B$, we direct $(a,b)$ from $a$ to $b$ if $(a,b) \in M$ and from $b$ to $a$ otherwise. 
 For any $Q$-feasible matching, we construct a weighted residual graph $\res{M}'$ where the edges of the graph are identical to $\res{M}$ and each edge $(a,b)$ has a weight equal to $s(a,b)$. Any path in $\res{M}$ is alternating, and any path in $\res{M}$ that starts with a free vertex of $B_F$ and ends at a free vertex of $A_F$ is an augmenting path. 
Our algorithm will maintain a $Q$-feasible matching $M$ and set of dual weights $y(\cdot)$. Initially $M = \emptyset$, and we set $y(v) \leftarrow 0$ for every vertex $v \in A \cup B$; clearly, this initial dual assignment is $Q$-feasible. Similar to the classical Hungarian algorithm, our algorithm will iteratively conduct a Hungarian search to find an augmenting path consisting only of admissible edges. Then the algorithm augments the matching along this path. This process repeats until a $Q$-optimal matching is found. Conducting a Hungarian search on the entire graph is prohibitively expensive. Therefore, we introduce a data structure that conducts Hungarian search and augment operations by implicitly modifying the dual weights in sub-linear time.

First, our algorithm executes in $\lceil 3\log n/4\rceil $ phases, starting with phase $0$. At the end of the execution of these phases, it produces a matching that has $\BigOT(n^{1/4})$ free vertices. Finally, the algorithm matches the remaining free vertices one at a time by conducting a Hungarian search to find an augmenting path and then augmenting the matching along this path.

At the start of any phase $i \geq 1$, we are given a $Q$-feasible matching $M$ along with a set of dual weights such that  every free vertex $b \in B_F$ has a dual weight of $\mu_{i-1}^2$.
At the end of phase $i$, we obtain a $Q$-feasible matching with the dual weights of any free vertex $b \in B_F$ risen to $\mu_{i}^2$.

The data structure is used only during the execution of phases. After the $\lceil 3\log n/4 \rceil$ phases have been executed, the algorithm will conduct explicit Hungarian searches and augmentations. For any phase $i \le \lceil 3\log n/4 \rceil$, we describe the data structure $\ds_i$. This data structure supports two global operations: 
\begin{itemize}
    \item \init\ : This operation takes as input a $Q$-feasible matching $M$ and a set of dual weights $y(\cdot)$ such that for every free vertex $v \in B_F$, the dual weight $y(v) = \mu_{i-1}^2$. Given $M, y(\cdot)$, the procedure builds the data structure. 
    \item \genduals\ : At any time in phase $i$, the execution of this procedure will return the matching $M$ stored by the data structure along with a set of dual weights $y(\cdot)$ such that $M, y(\cdot)$ is $Q$-feasible. We denote this matching as the \emph{associated $Q$-feasible matching}. 
\end{itemize}
 The total time taken by both of these operations is bounded by $\BigOT(n\mu_i^{2/3})$.

The data structure does not explicitly maintain a set of $Q$-feasible dual weights at all times because updating all the dual weights after each Hungarian search could take $\Omega(n)$ time. Instead the data structure maintains a smaller set of `up-to-date' dual weights, and updates other dual weights in a `lazy' fashion. While a similar strategy was used in \cite{soda-18}, applying the same strategy in our case requires the design of a new set of compressed feasibility conditions that are significantly more complex than the ones used in~\cite{soda-18}. An example of just one such complexity is the fact that our compressed feasibility relate vertices, edges, and dual weights defined across all levels of the quadtree, while the compressed feasibility conditions in~\cite{soda-18} do not require multiply `levels'.

A set of up-to-date $Q$-feasible dual weights for all vertices could be recovered after any Hungarian search or augmentation by simply executing \genduals. However, doing so is too expensive. Instead, the algorithm only executes \genduals\ once at the end of every phase. Nonetheless, the \genduals\ procedure guarantees the existence of a $Q$-feasible dual assignment for the matching $M$. We use this associated $Q$-feasible matching to describe the other operations supported by the data structure.

During phase $i$, we say that a cell $\cell \in G_i$ is \emph{active} if $\cell\cap B_F \neq \emptyset$. The edges that go between active cells have a cost of at least $\mu_i^2$ and do not become admissible during phase $i$ because the dual weights of all vertices of $B$ are at most $\mu_{i}^2$ whereas the points of $A$ have a non-positive dual weight. Therefore edges between active cells need not be considered during any Hungarian searches or augmentations of phase $i$. As a result, each Hungarian search and augmentation can be conducted completely within a single active cell. 

During any phase $i$ and for any active cell $\square^* \in G_i$, our data structure supports the following operations:
\begin{itemize}
    \item \search\ : This procedure conducts a Hungarian search. At the end of the search, either the dual weight of every free vertex $b \in B_F\cap \square^*$ with respect to the associated $Q$-feasible matching has risen to $\mu_{i}^2$, or the search returns an augmenting path $P$ inside $\square^*$ such that $P$ is both admissible and compact with respect to the associated $Q$-feasible matching. 
    \item \augment\ : This procedure augments the matching along an augmenting path $P$ returned by \search\ and updates the data structure to reflect the new matching. 
\end{itemize}
We postpone the implementation details of the four operations supported by this data structure until Section \ref{sec:ds}. 

In order to describe the execution time of these procedures, we define  \emph{active trees} next.

\begin{figure}
    
    \centering
    \includegraphics[width=\textwidth,height=\textheight,keepaspectratio]{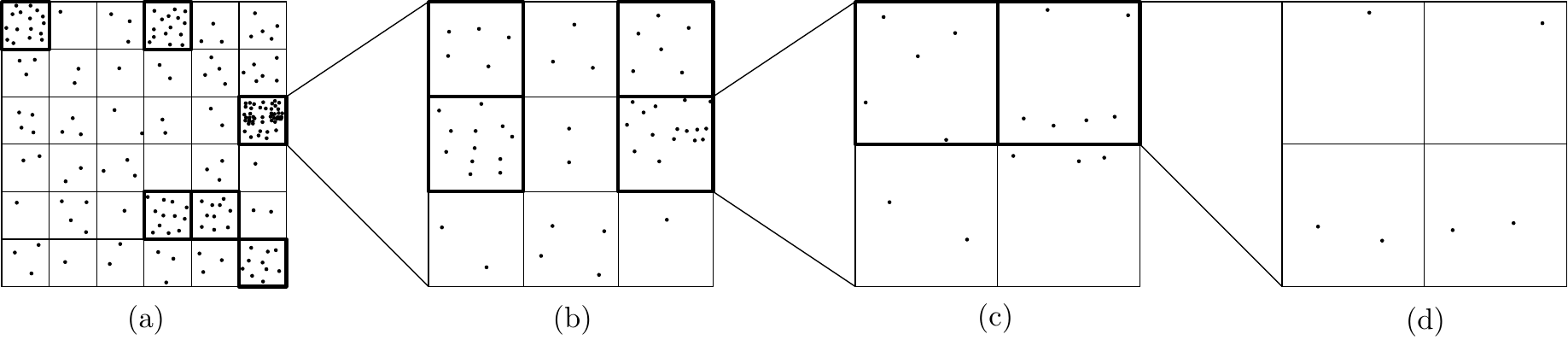}
    \caption{(a) The active cells of the current layer. Each full active cell (bold) has descendants in its active tree. (b) The pieces of a full active cell. (c) Each full piece has $4$ descendants in the active tree. (d) Each branch of the active tree terminates with a sparse leaf cell.}
    \label{fig:activetree}
\end{figure}
\paragraph{Active Tree:} We say a cell $\cell$ at level $i$ in the quadtree $Q$ is \emph{sparse} if  $|(A\cup B)\cap \cell| \leq \mu_i^2$ or if $\cell \in G_0$. Otherwise, $\cell$ is \emph{full}.
For each active cell $\cell^*$ during phase $i$, we maintain an \emph{active tree} denoted by $\mathcal{T}_{\cell^*}$. The active tree $\mathcal{T}_{\cell^*}$ is rooted at $\cell^*$ and contains a subset of the nodes in the subtree of $\cell^*$ in $Q$. If $\cell^*$ is sparse, then $\cell^*$ is also a leaf node and the active tree contains only one node. Otherwise, if $\cell^*$ is full, let all cells of $G_{\lfloor 2i/3\rfloor}$ that partition $\cell^*$ be the children of $\cell^*$ in the active tree. We refer to every child of $\cell^*$ in the active tree as a \emph{piece} of $\cell^*$. 
For each piece $\cell$ of $\cell^*$, if $\cell$ is sparse, then $\cell$ will become a leaf node of the active tree. Otherwise, if $\cell$ is full, then $\cell$ is an internal node of the active tree, and the four children of $\cell$ in $Q$ are also contained in the active tree. We recursively apply this process to construct the active tree for each of the four children; each full child is decomposed into its four children in $Q$. Every leaf node of $\mathcal{T}_{\cell^*}$ is a sparse cell and every internal node is a full cell.

 Consider any augmenting path $P$ computed inside an active cell $\square^*$ during phase $i$. Let $\aff{P}$ be the set of all cells of the active tree, excluding $\square^*$, that contain at least one vertex of $P$. We call such cells the \emph{affected} cells of $P$. Let $\affj{P}{j}$ be the set of level $j$ affected cells of $P$. Then the time taken for a single execution of the \search\ procedure that returns an augmenting path $P$ is $\BigOT(\mu_{i}^{8/3} +\sum_{j=0}^{\lfloor 2i/3\rfloor}|\affj{P}{j}|\mu_j^3)$ (see Section~\ref{subsec:hungariansearch}), and the time taken for an execution of the \augment\ procedure on an augmenting path $P$ is $\BigOT(\sum_{j=0}^{\lfloor 2i/3\rfloor}|\affj{P}{j}|\mu_j^3)$ (see Section~\ref{subsec:augment}). 

Using these operations, we now present our algorithm for any phase $0\le i < \lceil 3\log n/4\rceil $. At the start of phase $i \geq 1$, the dual weight of every vertex in $B_F$ is equal to $\mu_{i-1}^2$. We mark all active cells as unprocessed. The algorithm for phase $i$ conducts the following steps.
 \begin{itemize}
     \item Build the data structure $\ds$ using the \init\ procedure.
     \item While there is an unprocessed active cell $\cell^*$,
     \begin{itemize}
        \item Execute the \search\ procedure on $\cell^*$. 
        \item If \search\ returned an augmenting path $P$, then execute \augment\ on $P$.
        \item If either $B_F\cap \square^*=\emptyset$ (i.e., $\square^*$ is no longer active) or the dual weight of every vertex of $B_F\cap \square^*$ is $\mu_{i}^2$, then mark $\square^*$ as processed.
     \end{itemize}
     \item Use \genduals\ to obtain the associated $Q$-feasible matching $M$ and the dual weights $y(\cdot)$.  
 \end{itemize}
 After the execution of all $\lceil 3\log n/4 \rceil$ phases, we match the remaining free vertices one at a time by iteratively executing Hungarian search to find an augmenting path and augmenting the matching $M$ along the path. Unlike during the phases, each Hungarian search is global and executed on $\res{M}$ without use of the data structure. We describe the details of this global Hungarian search next.

\paragraph{Hungarian search:}
We add a source vertex $s$ to the graph $\res{M}'$ and connect $s$ to each vertex $b \in B_F$ with a cost $0$ edge. Then, we execute a Dijkstra search in the resulting graph, starting from $s$. For any point $u \in A \cup B$, let $d_u$ be the shortest path distance from $s$ to $u$ as computed by Dijkstra's algorithm. We define as value $d$ as,
\[d = \min_{f \in A_F} d_f.\]
Next, for every $u \in A \cup B$ with $d_u\leq d$, we perform the following dual adjustment. If $u \in B$, we set $y(u) \leftarrow y(u) + d - d_u$, and if $u \in A$, we set $y(u) \leftarrow y(u) - d + d_u$. Using a straightforward and standard argument, it is easy to show that this dual adjustment maintains $Q$-feasibility. Furthermore, after the Hungarian search, $\res{M}$ contains an augmenting path $P$ consisting solely of admissible edges. 

When implemented naively, the Hungarian search could take $\Omega(n^2)$ time. However, we recall that each vertex of $\res{M}$ is part of only $\BigOT(1)$ WSPD pairs. All edges $(u,v)$ with the same representative WSPD pair have the same value of $d_Q(u,v)$ as well as the same value of $\mu_{uv}^2$. Using this fact, Dijkstra's algorithm can efficiently find the next edge to add to the shortest path tree in amortized $\BigOT(1)$ time per addition. As a result, a single Hungarian search can be executed in $\BigOT(n)$ time.

\paragraph{Augment:} Let $P$ be an admissible augmenting path found by the Hungarian search procedure. We describe how to augment $M$ along $P$ while maintaining $Q$-feasibility. First, we set $M \leftarrow M \oplus P$. This causes some non-local edges (potentially both matching and non-matching) to become local. We must adjust the dual weights along $P$ to ensure that every newly introduced local edge $(u,v)$ satisfies the $Q$-feasibility constraint $y(u) + y(v) = d_Q(u,v)$. Let $(a,b) \in A_k \times B_k$ be an edge of $P$ that is local and in class $k$ edge after augmentation, but was non-local prior to augmentation. If there are no other class $k$ local edges after augmentation that were also local prior to augmentation, we set $y(a) \leftarrow y(a) - \mu_{ab}^2$. Otherwise, there must be at least one local edge $(a',b')$ in class $k$ after augmentation that was local prior to augmentation, and we set $y(a) \leftarrow y(a')$ and $y(b) \leftarrow y(b')$.

\paragraph{Invariants:}
During the execution of the phases, the algorithm guarantees the following invariants:

\begin{itemize}
    \item[(I1)] The associated matching $M, y(\cdot)$ is $Q$-feasible.
    \item[(I2)] The dual weight $y(b)$ of every vertex $b \in B$ is non-negative. Furthermore, in phase $i \geq 1$, the dual weight of every free vertex $b \in B_F$ is $\mu_{i-1}^2 \le y(b) \le \mu_{i}^2$. The dual weight of every vertex $a \in A$ is non-positive and for every free vertex $a \in A_F$, $y(a)=0$.
\end{itemize}
Since each step of the algorithm is a call to the data structure $\ds_i$, it suffices to show that $\ds_i$ maintains these invariants. We do this in Section~\ref{sec:ds}.

After the execution of the phases, the algorithm switches to conducting explicit Hungarian searches. Using a standard argument, it is easy to show that the dual updates of these Hungarian searches maintain $Q$-feasibility. However, the dual adjustments during augmentations are non-standard due to a careful handling of local edges. The following lemmas establish that the augmentation process continues to maintain $Q$-feasibility. Therefore, at the end of the algorithm, we produce a $Q$-optimal matching as desired.

\begin{lemma}
\label{lem:localsamedual}
Any pair of local edges $(u,v) \in A \times B$ and $(u',v') \in A \times B$ of the same class $k$ are $Q$-feasible if and only if $y(u) = y(u')$ and $y(v) = y(v')$.
\end{lemma}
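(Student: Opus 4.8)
The plan is to unpack the definition of "class $k$" and of $Q$-feasibility, and observe that within a single class all edges have exactly the same $d_Q$-value, so the feasibility equation~\eqref{eq:feas2} becomes a statement purely about the dual weights. Recall that two matching edges lie in the same class exactly when they share a least common ancestor $\square$ and have the same ordered representative WSPD pair $(\Psi_a,\Psi_b)\in\wspd_\square$; by the way the distance function is defined, $d_Q(a,b)=(\ell_{\max}(\Psi_a,\Psi_b))^2$ depends only on that representative pair. Hence for any two local edges $(u,v),(u',v')$ of class $k$ we have $d_Q(u,v)=d_Q(u',v')$; call this common value $d_k$. Since local edges have slack $0$ by definition and a local edge is in particular admissible, $Q$-feasibility of a local edge $(u,v)$ of class $k$ is equivalent to $y(u)+y(v)=d_k$, and similarly $Q$-feasibility of $(u',v')$ is equivalent to $y(u')+y(v')=d_k$ — here I also need to note that $(u,v)$ and $(u',v')$ being class $k$ edges means $u,u'\in A_k$ and $v,v'\in B_k$, so that the "cross" edges $(u,v')$ and $(u',v)$ are also local edges of class $k$, and their feasibility reads $y(u)+y(v')=d_k$ and $y(u')+y(v)=d_k$.

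With these four equations available the argument is pure linear algebra. For the forward direction, assume all local class-$k$ edges are $Q$-feasible. Then $y(u)+y(v)=d_k=y(u)+y(v')$ gives $y(v)=y(v')$, and $y(u)+y(v)=d_k=y(u')+y(v)$ gives $y(u)=y(u')$. For the reverse direction, assume $y(u)=y(u')$ and $y(v)=y(v')$; then from $Q$-feasibility of $(u,v)$ (or rather: we want to show both are feasible, so we should assume at least one of them, say $(u,v)$, is feasible, which holds because the associated matching is $Q$-feasible by invariant (I1) / by construction of the matching we are reasoning about) we get $y(u')+y(v')=y(u)+y(v)=d_k$, so $(u',v')$ is feasible as well, and the cross edges are handled identically. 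I should be careful to state the lemma's "if and only if" as: the collection of all local class-$k$ edges is $Q$-feasible $\iff$ the dual weights are constant on $A_k$ and constant on $B_k$; the phrasing in the statement with a single representative pair $(u,v),(u',v')$ is really shorthand for this, and I would make that explicit at the start of the proof.

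The main thing to get right — and the only place where there is any subtlety — is the claim that if $(u,v)$ and $(u',v')$ are local edges of class $k$ then the cross pairs $(u,v')$, $(u',v)$ are also local edges of class $k$, i.e. that $A_k\times B_k$ consists entirely of local edges all sharing the same $d_Q$ value. This is exactly the definition of a local edge given just before the lemma ($(a,b)$ is local iff $(a,b)\in A_k\times B_k$ for some $k$), combined with the observation that every pair in $A_k\times B_k$ has its endpoints in the same two children of $\square$ and maps to the same WSPD pair $(\Psi_a,\Psi_b)$, hence to the same $d_Q$ value $d_k$; this last point follows from the uniqueness clause in the WSPD construction (each ordered pair of subcells in distinct children of $\square$ maps to a \emph{unique} pair of $\wspd_\square$) together with the fact that $A_k$ (resp.\ $B_k$) is the union of the $A$-endpoints (resp.\ $B$-endpoints) of the matching edges in one class. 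I do not expect this to be hard, only to require a careful sentence or two; once it is in place the rest is the two-line linear-algebra computation above.
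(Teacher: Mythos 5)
Your proposal is correct and follows essentially the same route as the paper: both rest on the observation that every edge of $A_k\times B_k$ shares the same representative WSPD pair and hence the same $d_Q$ value, after which the claim is linear algebra on the equality constraints for the cross edges $(u,v')$ and $(u',v)$. You are in fact somewhat more careful than the paper, which only spells out the forward direction and leaves the reverse implication (and the need to assume feasibility of at least one class-$k$ edge) implicit.
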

\begin{proof}
It is sufficient to argue that claim is true for a pair of matching class $k$ edges, since any local non-matching edge has both its endpoints matched by a class $k$ matching edge. If $(u,v)$ and $(u',v')$ are matching edges of class $k$, then there must also be a pair of non-matching local edges $(v,u')$ and $(v',u)$. Since $(u,v)$ and $(v,u')$ are both feasible, we have
\[y(u) = d_Q(u,v) - y(v) = d_Q(v, u') - y(v) = y(u').\] Similarly, since $(u',v')$ and $(v,u')$ are both feasible, we have. 
\[y(v') = d_Q(u',v') - y(u') = d_Q(v, u') - y(u') = y(v).\]
\end{proof}

\begin{lemma}
\label{lem:augmentdecrease}
For any vertex $v \in A \cup B$, let $y(v)$ be the dual weight of $v$ immediately prior to  augmenting along a path $P$. Along with this augmentation, the dual weights of some vertices on $P$ are modified. Let $y'(v)$ be the new dual weights after these modifications. Then, $y'(v) \leq y(v)$. 
\end{lemma}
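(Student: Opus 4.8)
The plan is to check, case by case, each type of dual-weight modification that the \augment\ procedure performs along $P$, and verify that none of them increases a dual weight. First observe that the only vertices whose dual weights change during augmentation are those on $P$ that are endpoints of an edge $(a,b)$ of $P$ that is local after augmentation but was non-local before. (The Dijkstra dual adjustment of the preceding Hungarian search is a separate step and is not part of ``the augmentation''; here we only analyze the post-augmentation dual fix-up described in the \augment\ paragraph, and indeed for vertices untouched by that fix-up we trivially have $y'(v) = y(v)$.) So fix such an edge $(a,b) \in A_k \times B_k$ and split into the two cases handled by the algorithm.

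\emph{Case 1: $(a,b)$ is the only class-$k$ edge that is local after augmentation and was non-local before.} Then the algorithm sets $y'(a) \leftarrow y(a) - \mu_{ab}^2$ and leaves $y(b)$ unchanged. Since $\mu_{ab}^2 \ge 0$, we have $y'(a) \le y(a)$, and $y'(b) = y(b)$, as required.

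\emph{Case 2: there is at least one class-$k$ edge $(a',b')$ that was already local before augmentation.} Then the algorithm sets $y'(a) \leftarrow y(a')$ and $y'(b) \leftarrow y(b')$. The key point is that $(a,b)$ was \emph{non-local and admissible} just before augmentation (it lies on an admissible augmenting path, or is a matching edge whose removal is forced by one, so its slack was zero). For a non-local admissible edge, the $Q$-feasibility condition~\eqref{eq:feas1} holds with equality, so $y(a) + y(b) = d_Q(a,b) + \mu_{ab}^2$. On the other hand, $(a',b')$ is a local edge of class $k$ with the same representative WSPD pair as $(a,b)$ (all class-$k$ local edges share the same least-common-ancestor $\square$ and the same ordered representative pair, hence the same value of $d_Q(\cdot,\cdot) = (\ell_{\max}(\Psi_a,\Psi_b))^2$), so $d_Q(a',b') = d_Q(a,b)$. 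Since $(a',b')$ is local and $Q$-feasible, $y(a') + y(b') = d_Q(a',b') = d_Q(a,b)$. Combining, $y'(a) + y'(b) = y(a') + y(b') = d_Q(a,b) = y(a) + y(b) - \mu_{ab}^2 \le y(a) + y(b)$, so the \emph{sum} does not increase. To conclude the per-vertex statement I would argue each coordinate separately: prior to augmentation $a$ and $b$ were still matched (as endpoints of some matching edge of the old matching $M$, since $(a,b)$ becomes local, i.e. both $a\in A_k$ and $b\in B_k$), so their dual weights before augmentation were ``class-consistent'' with the old partition. I would show $y(b) \ge y(b')$ (equivalently $y(a) \le y(a')$, using that $a\in A$ has non-positive dual weight and the monotone way dual weights of $A$-vertices evolve, invariant (I2)), which together with the equal-sum identity gives $y'(a) = y(a') \ge y(a)$ --- wait, that is the wrong direction, so the correct route is to instead show $y(a) \ge y(a')$, hence $y'(a) = y(a') \le y(a)$ and then $y'(b) = y(b') = d_Q(a,b) - y(a') \le d_Q(a,b) + \mu_{ab}^2 - y(a) = y(b)$.

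\textbf{Main obstacle.} The routine cases (Case 1, and the ``sum'' half of Case 2) are immediate from $\mu_{ab}^2 \ge 0$ and the feasibility equalities. The delicate step is the per-coordinate claim in Case 2: establishing that $y(a') \le y(a)$ (equivalently $y(b') \le y(b)$) for the pre-existing local edge $(a',b')$. This requires understanding why, among two vertices of $A$ in the same class, the one on the augmenting path has the larger (less negative) dual weight --- which should follow from the structure of the Hungarian search that produced $P$ (vertices reached later in Dijkstra's search received smaller dual increases, and $a$ lies on the final admissible path) together with invariant (I2) and Lemma~\ref{lem:localsamedual}. I would isolate this as the one place needing care; everything else is arithmetic on the feasibility identities.
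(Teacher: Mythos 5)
Your overall structure (the two cases of the \augment\ dual update, Case 1 via $\mu_{ab}^2\ge 0$, Case 2 via the admissibility of the newly local edge and the feasibility of the pre-existing class-$k$ edge) is the paper's approach, and Case 1 is fine. But Case 2 has a genuine gap, and the route you sketch for closing it cannot work. The two per-coordinate bounds you need are \emph{not} equivalent: $y'(a)=y(a')\le y(a)$ is the statement $y(a)-y(a')\ge 0$, whereas $y'(b)=y(b')\le y(b)$ unwinds, via $y(a')+y(b')=d_Q(a,b)$ and $y(a)+y(b)=d_Q(a,b)+\mu_{ab}^2$, to the \emph{opposite-direction} bound $y(a)-y(a')\le \mu_{ab}^2$. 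Because the two sums differ by $\mu_{ab}^2$, your parenthetical ``equivalently $y(b')\le y(b)$'' is false, and your final chain $d_Q(a,b)-y(a')\le d_Q(a,b)+\mu_{ab}^2-y(a)$ is precisely the inequality $y(a)-y(a')\le\mu_{ab}^2$, which does not follow from the premise $y(a)\ge y(a')$ that you propose to establish. So even granting the ``main obstacle'' you flag, the $b$-coordinate would remain open; you need $y(a)-y(a')$ pinned into the interval $[0,\mu_{ab}^2]$, not just bounded below.

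The paper closes this locally, with no appeal to the Dijkstra search order or invariant (I2). For each endpoint $u$ of the newly local matching edge $(u,v)$ it writes $y(u)=d_Q(u,v)+\mu_{uv}^2-y(v)\ge d_Q(v,u')-y(v)=y(u')$, where the final equality is the tightness of the local cross edge between $v$ and the pre-existing class-$k$ vertex $u'$ — equivalently, $y(v)=y(v')$ by Lemma~\ref{lem:localsamedual}, because that endpoint already carries the class dual weight before augmentation. In other words, one endpoint of the newly local edge keeps its dual weight unchanged and the other drops by exactly $\mu_{uv}^2$, which gives $0\le y(u)-y(u')\le\mu_{uv}^2$ and hence both coordinate inequalities simultaneously. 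This tight-cross-edge identity is the ingredient missing from your write-up; once you add it, the rest of your arithmetic goes through.
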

\begin{proof}
The only dual weights that change are along $P$. Any vertex $u$ on $P$ must be matched to a vertex $v$ that is also on $P$ after augmentation. If $(u,v)$ is a local edge that was non-local prior to augmentation, and there are no edges that were in class $k$ both before and after augmentation, then the procedure sets $y'(u) = y(u) - \mu_{uv}^2$ if $u \in A$, and if $u \in B$, then its dual weight is left unchanged. It is easy to see that the dual weights of $u$ and $v$ do not increase for this case. Otherwise, there must be some other edge $(u',v')$ that is local and in class $k$ both before and after augmentation. Since $(u,v)$ was an admissible non-local edge prior to augmentation, 
\[y(u) = d_Q(u,v) + \mu_{uv}^2 - y(v) \geq d_Q(v, u') - y(v),\]
and the dual weight of $u$ only decreases when it is set to match $y(u')$. 
\end{proof}
\begin{lemma}
\label{lem:augment}
Let $P$ be an admissible path with respect to a $Q$-feasible matching $M$ and set of dual weights $y(\cdot)$. Let $M', y'(\cdot)$ be the matching and set of dual weights after augmenting along $P$. Then $M', y'(\cdot)$ are $Q$-feasible.
\end{lemma}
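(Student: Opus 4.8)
The plan is to verify directly the two $Q$-feasibility conditions \eqref{eq:feas1} and \eqref{eq:feas2} for the pair $(M',y'(\cdot))$, treating them separately. The inequality \eqref{eq:feas1} is routine: Lemma~\ref{lem:augmentdecrease} gives $y'(v)\le y(v)$ for every vertex $v$, and $\mu_{ab}$ depends only on the level of $(a,b)$, which augmentation does not change, so for every edge $(a,b)$ we get $y'(a)+y'(b)\le y(a)+y(b)\le d_Q(a,b)+\mu_{ab}^2$, where the last step is \eqref{eq:feas1} for $(M,y(\cdot))$ (an edge that was local in $M$ even satisfied the stronger \eqref{eq:feas2}, which still implies the bound). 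So the work is entirely in re-establishing \eqref{eq:feas2}.

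For \eqref{eq:feas2} I would first handle matching edges of $M'$, split into the \emph{old} edges $M\cap M'$ (whose endpoints lie off $P$, so the duals are unchanged and the local-edge equality is inherited from $Q$-feasibility of $M$) and the \emph{new} matching edges lying on $P$ (edges of $P$ that were non-matching in $M$). A new matching edge $(a,b)$ was admissible with respect to $(M,y(\cdot))$ since $P$ is admissible; if it was already local in $M$ then $y(a)+y(b)=d_Q(a,b)$ and the Augment procedure does not touch its endpoints, and if it was non-local in $M$ then admissibility gives $y(a)+y(b)=d_Q(a,b)+\mu_{ab}^2$, which the procedure's adjustment — either subtracting $\mu_{ab}^2$ from $y(a)$, or copying $y(a'),y(b')$ from an edge $(a',b')$ of the same $M'$-class that was local in $M$ — corrects to exactly $d_Q(a,b)$, using that all edges within one class share the same representative WSPD pair and hence the same $d_Q$- and $\mu$-values.

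Next I would extend to non-matching local edges of $M'$ while showing that $y'(\cdot)$ is constant on each side $A'_k,B'_k$ of every $M'$-class; Lemma~\ref{lem:localsamedual} then upgrades ``one matching edge of the class satisfies \eqref{eq:feas2}'' to ``every local edge of it does,'' again because the shared representative pair forces a common $d_Q$-value across the class. Class-constancy for vertices off $P$ is inherited from $(M,y(\cdot))$, where Lemma~\ref{lem:localsamedual} already applies; for the few vertices on $P$ absorbed into a pre-existing class the Augment procedure copies the dual from an off-$P$ vertex of that class; and a class created entirely fresh by $P$ can contain only a single matching edge, since two new matching edges sharing an LCA and representative pair are linked along $P$ by an edge of $M$, which is then itself a pre-existing local edge of that class (so the ``fresh'' case is vacuous unless it is a single edge).

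The hard part will be exactly this bookkeeping: the Augment procedure modifies dual weights only along $P$, yet ``local'' and class membership are global properties of the matching that shift as edges of $P$ flip, so one must carefully classify each $M'$-class as either inheriting from an $M$-class (so the copy branch of Augment applies and consistency is inherited) or genuinely new (so it is a single edge), and confirm that the case split written into the Augment procedure exactly matches this classification. Compactness of $P$ (when $P$ is produced by \search) and the rigidity of the WSPD representation — edges in a common pair having identical $d_Q$ and $\mu$ — are what I expect to make all of these cases close cleanly.
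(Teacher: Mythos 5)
Your overall architecture is the paper's: condition \eqref{eq:feas1} follows from Lemma~\ref{lem:augmentdecrease}, new matching edges on $P$ are handled via admissibility plus either the ``subtract $\mu_{ab}^2$'' branch or the dual-copying branch of \augment, and Lemma~\ref{lem:localsamedual} propagates the equality \eqref{eq:feas2} across a class once the duals are constant on each side. The gap is your claim that a class created entirely fresh by $P$ can contain only a single matching edge of $M'$. Two previously non-local edges $(u,v')$ and $(u',v)$ of $P$ that share an LCA and representative pair need not be consecutive on $P$ --- an arbitrary alternating sub-path may separate them --- so they are not in general ``linked by an edge of $M$.'' (Your observation is correct only in the special case where they are separated by a single matching edge $(u,v)\in M$; there the shared representative pair forces $(u,v)$ to be a class-$k$ matching edge of $M$, so the class was not fresh and the copy branch applies. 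But nothing in admissibility or compactness forces this adjacency.)

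When a fresh class does acquire two matching edges $(u,v')$ and $(u',v)$ from $P$, the newly local cross edges such as $(u,v)$ lie off $P$, yet both endpoints are on $P$ and \augment\ takes the subtraction branch at $u$ and $u'$; feasibility of $(u,v)$ under $y'(\cdot)$ then requires $y(u)=y(u')$ (equivalently $y(v)=y(v')$), which is exactly the paper's case (ii) and does not come for free. It follows by combining the admissibility equalities $y(u)+y(v')=y(u')+y(v)=d_Q+\mu^2$ with the feasibility inequalities $y(u)+y(v)\le d_Q+\mu^2$ and $y(u')+y(v')\le d_Q+\mu^2$ that held for the cross edges before augmentation (all four edges share the same representative pair, hence the same $d_Q$ and $\mu$); summing shows the inequalities are tight, which yields $y(v)=y(v')$ and $y(u)=y(u')$. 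You need to add this argument rather than declare the case vacuous; the remainder of your case analysis is sound and matches the paper's proof.
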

\begin{proof}
Since the augmentation process only changes the dual weights of vertices of $P$, we only need to consider edges that have at least one endpoint on $P$; edges disjoint from $P$ are unaffected. First, consider any local edge $(u,v)$ of class $k$ after augmentation that was non-local prior to augmentation. If, after augmentation, there are no other class $k$ local edges that also existed prior to augmentation, then either (i) $(u,v)$ is on $P$, or (ii) $(u,v)$ is not on $P$. For case (i), the edge $(u,v)$ is on $P$, $(u,v)$ is a matching edge after augmentation, $u \in A$, $v \in B$, and the procedure sets $y'(u) = y(u) - \mu_{uv}^2$. Prior to augmentation, there was an admissible non-local edge directed from $v$ to $u$, and we have $y(u) + y(v) = d_Q(u,v) + \mu_{uv}^2$. Therefore, after augmentation, $y'(u) + y'(v) = d_Q(u,v)$, and $(u,v)$ is feasible. Case (ii) can only occur if two previously non-local edges $(u, v')$ and $(u',v)$ simultaneously enter class $k$ by augmenting along $P$; in this case, it must be true that $y(u') = y(u)$ and $y(v')=y(v)$ and the edge $(u,v)$ is feasible.

Next, consider the case where there is at least one other class $k$ local edge after augmentation. At least one such edge  $(u',v')$ must have been in class $k$ prior to augmentation as well. Then, for any newly created class $k$ local edge, the procedure ensures that $y'(u) = y(u')$ and $y'(v) = y(v')$. This implies that all class $k$ local edges are feasible by Lemma \ref{lem:localsamedual}.

Finally, we argue that any non-local edge $(u,v)$ after augmentation is feasible. From Lemma \ref{lem:augmentdecrease}, we have 
\[y'(u) + y'(v) \leq y(u) + y(v) \leq d_Q(u,v) + \mu_{uv}^2.\]
\end{proof}
\subsection{Analysis of the algorithm}
\label{subsec:analysisofalg}
In this section, we bound the time taken by the algorithm under the assumption that the data structure works as described. We begin by defining notations that will be used throughout the analysis. Let $\mathbb{P}=\langle P_1,\ldots, P_t\rangle$ be the $t$ augmenting paths computed during the $\lceil 3\log n/4\rceil$ phases of the algorithm. Let $M_0$ be the initial empty matching and, for any $k \ge 1$, let $M_k$ be the matching obtained after augmenting the matching $M_{k-1}$ along $P_k$, i.e., $M_{k}=M_{k-1}\oplus P_k$. For any augmenting path $P$ with respect to some matching $M$, let $N(P)$ be the set of non-local edges of $P$. 

The following Lemma establishes important properties of the algorithm during the  $\lceil 3 \log{n}/4 \rceil$ phases of the algorithm.
\begin{lemma}
\label{lem:unmatchedrem} 
\label{lem:totalerror}
The algorithm maintains the following properties during the $\lceil 3 \log{n}/4 \rceil$ phases:
\begin{enumerate}[(i)]
    \item The total number of free vertices remaining at the end of phase $i$ is $\BigOT(n/\mu_i^2)$, and,
    \item   $\sum_{k=1}^t\sum_{(a,b) \in N(P_k)} \mu_{ab}^2 = \BigOT(n)$.
\end{enumerate}
\end{lemma}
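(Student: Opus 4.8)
The plan is to prove the two claims in sequence, since claim (ii) will be used to control the cost of updates while claim (i) controls the number of phases. For claim (i), I would argue via the dual weights. By invariant (I2), at the end of phase $i$ every free vertex $b \in B_F$ has $y(b) = \mu_i^2$, and every free vertex $a \in A_F$ has $y(a) = 0$. I would set up a potential/weight argument: take the $Q$-optimal matching $\Mopt$ (or rather, relate the current feasible partial matching to $M_{\opt}$ via symmetric differences). Each free vertex of $B$ contributes $\mu_i^2$ to $\sum_v y(v)$, while the total dual weight $\sum_{v} y(v)$ is at most the cost of an optimal (or near-optimal) matching plus the accumulated additive errors, i.e. $\BigO(n/\eps^2) + \BigOT(n)= \BigOT(n)$ by (A3) and Lemma~\ref{lem:distapprox}-type bounds. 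Hence the number of free vertices of $B$ at the end of phase $i$ is at most $\BigOT(n)/\mu_i^2 = \BigOT(n/\mu_i^2)$, and the same bound holds for $A$. This is the standard scaling-algorithm counting argument (as in Gabow--Tarjan and Asathulla~\etal), adapted to the quadtree distance.

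For claim (ii), the key is that the left-hand side is a sum over all augmenting paths of the additive errors $\mu_{ab}^2$ picked up on non-local edges. I would bound this by relating it to the total increase in dual weights over the course of the algorithm. Each time an augmenting path $P_k$ is found and augmented, the Hungarian/search step raises dual weights; the net effect on $\sum_v y(v)$ across a phase is controlled because free vertices of $B$ rise from $\mu_{i-1}^2$ to at most $\mu_i^2$ and there are $\BigOT(n/\mu_{i-1}^2)$ of them, giving a per-phase increase of $\BigOT(n/\mu_{i-1}^2)\cdot \mu_i^2 = \BigOT(n)$, hence $\BigOT(n)$ total over $\BigOT(\log n)$ phases. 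The point is that the slack consumed along $P_k$ — which includes the $\mu_{ab}^2$ terms on its non-local edges — telescopes into this dual-weight budget: augmenting along an admissible path $P_k$ of net cost $\phi(P_k)$ changes $\sum_v y(v)$ by exactly $\phi(P_k)$ up to the local-edge bookkeeping of Lemma~\ref{lem:augmentdecrease}, and $\phi(P_k) \ge \sum_{(a,b)\in N(P_k)} \mu_{ab}^2$ since each non-local admissible edge satisfies $y(a)+y(b) = d_Q(a,b)+\mu_{ab}^2 \ge \mu_{ab}^2$ and $d_Q \ge 0$. Summing and using that dual weights only decrease on augmentation except for the controlled increases above yields $\sum_k \sum_{(a,b)\in N(P_k)} \mu_{ab}^2 = \BigOT(n)$.

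The main obstacle I anticipate is making the telescoping in claim (ii) rigorous in the presence of (a) the lazy/compressed dual weights — the ``true'' $Q$-feasible duals only materialize at phase boundaries via \genduals, so I should phrase the argument entirely in terms of the associated $Q$-feasible matching at the ends of phases and the \search\ guarantees in between; and (b) the non-standard dual adjustments during \augment\ that reassign duals on newly-local edges. Lemma~\ref{lem:augmentdecrease} already gives that these reassignments only decrease individual dual weights, so they can only help the bound; I would use this to absorb that subtlety. A secondary subtlety is correctly accounting for both sides of the bipartition and for the fact that non-local edges can appear at many different levels $i$ with different $\mu_i$; but since we bound $\sum_{(a,b)} \mu_{ab}^2$ directly (not level-by-level), this is handled uniformly. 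I would also need to invoke the compactness of the augmenting paths returned by \search\ only insofar as it bounds $|P_k|$, which is not essential for claim (ii) itself but keeps the per-path costs in check for the running-time corollaries that follow.
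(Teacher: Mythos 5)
Your part (i) is essentially the paper's argument: you compare the current matching to $\Mopt$ (the paper does this via the symmetric difference of $M$ and $\Mopt$, you do it via $\sum_v y(v)$ directly, but both reduce to ``each free vertex of $B$ carries dual weight $\approx\mu_i^2$, matched pairs and free $A$-vertices contribute nonnegatively, and the total is capped by $\sum_{(a,b)\in\Mopt}(d_Q(a,b)+\mu_{ab}^2)=\BigOT(n)$''). That part is correct.

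For part (ii) there is a genuine gap in the step ``$\phi(P_k)\ge\sum_{(a,b)\in N(P_k)}\mu_{ab}^2$.'' This per-path inequality is false. The net-cost of an augmenting path subtracts the adjusted cost of its matching edges, $\phi(P_k)=\sum_{(a,b)\in P_k\setminus M_{k-1}}(d_Q(a,b)+\mu_{ab}^2)-\sum_{(a,b)\in P_k\cap M_{k-1}}d_Q(a,b)$, and for an admissible path this collapses (Lemma~\ref{lem:dualsslacknetcost}) to $y_k$, the dual weight of the free $B$-endpoint, which is at most $\mu_i^2$ during phase $i$. A path containing two or more non-local edges of level $i$ already has $\sum_{(a,b)\in N(P_k)}\mu_{ab}^2\ge 2\mu_i^2>\phi(P_k)$, so your justification (``each non-local admissible edge satisfies $y(a)+y(b)\ge\mu_{ab}^2$ and $d_Q\ge 0$'') does not survive the cancellation against the matching edges. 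The inequality you need holds only in aggregate: summing the identity $\phi(P_k)=\bigl(d_Q(M_k)-d_Q(M_{k-1})\bigr)+\sum_{(a,b)\in P_k\setminus M_{k-1}}\mu_{ab}^2$ over $k$ telescopes the matching-cost terms into $d_Q(M_t)-d_Q(M_0)\ge 0$, giving $\sum_k\sum_{(a,b)\in N(P_k)}\mu_{ab}^2\le\sum_k\phi(P_k)=\sum_k y_k$. This is exactly the telescoping the paper performs; your ``telescoping into the dual-weight budget'' correctly identifies the upper bound $\sum_k y_k=\BigOT(n)$ (which the paper derives from part (i) via $y_k\le\mu_i^2=\BigOT(n)/(n-k)$ and a harmonic-type sum, essentially your per-phase accounting), but attaches the telescoping to the wrong quantity. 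With the telescoping over $d_Q(M_k)$ inserted, the rest of your outline goes through.
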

\begin{proof}
First, we consider phase $0$. Clearly (i) holds because the number of unmatched vertices at the end of phase $0$ is $\BigO(n)$. It is also easy to show that every augmenting path found during phase $0$ contributes only $\BigO(1)$ to the total given by property (ii), and phase $0$ contributes only $\BigO(n)$ towards (ii) over all its augmenting paths. Therefore, in the remaining arguments, we will assume that $i \geq 1$.

Let $M_{\opt}$ be the minimum cost matching. The symmetric difference of $M$ and $M_{\opt}$ will contain $\ell=n-|M|$ vertex-disjoint augmenting paths. Let $\{\mathcal{P}_1,\ldots, \mathcal{P}_\ell\}$ be these augmenting paths. These augmenting paths contain some of the edges of $M_{\opt}$. Combining this with the $Q$-feasibility conditions gives,
\begin{eqnarray*}
\sum_{(a,b) \in M_{\opt}} (d_Q(a,b)+\mu_{ab}^2) &\ge & \sum_{k=1}^\ell\left( \sum_{(a,b) \in \mathcal{P}_i \setminus M} (d_Q(a,b)+\mu_{ab}^2)  - \sum_{(a,b) \in \mathcal{P}_i \cap M} d_Q(a,b) \right)\\
&\ge& \sum_{k=1}^\ell\left( \sum_{(a,b) \in \mathcal{P}_i \setminus M} (y(a)+y(b))  - \sum_{(a,b) \in \mathcal{P}_i \cap M} (y(a)+y(b)) \right)\\
&\geq& \ell \mu_{i-1}^2.
\end{eqnarray*}
The last inequality follows from the facts that, at the end of phase $i-1$, the dual weight of every vertex in $B_F$ is at least $\mu_{i-1}^2$, the dual weight of every vertex of $A_F$ is $0$, and all the vertices of any augmenting path except the first and last are the endpoint of exactly one matching edge and exactly one non-matching edge. At the beginning of phase $i$, since $\sum_{(a,b) \in M_{\mathrm{OPT}}} (d_Q(a,b)+\mu_{ab}^2)$ is $\BigOT(n)$, the number of free vertices at the end of phase $i$ is $\BigOT(n/\mu_{i-1}^2)$. Noting that $\mu_{i-1} \leq \mu_{i}$ gives (i).

Next, we prove (ii). Recollect that $\{P_1,\ldots, P_t\}$ are the augmenting paths computed by the algorithm. For any path $P_k$, let $b_k$ and $a_k$ be the two endpoints of this augmenting path and $y_k$ be the dual weight of $b_k$ when the augmenting path $P_k$ was found. Suppose $P_k$ was found in some phase $i$. Then, its dual weight $y_k \le \mu_i^2 \le 4\mu_{i-1}^2 \le \BigOT(n)/(n-k)$. Summing over $1\le k \le t$, we get
$$\sum_{k=1}^t y_k = \BigOT(n).$$
Note that because $d_Q(M_0) = 0$, we have,
\begin{eqnarray*}
d_Q(M_t) +\sum_{k=1}^t\sum_{(a,b) \in P_k\setminus M_{k-1}} \mu_{ab}^2 &\le& \sum_{k=1}^t d_Q(M_k) - d_Q(M_{k-1}) + \sum_{(a,b) \in P_k\setminus M_{k-1}} \mu_{ab}^2 \\
&\le& \sum_{k=1}^t \left(\sum_{(a,b) \in P_k \setminus M_{k-1}} (d_Q(a,b) +\mu_{ab}^2) - \sum_{(a,b) \in P_k \cap M_{k-1}} d_Q(a,b)\right)\\
&=& \sum_{k=1}^t \left(\sum_{(a,b) \in P_k \setminus M_{k-1}}y(a)+y(b) - \sum_{(a,b) \in P_k \cap M_{k-1}} y(a)+y(b)\right)\\
&=& \sum_{k=1}^t y_k = \BigOT(n).
\end{eqnarray*} 
The claim then follows from the facts that $d_Q(M_t) \geq 0$ and $N(P_k) \subseteq (P_k \setminus M_{k-1})$.
\end{proof}

Using Lemma \ref{lem:unmatchedrem}, we can bound the efficiency of the algorithm. First, we bound the total time taken after the $\lceil 3\log n/4\rceil$ phases have been executed. After the last phase is executed, $\mu_{i}^2 = \Omega(n^{3/4} / \polys)$. From Lemma \ref{lem:unmatchedrem}, there are only $\BigOT(n^{1/4})$ unmatched vertices remaining. Using the WSPD, each of these unmatched vertices are matched in $\BigOT(n)$ time. Therefore, the time taken after the phases have executed is $\BigOT(n^{5/4})$.

Next, we bound the time taken by the $\lceil 3\log n/4\rceil$ phases of the algorithm. For any such phase $i$, we execute the \init\ procedure to create the data structure $\ds_i$. At the end of phase $i$, we execute the \genduals\ procedure to generate a $Q$-feasible matching. Both of these operations take $\BigOT(n\mu_i^{2/3})$ time during phase $i$.

Next, we bound the time taken by \augment, which takes $\BigOT(\sum_{j=0}^{\lfloor 2i/3\rfloor}|\affj{P}{}|\mu_j^3)$ time when executed on an augmenting path $P$. Therefore, to bound the total time taken by \augment, we will bound the total  number of level $j$ edges over all augmenting paths computed during phases of the algorithm. From Lemma \ref{lem:totalerror}, we have
\begin{equation}
\label{eq:totalerror2}
\sum_{1 \leq k \leq t}\sum_{(u,v) \in N(P_k)} \mu_{uv}^2 = \BigOT(n).
\end{equation}
Each non-local edge $(u,v)$ of level $j$ contributes $\Omega(\mu_{uv}^2) = \Omega(\mu_{j}^2)$ towards the RHS of equation \eqref{eq:totalerror2}. As a result, there can be at most $\BigOT(n/\mu_j^2)$ such edges in all augmenting paths computed during the phases of the algorithm. 

Recall that two matching edges $(a_i,b_i)$ and $(a_k,b_k)$ are in the same class if they share the least common ancestor $\square$ and their representative pair $(\Psi_{a_i},\Psi_{b_i}) \in \wspd_{\square}$ is the same as $(\Psi_{a_k}, \Psi_{b_k})$. Consider any augmenting path $P$, and consider any maximal sub-path $S$ with the property that all its matching edges (resp. non-matching edges) belong to the same class with representative pair $(\Psi,\Psi')$ (resp. $(\Psi',\Psi)$). We will call any such path a \emph{local path}. Intuitively, all matching edges of $S$ will belong to the same class and, upon augmentation, the non-matching edges of $S$ will all enter the matching and belong to the same class.  We say that $S$ is a level $j$ local path if all edges of $S$ appear at level $j$. For any augmenting path $P$, let $L_j(P)$ be the set of level $j$ local paths of $P$.
\begin{figure}
    \centering
    \includegraphics[width=2\textwidth/3,height=\textheight,keepaspectratio]{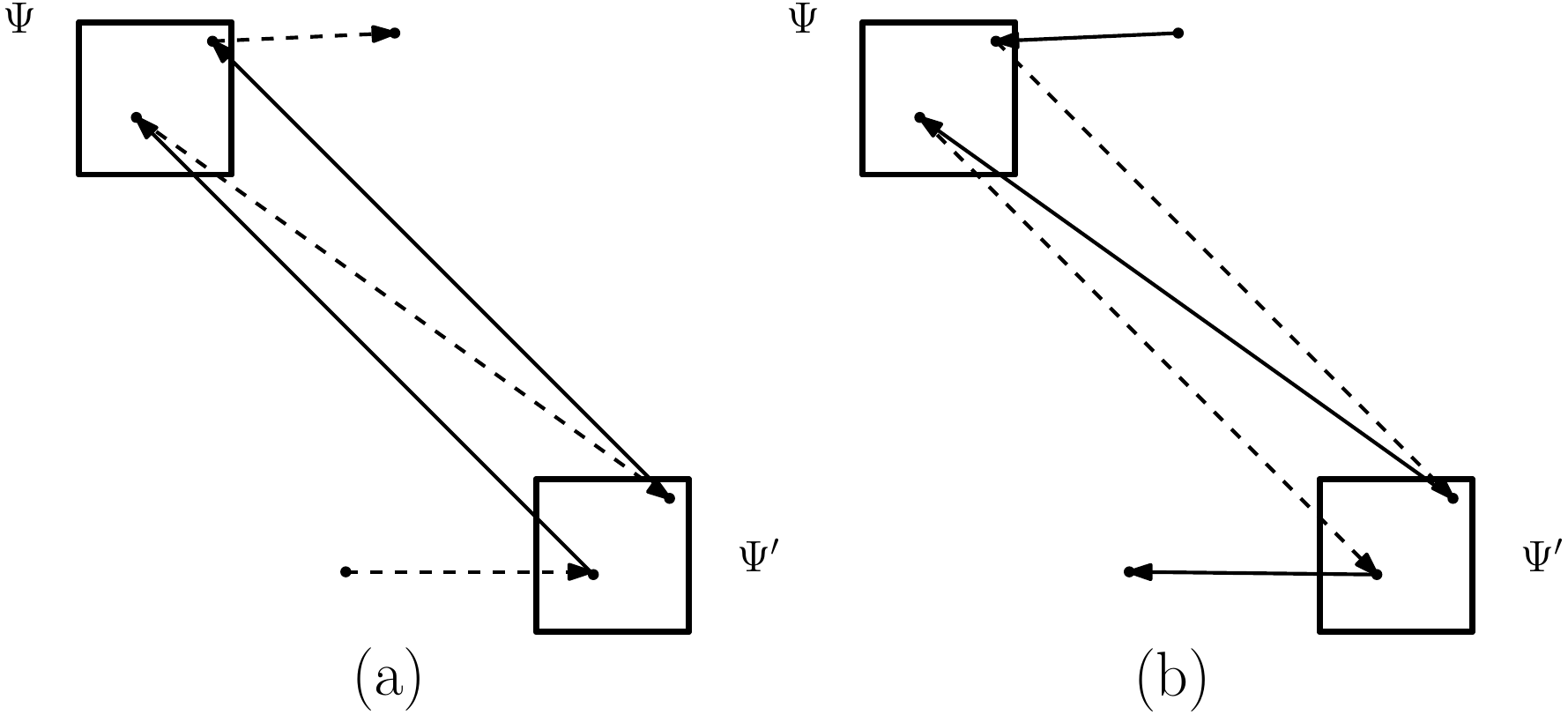}
    \caption{(a) A local path $S$ of length $3$ between the cells $\Psi$ and $\Psi'$. Although $S$ does not contain any non-local edges, augmentation will reduce the number of matching edges between the two cells by $1$. (b) The number of matching edges between the cells $\Psi$ and $\Psi'$ can only increase when a local path containing at least one non-local edge between $\Psi$ and $\Psi'$ participates in an augmenting path.}
    \label{fig:local-path}
\end{figure}
 It is easy to see that either: (1) $S$ contains at least one non-local edge or (2) the first and the last edge of $S$ are matching edges. In case (2), the number of matching edges that have $(\Psi, \Psi')$ as their representative pair decreases by $1$ after augmenting along $P$. Furthermore, new matching edges with representative $(\Psi, \Psi')$ can only be created through an occurrence of case (1). Therefore, each occurrence of case (2) can be taxed on an occurrence of case (1). Combining this observation with the bound on the number of non-local edges gives:
\begin{equation}
\label{eq:localgroupbound}
\sum_{1 \leq k \leq t}|L_j(P_k)| = \BigOT(n / \mu_j^2).
\end{equation}
From the fact that each $P_k$ is compact, there are at most $6|L_j(P_k)|+1$ cells in $\affj{P_k}{j}$. From Lemma \ref{lem:unmatchedrem}, there are $\BigOT(n/\mu_i^2)$ augmenting paths found during phase $i$. Combining these observations with \eqref{eq:localgroupbound} gives the following for any level $j$,
\[\sum_{k=1}^t|\affj{P_k}{j}|\mu_j^3 = \BigOT(\mu_j^3(n/\mu_i^2 + n/\mu_j^2))  = \BigOT(n\mu_j).\]
When summing over all levels that contain affected pieces, the top level $\lfloor 2i/3 \rfloor$ dominates, which bounds the total time for \augment\ during phase $i$ as,
\begin{equation}
\label{eq:totalupdate}
    \sum_{k=1}^t\sum_{j=0}^{\lfloor 2i/3\rfloor} |\affj{P_k}{j}|\mu_j^3 = \BigOT(n\mu_i^{2/3}).
\end{equation}
\ignore {
\paragraph{Note concerning edges of layer $0$:} Any non-local edge $(a,b)$ of layer $0$ has a value $\mu_{ab} = 0$. As a result, a slightly different argument is needed to bound the total number of layer $0$ edges used on augmenting paths. Observe that the edges preceding or following any local path of level $0$ must be part of some level greater than $0$. As a result, the total number of layer $0$ edges used across all augmenting paths can be taxed on the total number of non-layer $0$ edges across all augmenting paths, which is $\tilde{O}(n)$.
}
Finally, we bound the time taken by the \search\ procedure during phase $i$. 
From Lemma \ref{lem:unmatchedrem}, the number of unmatched vertices remaining at the beginning of phase $i$ is $\BigO(n/\mu_i^2)$. This value also bounds the number of active cells during phase $i$. Therefore, the number of calls to \search\ during phase $i$ is $\BigO(n/\mu_i^2)$. Each execution of \search\ during phase $i$ takes $\BigOT(\mu_{i}^{8/3} +\sum_{j=0}^{\lfloor 2i/3\rfloor}|\affj{P}{j}|\mu_j^3)$ time. Summing over all paths computed during phase $i$ and applying \eqref{eq:totalupdate} gives a total time of $\BigOT(n\mu_i^{2/3})$ for the \search\ procedure during phase $i$. Combining the times taken by all data structure procedures during phase $i$ gives a total time of $\BigOT(n\mu_i^{2/3})$. The time taken for the last phase $\lceil 3\log{n}/4 - 1\rceil$ dominates, taking a total of $\BigOT(n^{5/4})$ time.

 \section{Data Structure}
 \label{sec:ds}
 \subsection{Preliminaries}
 To simplify the presentation of the data structure, we introduce additional notations and give an equivalent redefinition of $Q$-feasibility with respect to these additional notations. We also present a few auxiliary properties that will be useful in proving the correctness of the data structure.
 
 We define the \textit{adjusted} cost of an edge $\Phi(a,b)$ as 
\begin{align*}
    &\Phi(a,b) = d_Q(a,b) + \mu_{ab}^2&\text{ if } (a,b) \text{ is non-local.}\\
    &\Phi(a,b) = d_Q(a,b) &\text{ otherwise. }
\end{align*}
For any edge $(a,b)$ of $\mathcal{G}$, we define its \textit{net-cost} $\phi(a,b)$ as follows. If $(a,b)$ is non-matching edge, its net-cost is $\phi(a,b) = \Phi(a,b)$. Otherwise, $(a,b)$ if the edge is in the matching, we define $\phi(a,b) = -\Phi(a,b)$. For any set of edges $S$, we define its \textit{net-cost} as $\phi(S) = \sum_{(a,b) \in M} \phi(a,b)$. 

Recollect that $Q$-feasibility was defined with respect to the graph $\mathcal{G}(A\cup B, A \times B)$.  For the data structure, it is convenient to deal with the residual graph $\res{M}$ instead. We redefine a $Q$-feasibility constraints that our algorithm maintains.  This is done for simplicity in exposition of the algorithm and its proofs. All dual updates done during the course of the algorithm will ensure that each point $b \in  B$ is assigned a non-negative dual weight $y(b)$ and each point $a \in A$ is assigned a non-positive dual weight $y(a)$.

A matching $M$ and a set of dual assignments $y(\cdot)$ is $Q$-feasible if for any edge $(u,v)$ of the residual graph $\res{M}$ directed from $u$ to $v$,  
\begin{align*}
    |y(u)| - |y(v)| &\leq \phi(u,v),\\
    |y(u)| - |y(v)| &= \phi(u,v) &\text{ if } (u,v) \text{ is local.}
\end{align*}For any non-local edge $(u,v)$, we define its \textit{slack} as $s(u,v) = \phi(u,v) - |y(u)| + |y(v)|$, i.e.,  how far the feasibility constraint for $(u,v)$ is from being violated. Note that the slack on any edge is non-negative with local edges having a zero slack. We say any edge is \textit{admissible} with respect to a set of dual weights it has a zero slack. The admissible graph is simply the subgraph induced by the set of zero slack edges. The advantage of redefining $Q$-feasibility conditions in this fashion is that it extends to any directed path in $\res{M}$ as presented in the following lemma. 

\begin{lemma}
\label{lem:dualsslacknetcost}
Let $M, y(\cdot)$ be a $Q$-feasible matching and set of dual weights maintained by the algorithm. Let $P$ be any alternating path with respect to $M$ starting at a vertex $u$ and ending at a vertex $v$. Then,
\[|y(u)| - |y(v)| + \sum_{(u',v') \in P} s(u',v') = \phi(P).\]
\end{lemma}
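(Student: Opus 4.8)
The plan is to prove the identity by induction on the number of edges of the alternating path $P$, unwinding one edge at a time from the endpoint $u$. The base case is a trivial path $P$ with a single vertex $u=v$ and no edges: then both sides equal $0$. For the inductive step, write $P = \langle u, w \rangle \cup P'$, where $(u,w)$ is the first edge of $P$ (directed from $u$ to $w$ in $\res{M}$, since $P$ is a directed alternating path) and $P'$ is the remaining alternating path from $w$ to $v$. The inductive hypothesis applied to $P'$ gives $|y(w)| - |y(v)| + \sum_{(u',v') \in P'} s(u',v') = \phi(P')$.

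Next I would handle the contribution of the single edge $(u,w)$. There are two cases according to whether $(u,w)$ is local or non-local. If $(u,w)$ is non-local, then by definition $s(u,w) = \phi(u,w) - |y(u)| + |y(w)|$, so $\phi(u,w) = |y(u)| - |y(w)| + s(u,w)$. If $(u,w)$ is local, then $Q$-feasibility gives $|y(u)| - |y(w)| = \phi(u,w)$ and $s(u,w) = 0$, so again $\phi(u,w) = |y(u)| - |y(w)| + s(u,w)$. In both cases the single edge satisfies $\phi(u,w) = |y(u)| - |y(w)| + s(u,w)$. Adding this to the inductive hypothesis for $P'$, the $|y(w)|$ terms cancel (one with a plus sign from the edge identity, one with a minus sign from the hypothesis rewritten), and since $\phi(P) = \phi(u,w) + \phi(P')$ and $\sum_{(u',v')\in P} s(u',v') = s(u,w) + \sum_{(u',v')\in P'} s(u',v')$, we obtain $|y(u)| - |y(v)| + \sum_{(u',v') \in P} s(u',v') = \phi(P)$, completing the induction.

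The only subtlety — and hence the step I would be most careful about — is the bookkeeping of signs. One must check that the direction of each edge in $\res{M}$ is consistent with the way net-cost $\phi$ and slack $s$ were defined: for a matching edge $(a,b)\in M$ the residual edge is directed $a \to b$ and $\phi(a,b) = -\Phi(a,b)$, while for a non-matching edge the residual edge is directed $b \to a$ and $\phi$ is $+\Phi$; the redefined $Q$-feasibility constraint $|y(u)| - |y(v)| \le \phi(u,v)$ is stated precisely for the residual orientation $u\to v$, so the per-edge identity above applies uniformly regardless of whether the edge is matching or non-matching. As long as we always peel edges in the direction they are traversed along $P$ (which is a valid directed path in $\res{M}$), no sign issues arise and the telescoping is immediate. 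Everything else is a routine telescoping sum, so there is no real obstacle beyond this sign/orientation check.
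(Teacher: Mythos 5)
Your proof is correct and is essentially the paper's argument: both rest on the per-edge identity $\phi(u',v') = |y(u')| - |y(v')| + s(u',v')$ (from the definition of slack for non-local edges and from feasibility-with-equality plus zero slack for local edges), followed by telescoping. The paper writes the telescoping as a single chain of equalities over the sum, while you package it as an induction peeling one edge at a time; this is only a cosmetic difference.
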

\begin{proof}
The proof is straight-forward from the definitions of slack and net-cost:
\[\phi(P) = \sum_{(u',v') \in P}\phi(u',v') = \sum_{(u',v') \in P} (|y(u')| - |y(v')| + s(u',v') ) = |y(u)| - |y(v)| + \sum_{(u',v') \in P} s(u',v').\]
\end{proof}

For any phase $i$, we define our data structure for each active cell $\cell^*\in G_i$. The data structure is based on the active tree $\mathcal{T}_{\cell^*}$.  We define a sub-linear in $n$ sized associated graph $\mathcal{AG}_{\cell}$ for each cell $\cell$ of $\mathcal{T}_{\cell^*}$. This graph will help us compactly store the dual weights and help conduct the Hungarian Search, find augmenting paths and augment the matching along the path.

For any cell $\cell$, let $A_\cell = A \cap \cell$ and $B_\cell = B \cap \cell$. For any set of cells $X$, we denote $A_X = \bigcup_{\cell \in X}A_{\cell}$ and $B_X= \bigcup_{\cell \in X}B_{\cell}$. For any cell $\cell$ let $M_\cell$ be the set of edges of $M$ that have both endpoints contained in $\cell$, and let $\res{M_\cell}$ be the vertex-induced subgraph of $(A_\cell \cup B_\cell)$ on $\res{M}$. For simplicity in notation we use $\res{\cell}$ to denote $\res{M_\cell}$. 

Recall that a $\cell$ of level $j$ is \emph{sparse} if $|A_\cell \cup B_\cell| \leq \mu_j^2$ or $\cell \in G_0$ and full otherwise. For any cell $\cell$ of $\mathcal{T}_{\cell^*}$, our data structure constructs an \emph{associated graph} $\mathcal{AG}_{\cell}$. If $\cell$ is sparse, the associated graph $\mathcal{AG}_{\cell}$ is simply given by $\res{\cell}$. In the following, we define the associated graph for any full cell.

\subsection{Vertices of the associated graph}
\label{subsec:associatedgraphvertices}
 We define an associated graph for an arbitrary cell in the active tree $\mathcal{T}_{\cell^*}$. For any cell $\cell$ in the active tree, let $D(\cell)$ denote the children of $\cell$. We extend our definition to subcells as well. For any subcell $\xi \in \subcells{\cell}$, let $\cell'\in D(\cell)$ be the cell that contains $\xi$. Let $D(\xi) = \{\xi' \mid \xi' \in \subcells{\cell'} \text{ and }\xi' \subseteq \xi\}$. 
 
 If $\cell$ is a full cell, for each of its children $\cell'$, suppose $\cell'$ is of level $j$. We cluster $A_{\cell'}\cup B_{\cell'}$ into $\BigOT(\mu_{j})$ clusters. We cluster points in such a way that all edges going between any two clusters $X$ and $Y$, where $X$ and $Y$ are clusters for two different children $\cell'$ and $\cell''$ of $\cell$, have the same net-cost. We create one vertex in $V_{\cell}$ for every cluster of $\cell'$ and repeat this for every child $\cell'$ of $\cell$.  The clusters created here are similar to that in \cite{sa_stoc12}. 

Recall that two matching edges $(a_i,b_i)$ and $(a_k,b_k)$ are in the same class if they share the least common ancestor $\square$ and their representative pair $(\Psi_{a_i},\Psi_{b_i}) \in \wspd_{\square}$ is the same as $(\Psi_{a_k}, \Psi_{b_k})$. 
For any matched point $a_i$ (resp.\ $b_i)$, we refer to $b_i$ (resp.\ $a_i$) as
its \emph{partner} point. 
For any $\subcell \in \Gd[\cell']$, we partition $A_{\subcell}$ and $B_{\subcell}$ into three types of clusters.
\begin{figure}
    \centering
    \includegraphics[width=\textwidth]{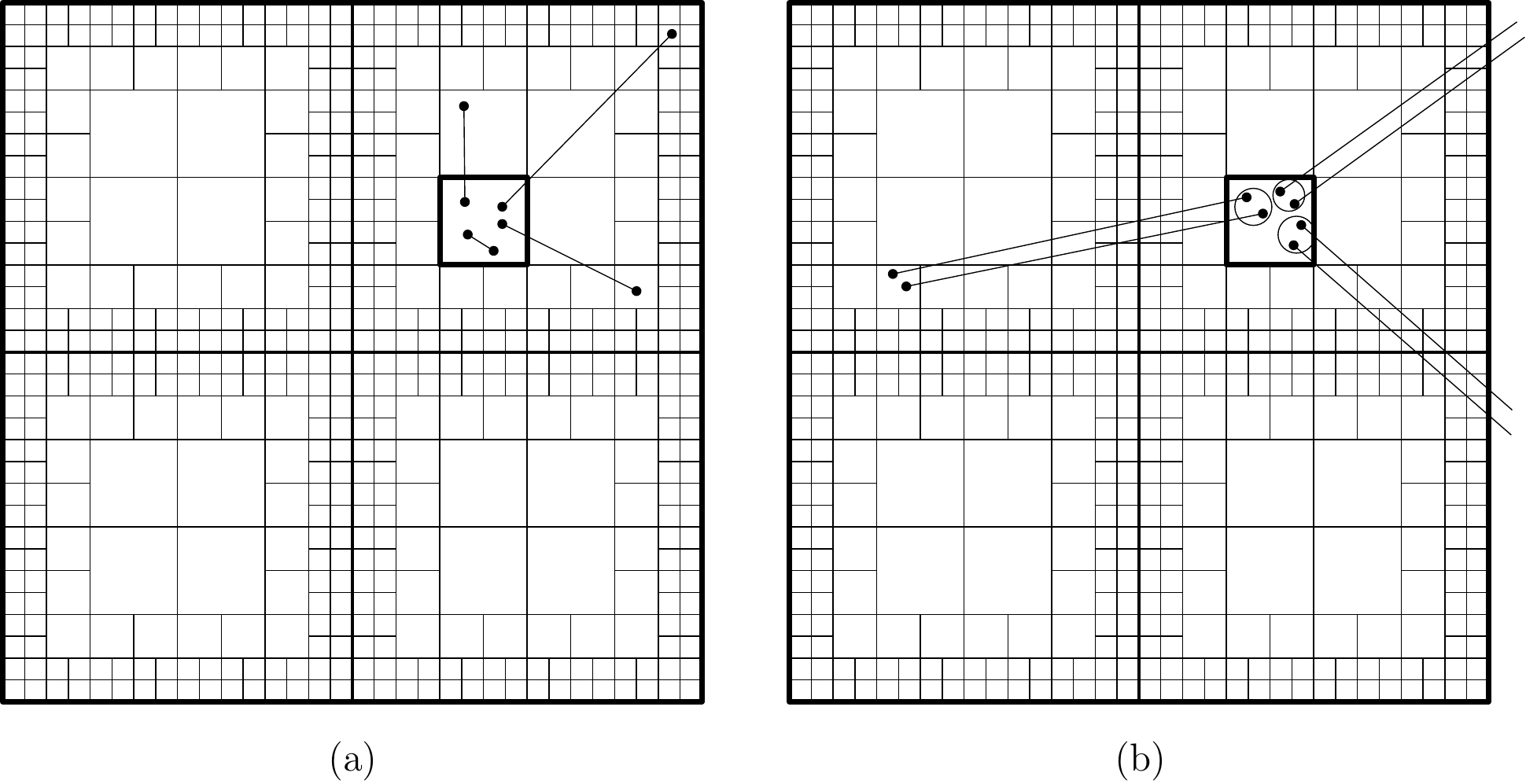}
    \caption{Internal and boundary clusters of a subcell $\xi$ within the child $\cell'$ of $\cell$. (a) All vertices matched within $\cell'$ are part of an internal cluster. (b) All vertices matched outside of $\cell'$ are divided into boundary clusters based on WSPD pairs at higher levels.  }
    \label{fig:clusters}
\end{figure}
\begin{itemize}
\item {\it Free clusters}: All free points of $A_\subcell$ (resp. $B_\subcell$) belong to a single cluster
$$A_\subcell^F = A_F\cap \subcell,\quad B_\subcell^F = B_F \cap \subcell.$$
\item {\it Internal clusters}: All points of $A_\subcell$ (resp. $B_\subcell$) whose partner point is also inside $\cell'$ belong to a single cluster
$$A_\subcell^I = \{a_i \in A_\subcell \mid (a_i,b_i) \in M, b_i \in B_{\cell'}\},$$ 
$$B_\subcell^I = \{b_i \in B_\subcell \mid (a_i,b_i) \in M, a_i \in A_{\cell'}\}.$$ 
\item {\it Boundary clusters}: Recollect that $\cell' \in G_j$. All points of $A_\subcell$ (resp. $B_\subcell)$ whose partner points are outside $\cell'$ are partitioned into \emph{boundary clusters}. Two such vertices belong to the same boundary cluster if the matching edges incident on them belong to the same class. Note that all such matching edges have level at least $j$ and are incident on at least one vertex of  $A_\xi\cup B_\xi$. Any such matching edges is captured by one of $\BigOT(1)$ many WSPD pairs given  by the set $N^*(\xi)$. Since there is at most one class per WSPD pair, there are at most $\BigOT(1)$ many boundary clusters per subcell. More specifically, for every $(\Psi_1,\Psi_2) \in N^*(\subcell)$, we create a cluster,
$$A_\subcell^{(\Psi_1,\Psi_2)} = \{a_i \in A_\subcell \mid (a_i,b_i) \in M, b_i \in B_{\Psi_2} \},$$  
$$B_\subcell^{(\Psi_1,\Psi_2)} = \{b_i \in B_\subcell \mid (a_i,b_i) \in M, a_i \in A_{\Psi_2}\}.$$
\end{itemize}

For every cell $\cell'$ and any subcell $\xi \in \subcells{\cell'}$, there are a total of $\BigOT(1)$ clusters. Therefore, the total number of clusters at $\cell'$ is $\BigOT(\mu_j)$.
Let $\X_{\cell'}$ be the set of clusters at $\cell$ that are generated from its child $\cell'$. The cluster set at $\cell$ is simply $V_{\cell} = \bigcup_{\cell'\in D(\cell)}\X_{\cell'}$. We use $\mathcal{A}_{\cell}$ (resp. $\mathcal{B}_{\cell})$ to denote the vertices of type $A$ (resp. type $B$) in $V_{\cell}$; $V_{\cell} = \mathcal{A}_{\cell} \cup \mathcal{B}_{\cell}$.

Next, we partition the clusters in $\X_{\cell'}$ into two subsets called the \emph{entry} and \emph{exit} clusters respectively, 
\begin{eqnarray*}
\X_{\cell'}^\downarrow = \{B_\subcell^F, A_\subcell^I, B_\subcell^{(\Psi_1,\Psi_2)} \mid \subcell \in \subcells{\cell'}, (\Psi_1,\Psi_2) \in N^*(\subcell)\},\\
\X_{\cell'}^\uparrow = \{A_\subcell^F, B_\subcell^I, A_\subcell^{(\Psi_1,\Psi_2)} \mid \subcell \in \subcells{\cell'}, (\Psi_1,\Psi_2) \in N^*(\subcell)\}.
\end{eqnarray*}
We also denote all the clusters at $\cell$ from $\cell' \in D(\cell)$ that contains points of $A$ and $B$ as $\A_{\cell'}$ and $\B_{\cell'}$ respectively; $\X_{\cell'} = \A_{\cell'} \cup \B_{\cell'}$. 

We next describe the significance of entry and exit clusters.
For any directed path $\Pi$  in $\res{\cell^*}$, let $\overline{\pi}$ be a maximal connected sub-path of $\Pi$ that lies
inside $\cell$. 
Suppose $\overline{\pi}$ contains at least one edge. For the two endpoints $p,q$ of $\overline{\pi}$,
we refer to $p$ as entry and $q$ as exit point if $\overline{\pi}$ is directed from $p$ to $q$.
Then, it was shown in~\cite{sa_stoc12} that the entry point lies in an entry cluster and exit point lies in an exit cluster. 
\subsection{Relating parent-child clusters}

Let $\cell$ be any node in $\mathcal{T}_{\cell^*}$. 
Note that all internal nodes of the active tree except the root have four children. For any cell $\cell$ of an active tree $\mathcal{T}_{\cell^*}$, clusters are defined with respect to the subcells of its children. For any cell $\cell$ of $\mathcal{T}_{\cell^*}$, including the root, let $\xi$ be a subcell of $\cell' \in D(\cell)$ and let $\cell$ be a cell of level $i$. \ignore{Recollect that $A_\xi\cup B_\xi$ are partitioned into $\BigOT(1)$ clusters at $\cell$.   

the relationship between the clusters at $\cell$ and those at its children $\{\cell_1,\ldots, \cell_l\}$ can be summarized as follows. Each subcell $\subcell \in \Gd[\cell_j]$ is contained in one of the four children of $\cell_j$, say $\cell'$. Then, with respect to $\cell'$, either $\subcell$ is also a subcell in $\cell'$ or the  subcell $\subcell$ can be decomposed into a set $\langle \subcell_1, \ldots, \subcell_4 \rangle$ 
in $\Gd[\cell']$. If $\subcell$ is also a subcell of $\cell'$, then the clusters of $\cell'$ inside $\subcell$ are also clusters of $\cell$.  Otherwise, suppose $\langle \subcell_1, \ldots, \subcell_4 \rangle$ are the four subcells of $\cell'$ that together cover $\cell$.}  Then, we get the following relationship between clusters of $\cell$ and $\cell'$.
\begin{eqnarray}
\label{eq:subset}
A_\subcell^F &=& \bigcup_{\subcell' \in D(\subcell)}A_{\subcell'}^F,\qquad B_\subcell^F = \bigcup_{\subcell' \in D(\subcell)} B_{\subcell'}^F,\nonumber\\ 
A_\subcell^{(\Psi_1,\Psi_2)} &=& \bigcup_{\subcell' \in D(\subcell)} A_{\subcell'}^{(\Psi_1,\Psi_2)},\qquad B_{\subcell}^{(\Psi_1,\Psi_2)} = \bigcup_{\subcell' \in D(\subcell)} B_{\subcell'}^{(\Psi_1,\Psi_2)}, \nonumber\\
A_\subcell^I &=& \bigcup_{\subcell' \in D(\subcell)} (A_{\subcell'}^I \cup (\bigcup_{(\Psi,\Psi') \in N^i(\subcell')} A_{\subcell'}^{(\Psi,\Psi')})),\nonumber\\
B_\subcell^I &=& \bigcup_{\subcell' \in D(\subcell)} (B_{\subcell'}^I \cup (\bigcup_{(\Psi,\Psi') \in N^i(\subcell')} B_{\subcell'}^{(\Psi,\Psi')})).\nonumber
\end{eqnarray}

For any cluster $X$ defined at a full cell $\cell$, we use the notation $D(X)$ to denote all the clusters at the children that combine to form $X$. Note that if $X$ is a cluster generated at a subcell $\xi$ of a leaf (i.e., sparse) cell $\cell' \in D(\cell)$ of $\mathcal{T}_{\cell}$, then we set $D(X)$ to be all the points that are contained in $X$.
The following lemma whose proof is straightforward states the property of the above hierarchical clustering scheme. 
\begin{lemma}
\label{lem:sim}
For any cell $\cell \in Q$, let $\cell_1,\cell_2$ be two of its children. Let $X \in \X_{\cell_1}$ and $Y \in \X_{\cell_2}$. Then the net-costs of all edges in $X\times Y$ are the same in $\res{M}$ and all such
edges are oriented in the same direction --- either all are oriented from $B$ to $A$ or all of them are oriented from $A$ to $B$.
\end{lemma}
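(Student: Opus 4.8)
The plan is to show, for arbitrary $X\in\X_{\cell_1}$ and $Y\in\X_{\cell_2}$, that every edge of $X\times Y$ appears at the same level, maps to the same ordered representative WSPD pair, and has the same matching status. Granting this, all edges of $X\times Y$ carry the same additive term $\mu^2$, the same $d_Q$ value, and the same sign in their net-cost (a non-matching edge $(a,b)$ with $a\in A$ is directed from $B$ to $A$ with net-cost $\Phi(a,b)\ge\distsq{a}{b}\ge 1>0$, whereas a matching edge is directed from $A$ to $B$ with net-cost $-\Phi(a,b)<0$), so the net-costs are all equal and the orientations all agree. I would dispose of the trivial case first: if $X$ and $Y$ have the same type then $X\times Y$ contains no edge of $\mathcal G$, so assume $X$ consists of $A$-vertices lying in one subcell $\xi_X\in\subcells{\cell_1}$ and $Y$ of $B$-vertices lying in one subcell $\xi_Y\in\subcells{\cell_2}$, which is how the clusters are defined.

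For the common level and representative pair, let $\cell''$ be the least common ancestor cell of $\cell_1$ and $\cell_2$ (which is $\cell$ itself when $\cell_1,\cell_2$ are the quadtree children of $\cell$). For any level $\ell\ge\mathrm{level}(\cell_1)$, the cell of $G_\ell$ containing a point of $\cell_1$ is just the cell of $G_\ell$ containing $\cell_1$ (since $\cell_1$ fits inside one such cell), and likewise for $\cell_2$; hence the least common ancestor of a point of $\cell_1$ and a point of $\cell_2$ is $\cell''$ regardless of the two points, so every edge of $X\times Y$ is of level $\mathrm{level}(\cell'')$. Next I would check that the subcell decomposition of $\cell''$ restricted to $\cell_1$ is a coarsening of the subcell decomposition of $\cell_1$ (and symmetrically for $\cell_2$). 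Comparing the two stopping rules (a)--(b) of Section~\ref{subsec:division-subcells} at a cell $u$ strictly inside $\cell_1$: $\cell''$ halts at $u$ when its side length reaches $\mu_{\mathrm{level}(\cell'')}$ or falls to $(\eps/144)\ell_{\min}(C,u)$, where $C$ is the child of $\cell''$ containing $u$; but $\mu_{\mathrm{level}(\cell'')}\ge\mu_{\mathrm{level}(\cell_1)}$, and $\ell_{\min}(C,u)\ge\ell_{\min}(C',u)$ for $C'$ the child of $\cell_1$ containing $u$ (because $u\subseteq C'\subseteq\cell_1\subseteq C$), so whenever $\cell''$'s recursion continues past $u$ so does $\cell_1$'s. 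Therefore $\xi_X$ lies inside a single $\psi_X\in\subcells{\cell''}$ and $\xi_Y$ inside a single $\psi_Y\in\subcells{\cell''}$; these lie in different children of $\cell''$, so all edges of $X\times Y$ map to the unique ordered pair $(\Psi,\Psi')\in\wspd_{\cell''}$ determined by $(\psi_X,\psi_Y)$, giving $d_Q(a,b)=(\ell_{\max}(\Psi,\Psi'))^2$ for every edge of $X\times Y$.

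For the common matching status I would argue by the type (free, internal, or boundary) of the clusters $X$ and $Y$. If $X$ is a free cluster its vertices are unmatched; if $X$ is an internal cluster of $\cell_1$ its vertices are matched inside $\cell_1$ and hence not to vertices of $\cell_2$; symmetrically for $Y$. In each of these subcases every edge of $X\times Y$ is non-matching, and also non-local --- a local edge has both endpoints matched by edges of a single common class, which a free vertex, or a vertex matched within $\cell_1$, can never share with a vertex of $\cell_2$. In the remaining subcase $X=A_{\xi_X}^{(\Psi_1,\Psi_2)}$ and $Y=B_{\xi_Y}^{(\Phi_1,\Phi_2)}$ are both boundary clusters, and I would establish the dichotomy: either no edge of $X\times Y$ lies in $M$ (so all are non-matching and non-local), or some edge of $X\times Y$ lies in $M$, in which case $X$ and $Y$ are the two subcell-restrictions of a single class $k$ (forcing $(\Phi_1,\Phi_2)=(\Psi_2,\Psi_1)$) and every edge of $X\times Y$ lies in $M$ and is local. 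In either subcase all edges of $X\times Y$ share the same matching status and locality, hence the same net-cost and the same orientation.

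The step I expect to be the main obstacle is precisely this last (both-boundary) case: one must rule out a \emph{mixed} configuration in which $X$ contains both a vertex matched into $Y$ and a vertex matched elsewhere, which would simultaneously place a matching and a non-matching edge inside $X\times Y$. Excluding it requires combining the one-class-per-WSPD-pair property with the fact that the boundary clusters split a class's endpoints exactly by subcell, carefully enough to identify the class-$k$ matching edges meeting $X\cup Y$ with $X\times Y$ itself. The remaining ingredients --- the grid structure forcing a common level, the subcell-coarsening forcing a common representative pair, and the free/internal subcases --- are routine.
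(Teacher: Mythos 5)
The paper offers no proof of this lemma (it is dismissed as straightforward), so your argument stands on its own. Most of it is sound: the observation that all edges of $X\times Y$ share one least common ancestor $\cell''$, the coarsening argument showing that each subcell of $\cell_1$ lies inside a single subcell of $\cell''$ (so that all edges of $X\times Y$ map to one ordered WSPD pair and hence share one value of $d_Q$ and of $\mu^2$), and the free/internal cases are all correct; the coarsening step is the genuinely non-trivial geometric content and your comparison of the two stopping rules is the right way to get it.

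The gap is in the boundary--boundary case, and it is not quite the one you flagged. The dichotomy you propose --- either no edge of $X\times Y$ lies in $M$, or \emph{every} edge of $X\times Y$ lies in $M$ --- is false for cardinality reasons alone: if $X=\{a_1,a_2\}$ and $Y=\{b_1,b_2\}$ with $(a_1,b_1),(a_2,b_2)\in M$, then $(a_1,b_2)\in (X\times Y)\setminus M$; since $M$ is a matching, ``every edge of $X\times Y$ is in $M$'' forces $|X|=|Y|=1$. The dichotomy that actually holds is ``all edges of $X\times Y$ are local, or all are non-local,'' and it needs no delicate argument: membership of $a$ in the boundary cluster $A_{\xi_X}^{(\Psi_1,\Psi_2)}$ already pins down the class of $a$'s matching edge, likewise for $b\in B_{\xi_Y}^{(\Phi_1,\Phi_2)}$, and $(a,b)$ is local iff those two classes coincide --- a condition depending only on the pair $(X,Y)$, not on $a$ and $b$. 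In the all-non-local branch no edge of $X\times Y$ can be a matching edge (every matching edge is local to its own class), so all edges are non-matching, oriented from $B$ to $A$, with equal net-cost $d_Q+\mu^2$, and the lemma holds. In the all-local branch, however, $X\times Y$ genuinely contains both matching edges (oriented $A$ to $B$, net-cost $-d_Q$) and non-matching local edges (oriented $B$ to $A$, net-cost $+d_Q$), so the lemma's literal conclusion about a common orientation and net-cost fails there; what survives --- and what the bridge-edge construction actually uses --- is that all edges of a local pair share the same $d_Q$-value and the same local status, so the cost assigned to the local bridge edge via any one matching edge is well defined. Your proof should target that corrected statement rather than the all-in-$M$ dichotomy.
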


\begin{figure}
    \centering
    \includegraphics[width = \textwidth/2]{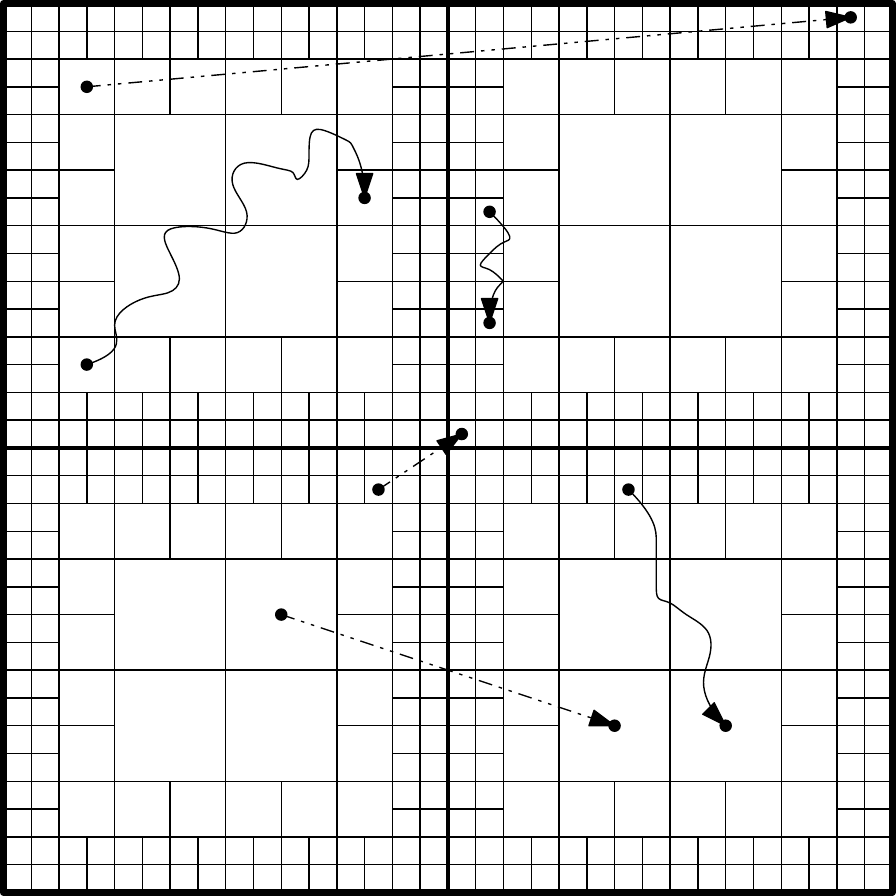}
    \caption{Examples of edges of $E_\cell$. Internal edges (solid) represent shortest paths between clusters within a child of $\cell$. Bridge edges (dashed) are between clusters in two different children of $\cell$.}
    \label{fig:associatedgraph}
\end{figure}
\subsection{Edges of the associated graph}
\label{subsec:edgesofassociatedgraph}
Given a full cell $\cell$, we already defined the vertex set $V_{\cell}$ for the associated graph.  We have the following types of edges in the edge set $E_{\cell}$ of the associated graph.
\begin{itemize}
    \item {\it Internal Edges:} For any child $\cell'$ of $\cell$, we add edges from $X$ to $Y$ provided $X \in \X_{\cell'}^\downarrow$ is an entry cluster of $\cell'$ and $Y \in \X_{\cell'}^\uparrow$ is an exit cluster of $\cell'$.
    \item {\it Bridge Edges:} For any children $\cell' \neq \cell''$ of $\cell$, for any two clusters $X$ and $Y$ where $X \in \X_{\cell'}$ and $Y \in \X_{\cell''}$, suppose $X \in \B_{\cell'}$ and $Y\in \A_{\cell''}$. We add an edge directed from $Y$ to $X$ (resp. $X$ to $Y$) if, for every edge $(x,y) \in X\times Y$, $(x,y)$ is a local (resp. non-local) edge. We continue to refer to such edges of the associated graph as local (resp. non-local) edges.
\end{itemize}

\paragraph{Bridge edge cost:} Note that for any local bridge edge from cluster $X$ to $Y$ there is at least one matching edge say $(x,y) \in X\times Y$. We set the cost of $(X,Y)$, denoted by $\phi(X,Y)$ to $\phi(x,y)$. For a non-local bridge edge $(X,Y)$, every edge $(x,y)\in (X\times Y)$ has the same net-cost, which defines the net-cost of $(X,Y)$, i.e., $\phi(X,Y) = \phi(x,y)$. Next, we describe the cost of an internal edge.

\paragraph{Internal edge costs:} For any child $\cell'$ of $\cell$, and any internal edge $(X,Y) \in (\X^\entry_{\cell} \times \X^\exit_{\cell})$ in $E_\cell$, we define its \emph{projection} $P(X,Y)$. If $\cell'$ is sparse, then $P(X,Y)$ is a minimum net-cost path in $\res{\cell'}$ from any $x \in X$ to any $y \in Y$. Otherwise, $\cell'$ is full, and the projection $P(X,Y)$ is a minimum net-cost path through $\mathcal{AG}_{\cell'}$ from any $X' \in D(X)$ to any $Y' \in D(Y)$. In either case, the net-cost of $(X,Y)$ is equal to the net-cost of its projection; i.e., $\phi(X,Y) = \phi(P(X,Y))$. The following lemma, which follows from a simple induction on the recursive definition of projection, states that any internal edge $(X,Y) \in (\X^\entry_{\cell'} \times \X^\exit_{\cell'})$ corresponds to a minimum net-cost path from $X$ to $Y$ in $\res{\cell'}$.
\begin{lemma}
\label{lem:projectfull}
For any $u,v \in \cell$ let $\Pi_{u,v,\cell}$ be a minimum net-cost alternating path in $\res{\cell}$ from $u$ to $v$.  For any internal edge $(X, Y) \in (\X_\cell^\downarrow, \X_\cell^\uparrow)$, consider $(x,y) = \argmin_{(x',y') \in X \times Y}\phi(\Pi_{x',y', \cell})$. Then $\phi(X,Y) = \phi(\Pi_{u,v,\cell})$.
\end{lemma}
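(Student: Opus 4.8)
\emph{Proof proposal.} The plan is to argue by induction on the height of $\cell$ in the active tree $\mathcal{T}_{\cell^*}$, mirroring the recursion in the definition of the projection $P(X,Y)$; throughout I read the endpoints ``$u,v$'' in the statement as the minimizing pair $(x,y)$, so the goal is to show $\phi(X,Y)=\min_{x\in X,\,y\in Y}\phi(\Pi_{x,y,\cell})$. For the base case $\cell$ is a sparse (leaf) cell, so $\mathcal{AG}_\cell=\res{\cell}$ and, by definition, $P(X,Y)$ is already a minimum net-cost path in $\res{\cell}$ from some $x\in X$ to some $y\in Y$. Since every path in a residual graph is alternating, $\phi(X,Y)=\phi(P(X,Y))=\min_{x,y}\phi(\Pi_{x,y,\cell})$, which is exactly the claim. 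Before the inductive step I would record the auxiliary fact that, since $M,y(\cdot)$ is $Q$-feasible, there is no alternating cycle of negative net-cost in $\res{M}$: applying Lemma~\ref{lem:dualsslacknetcost} with coinciding endpoints shows that any alternating closed walk has net-cost equal to a sum of (non-negative) slacks.

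For the inductive step $\cell$ is full, the vertices of $\mathcal{AG}_\cell$ are the clusters of the children of $\cell$, its internal edges lie within a single child, its bridge edges run between two children, and $P(X,Y)$ is a minimum net-cost path in $\mathcal{AG}_\cell$ from some $X'\in D(X)$ to some $Y'\in D(Y)$. I would prove the two inequalities separately. For $\min_{x,y}\phi(\Pi_{x,y,\cell})\le\phi(X,Y)$, I expand $P(X,Y)$ into an alternating walk in $\res{\cell}$: each internal edge of a child $\cell'$ is replaced by the minimum net-cost alternating path in $\res{\cell'}$ supplied by the induction hypothesis applied to $\cell'$, and each bridge edge between children $\cell',\cell''$ is replaced by an honest residual edge of equal net-cost and orientation, which exists by Lemma~\ref{lem:sim}. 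Crucially, Lemma~\ref{lem:sim} makes the net-cost of a bridge edge independent of which representative point of each cluster is used, so the bridge edges can be rerouted to share endpoints with the adjacent expanded internal paths, producing a genuine alternating walk of net-cost $\phi(P(X,Y))=\phi(X,Y)$ with endpoints in $X'\subseteq X$ and $Y'\subseteq Y$. Shortcutting this walk using the no-negative-alternating-cycle property yields a simple alternating path in $\res{\cell}$ from a point of $X$ to a point of $Y$ of net-cost at most $\phi(X,Y)$, which gives the inequality.

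For the reverse inequality $\phi(X,Y)\le\min_{x,y}\phi(\Pi_{x,y,\cell})$, I take the minimizing pair $(x,y)$ and the path $\Pi=\Pi_{x,y,\cell}$, decompose it into maximal sub-paths each lying inside a single child of $\cell$, joined by residual edges running between children. By the entry/exit characterization established in~\cite{sa_stoc12}, each such sub-path goes from a point of an entry cluster of its child to a point of an exit cluster of that child, and is an alternating path in that child's residual graph, so by the induction hypothesis it is no cheaper than the corresponding internal edge of $\mathcal{AG}_\cell$; each joining residual edge is one of the edges a bridge edge of $\mathcal{AG}_\cell$ represents, hence of equal net-cost. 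Concatenating these edges produces an alternating walk in $\mathcal{AG}_\cell$ from a vertex of $D(X)$ to a vertex of $D(Y)$ of net-cost at most $\phi(\Pi)$; the expansion argument of the previous paragraph also shows $\mathcal{AG}_\cell$ has no negative net-cost cycle, so this walk can be shortcut to a path, and therefore $\phi(X,Y)=\phi(P(X,Y))\le\phi(\Pi)=\min_{x,y}\phi(\Pi_{x,y,\cell})$. Combining the two inequalities proves the lemma.

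I expect the main obstacle to be the stitching step inside the first inequality: a naive expansion of a path of $\mathcal{AG}_\cell$ only yields a \emph{walk} in $\res{\cell}$ whose pieces need not meet at common endpoints across the compressed internal/bridge edges, and one must invoke the cluster-uniformity of Lemma~\ref{lem:sim} (to move bridge-edge endpoints freely within a cluster at no change in net-cost) together with the absence of negative net-cost alternating cycles (to discard the resulting detours) in order to obtain a bona fide simple alternating path of the same net-cost. The remaining ingredients — the parent--child cluster relations, the entry/exit property of~\cite{sa_stoc12}, and the induction bookkeeping over $\mathcal{T}_{\cell^*}$ — are routine.
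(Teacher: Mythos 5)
Your proposal is correct and is precisely the argument the paper intends: the paper offers no written proof beyond the remark that the lemma ``follows from a simple induction on the recursive definition of projection,'' and your induction — base case at sparse leaves, expansion/contraction between paths in $\mathcal{AG}_{\cell}$ and alternating walks in $\res{\cell}$ via Lemma~\ref{lem:sim} and the entry/exit property, with shortcutting justified by the absence of negative net-cost alternating cycles (cf.\ Lemma~\ref{lem:dualsslacknetcost}) — is exactly that induction, carried out in more detail than the paper provides. The stitching issue you isolate is the right thing to worry about, and your resolution via the cluster-uniformity of bridge-edge costs is sound.
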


\subsection{Compressed feasibility}
\label{subsec:compressedfeasibility}
Consider any active cell $\cell^*$ of the quadtree and the active tree $\mathcal{T}_{\cell^*}$ rooted at $\cell^*$. Consider an assignment of dual weights $y(\cdot)$ to the vertices of $V_{\cell}$ for all cells $\cell \in \mathcal{T}_{\cell^*}$. We say that $M_{\cell^*}$ along with these dual weights are compressed feasible if for every cell $\cell$ in $\mathcal{T}_{\cell^*}$.
\begin{itemize}
    \item[(C1)] For every edge directed from $X$ to $Y$ in $\mathcal{AG}_{\cell}$, 
    \begin{align*}
    |y(X)| - |y(Y)| &\leq \phi(X,Y),\\
    |y(X)| - |y(Y)| &= \phi(X, Y) \text{\quad if } (X, Y) \text{ is local with respect to } M_{\cell}.
\end{align*}

    \item[(C2)] If $\cell$ is full, then for each exit cluster $X \in \X^\exit_{\cell}$, for any $X' \in D(X)$,  $|y(X')| \leq |y(X)|$.
\end{itemize}
Note that if $\cell$ is sparse, then we are at a leaf node of the active tree and only condition (C1) applies. Condition (C1) implies that $M_{\cell}$ and $y(\cdot)$ are $Q$-feasible when $\cell$ is sparse. 

We define slack of any edge (bridge or internal) directed from $X$ to $Y$, denoted by $s(X,Y)$, as $\phi(X,Y)-|y(X)|+|y(Y)|$. From (C1), it follows that the slack of any edge is non-negative. We define a slack-weighted associated graph, denoted by $\mathcal{AG}_{\cell}'$, to be identical to the associated graph $\mathcal{AG}_{\cell}$, but where the weight of any edge $(X,Y)$ is its slack $s(X,Y)$.

We introduce two procedures, namely \sync\ and \construct. Both these procedures will be used to support \init, \genduals, \search\ and \augment\ operations.

\subsection{The \construct\ procedure}
\label{subsec:construct}
In this section,  we present a procedure called \construct, which will be used to compute the internal edges of an associated graph. The \construct\ procedure accepts a cell $\cell'$, such that $\cell'\neq \cell^*$ and $\mathcal{AG}_{\cell'}$ has already been computed, along with dual weights $y(\cdot)$ for all vertices of $V_{\cell'}$. It assumes that $M_{\cell'}, y(\cdot)$ satisfy the compressed feasibility conditions. Let $\cell$ be the parent of $\cell'$ in $\mathcal{T}_{\cell^*}$. The procedure computes the internal edges of $\X^\entry_{\cell'} \times \X^\exit_{\cell'}$ in $\mathcal{AG}_{\cell}$. It also assigns dual weights to the vertices of $V_{\cell}$ that correspond to clusters generated for the subcells of $\cell'$.

We describe the process for building the internal edges going out of each cluster $X \in\X^\entry_{\cell'}$. We add an additional vertex $s$ to $\mathcal{AG}_{\cell'}'$ and add an edge from $s$ to each cluster $X' \in D(X)$ with a cost equal to $|y(X')|$. After creating this \emph{augmented associated graph}, we simply execute Dijkstra's algorithm from $s$ to find the shortest path distance from $s$ to every node in $V_{\cell'}$. Let $d_v$ denote the shortest path distance from $s$ to $v$ in the augmented associated graph. For each exit cluster $Y \in \X^\exit_{\cell'}$, we create an internal edge from $X$ to $Y$ in $\mathcal{AG}_{\cell}$ and set its cost to be $\min_{Y'\in D(Y)} (d_{Y'} - |y(Y')|)$. We repeat this procedure for each entry cluster.      

This completes the description how to construct the internal edges of $\mathcal{AG}_\cell$ in $\cell'$. We next assign dual weights to each cluster $X \in \X_{\cell'}$ as follows: If $X$ is an entry cluster, let $X' = \argmin_{Y \in D(X)}|y(Y)|$. Otherwise, $X$ is an exit cluster, and we let $X' = \argmax_{Y \in D(X)}|y(Y)|$. In either case, we set $y(X) \leftarrow y(X')$.

The following lemma shows that the \construct\ procedure correctly assigns the net-cost of edges in $\X^\entry_{\cell'} \times \X^\exit_{\cell'}$.
\begin{lemma}
\label{lem:correctcost}
Let $X \in \X^\entry_{\cell'}$ and $Y \in \X^\exit_{\cell'}$ be a pair of clusters of $\cell'$ that form an internal edge $(X,Y) \in E_{\cell}$. Then the \construct\ procedure ensures that $\phi(X,Y) = \phi(P(X,Y)))$.
\end{lemma}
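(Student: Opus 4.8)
The plan is to show that the Dijkstra computation carried out by \construct\ on the slack-weighted augmented associated graph of $\cell'$ returns exactly the net-cost of the projection $P(X,Y)$. The single key fact driving the argument is that, along any directed path of an associated graph, the total slack and the net-cost differ only by the telescoping difference of the dual weights at the two endpoints.

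First I would record the associated-graph analogue of Lemma~\ref{lem:dualsslacknetcost}: since, by definition, $s(U,W)=\phi(U,W)-|y(U)|+|y(W)|$ for every edge $(U,W)$ of $\mathcal{AG}_{\cell'}$, telescoping along any directed path $\Pi$ from a cluster $U$ to a cluster $W$ gives $\phi(\Pi)=|y(U)|-|y(W)|+\sum_{e\in\Pi}s(e)$. Hence, for fixed endpoints $U,W$, a path minimizes net-cost if and only if it minimizes total slack, and the minimum net-cost equals $|y(U)|-|y(W)|$ plus the shortest-path distance from $U$ to $W$ in the slack-weighted graph $\mathcal{AG}'_{\cell'}$. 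I would also note that Dijkstra is applicable: compressed-feasibility condition (C1) makes every slack non-negative, and the edges $(s,X')$ added by \construct\ for $X'\in D(X)$ carry the non-negative weight $|y(X')|$, so the distances $d_v$ are genuine shortest-path distances.

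Next I would identify the computed distances $d_{Y'}$. The only out-neighbours of the added source $s$ are the clusters of $D(X)$, and $s$ has no incoming edges, so every $s$-to-$Y'$ path begins with an edge $s\to X'$ for some $X'\in D(X)$ and then stays inside $\mathcal{AG}'_{\cell'}$; consequently $d_{Y'}=\min_{X'\in D(X)}\bigl(|y(X')|+\mathrm{dist}_{\mathcal{AG}'_{\cell'}}(X',Y')\bigr)$, where $\mathrm{dist}_{\mathcal{AG}'_{\cell'}}$ denotes minimum total slack over directed paths. Substituting the telescoping identity, $|y(X')|+\mathrm{dist}_{\mathcal{AG}'_{\cell'}}(X',Y')=|y(Y')|+\min\{\,\phi(\Pi):\Pi\text{ a directed path in }\mathcal{AG}_{\cell'}\text{ from }X'\text{ to }Y'\,\}$, and therefore
\[ d_{Y'}-|y(Y')|=\min_{X'\in D(X)}\ \min_{\Pi:\ X'\to Y'}\ \phi(\Pi). \]
Taking the minimum over $Y'\in D(Y)$ and comparing with the definition of the projection $P(X,Y)$ as a minimum net-cost path through $\mathcal{AG}_{\cell'}$ from some $X'\in D(X)$ to some $Y'\in D(Y)$ --- with the convention that when $\cell'$ is sparse we have $\mathcal{AG}_{\cell'}=\res{\cell'}$ and $D(\cdot)$ denotes the contained points, so the same identity applies verbatim --- we obtain $\min_{Y'\in D(Y)}(d_{Y'}-|y(Y')|)=\phi(P(X,Y))$. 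Since \construct\ sets $\phi(X,Y)=\min_{Y'\in D(Y)}(d_{Y'}-|y(Y')|)$, the lemma follows.

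I expect the only delicate point to be making the multi-source Dijkstra bookkeeping airtight: that attaching $s$ to all of $D(X)$ with weights $|y(X')|$ and then subtracting $|y(Y')|$ from $d_{Y'}$ really yields $\min_{X',\Pi}\phi(\Pi)$ rather than a quantity contaminated by an extra dual-weight term, and that the sparse-cell case of $\cell'$ (where the ``associated graph'' is just $\res{\cell'}$) is covered by the same computation. Once the telescoping identity and the $D(\cdot)$ conventions are fixed, the remaining steps are routine.
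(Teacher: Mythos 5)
Your proposal is correct and follows essentially the same route as the paper's proof: both rest on the telescoping identity relating a path's net-cost to its total slack plus the endpoint dual-weight difference, and both then identify the quantity $\min_{Y'\in D(Y)}(d_{Y'}-|y(Y')|)$ computed by \construct\ with the minimum net-cost over all paths from $D(X)$ to $D(Y)$. Your treatment is somewhat more explicit than the paper's (spelling out why Dijkstra applies and how the sparse-cell case folds in), but the argument is the same.
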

\begin{proof}
Consider any $X' \in D(X)$ and $Y' \in D(Y)$; let $P_{X',Y'}$ be the minimum net-cost path from $X'$ to $Y'$ in $\mathcal{AG}_{\cell'}$. For any edge $(u,v) \in P_{X',Y'}$, from the definition of slack, we have $|y(u)| - |y(v)| + s(u,v) = \phi(u,v)$. When summing over all $(u,v) \in P_{X',Y'}$, we get that each vertex of the path except the first or last vertex has a net-contribution of $0$ to the dual weight magnitude total. From the definition of projection,
\begin{equation}
\label{eq:slacknetcostprojection}
|y(X')| - |y(Y')| + \sum_{(u,v) \in P_{X',Y'}} s(u,v) = \sum_{(u,v) \in P_{X',Y'}} \phi(u,v).
\end{equation}

When executing a Dijkstra search from $X$, the \construct\ procedure assigns,

\begin{align*}
    \phi(X,Y) &= \min_{Y' \in D(Y)} d_{Y'} - |y(Y')| \\
    &=  \min_{X' \in D(X), Y' \in D(Y)}|y(X')| - |y(Y')| + \sum_{(u,v) \in P_{X',Y'}} s(u,v)\\
    &= \min_{X' \in D(X), Y' \in D(Y)}\sum_{(u,v) \in P_{X',Y'}}\phi(u,v)\\
    &= \phi(P(X,Y)).
\end{align*}
Therefore, the \construct\ procedure correctly computes $\phi(X,Y)$.
\end{proof}
In the following Lemma, we argue that the internal edges of $\cell'$ in $\mathcal{AG}_{\cell}$ are feasible after \construct\ is called on $\cell'$.
\begin{lemma}
\label{lem:internalfeasconstruct}
After \construct\ is called on a cell $\cell'$ with parent $\cell$, then, for every internal edge $(X,Y) \in \X^\entry_{\cell'} \times \X^\exit_{\cell'}$ of $E_\cell$, we have, $|y(X)| - |y(Y)| \leq \phi(X,Y)$.
\end{lemma}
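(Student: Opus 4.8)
The plan is to work directly from the way the \construct\ procedure defines the cost $\phi(X,Y)$ of the new internal edge and assigns the dual weights $y(X)$ and $y(Y)$ to the parent clusters, using only two structural facts that are available at this point: the slack-weighted graph $\mathcal{AG}_{\cell'}'$ has non-negative edge weights (this is exactly condition (C1) for the compressed-feasible pair $M_{\cell'}, y(\cdot)$ that \construct\ receives as input; when $\cell'$ is sparse this is just $Q$-feasibility of $\res{\cell'}$), and the min/max rule by which \construct\ propagates dual weights upward. Since $X \in \X^\entry_{\cell'}$, the procedure sets $|y(X)| = \min_{Z \in D(X)}|y(Z)|$, and since $Y \in \X^\exit_{\cell'}$ it sets $|y(Y)| = \max_{Z \in D(Y)}|y(Z)|$.

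First I would lower bound the Dijkstra distances $d_{Y'}$ computed inside \construct. In the augmented graph, the source $s$ has out-edges only to the clusters $X' \in D(X)$, each of weight $|y(X')|$, and every other edge has non-negative weight by (C1); hence for each $Y' \in D(Y)$,
\[
d_{Y'} \;=\; \min_{X' \in D(X)}\Bigl(|y(X')| + \mathrm{dist}_{\mathcal{AG}_{\cell'}'}(X',Y')\Bigr) \;\ge\; \min_{X' \in D(X)}|y(X')| \;=\; |y(X)|,
\]
with the convention $d_{Y'} = +\infty$ when $Y'$ is unreachable from $s$. Then I would plug this into the cost that \construct\ assigns to the internal edge, $\phi(X,Y) = \min_{Y' \in D(Y)}\bigl(d_{Y'} - |y(Y')|\bigr)$, and combine with the rule for $y(Y)$:
\begin{align*}
\phi(X,Y) &= \min_{Y' \in D(Y)}\bigl(d_{Y'} - |y(Y')|\bigr) \;\ge\; \min_{Y' \in D(Y)}\bigl(|y(X)| - |y(Y')|\bigr)\\
&= |y(X)| - \max_{Y' \in D(Y)}|y(Y')| \;=\; |y(X)| - |y(Y)|,
\end{align*}
which is precisely the desired inequality $|y(X)| - |y(Y)| \le \phi(X,Y)$.

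I do not expect a serious obstacle here: once the non-negativity of slacks and the min/max propagation rule are in hand, the statement is immediate. The only points requiring a little care are to check that the argument goes through verbatim in the sparse case (where $D(X)$ and $D(Y)$ consist of actual points of $\res{\cell'}$ rather than child clusters, but (C1) still reduces to ordinary $Q$-feasibility, so the slack weights remain non-negative and the min/max rule still applies), and to handle the degenerate situation where some or all $Y' \in D(Y)$ are unreachable from $s$ (then the corresponding terms are $+\infty$ and either do not affect the minimum or make $\phi(X,Y) = +\infty$, so the inequality holds trivially). A secondary sanity check is that \construct\ is indeed invoked on $\cell'$ before the internal edges of $\X^\entry_{\cell'} \times \X^\exit_{\cell'}$ in $E_\cell$ are used, which is guaranteed by the bottom-up order in which the associated graphs of the active tree are built.
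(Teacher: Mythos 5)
Your proof is correct and uses the same two ingredients as the paper's own argument (non-negativity of the slacks from (C1) and the min/max rule for propagating dual weights), merely phrased in terms of the Dijkstra distances $d_{Y'}$ rather than the projection path $P(X,Y)$ and equation \eqref{eq:slacknetcostprojection}; the inequalities are the same. Your side remarks about the sparse case and unreachable clusters are fine and do not change the argument.
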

\begin{proof}
The edge $(X,Y)$ has some projection $P(X,Y)$; let $X' \in D(X)$ (resp. $Y' \in D(Y)$) be the first (resp. last) vertex of $P(X,Y)$. From equation \eqref{eq:slacknetcostprojection} and the feasibility of $\mathcal{AG}_{\cell'}$, we have,
\[|y(X')| - |y(Y')| \leq \phi(P(X,Y)) = \phi(X,Y).\]
From the way the \construct\ procedure assigns dual weights to $X$ and $Y$, we have that $|y(X)| \leq |y(X')|$ and $|y(Y')| \leq |y(Y)|$. Therefore, $|y(X)| - |y(Y)| \leq \phi(X,Y)$.
\end{proof}

\paragraph{Efficiency of \construct:}
Next, we bound the time taken for a single call to \construct\ on a cell $\cell \in \mathcal{T}_{\cell^*}$. Assume that $\cell$ appears at level $j$. The \construct\ procedure executes a Dijkstra search from each of the $|\X_\cell| = \BigOT(\mu_j)$ clusters of $\cell$. If $\cell$ is full, then each Dijkstra search takes $\BigOT(|E_{\cell}|) = \BigOT(|V_{\cell}|^2) = \BigOT(\mu_j^2)$ time. If $\cell$ is sparse, then each Dijkstra search can be executed efficiently in $\BigOT(|\X_\cell| + |A_{\cell} \cup B_{\cell}|)$ time using the fact that the edges of $\res{M}$ outgoing from any vertex $v$ belong to only $\BigOT(1)$ different WPSD pairs; the same technique was used for the Hungarian search in Section \ref{sec:algorithm}. Since $\cell$ is sparse, $|A_\cell \cup B_\cell| \leq \mu_j^2$, and each Dijkstra search takes $\BigOT(\mu_j^2)$ time. The \construct\ procedure executes $|\X^\entry_\cell|=\BigOT(\mu_j)$ Dijkstra searches, and each Dijkstra search takes $\BigOT(\mu_j^2)$ time, so the total time taken by \construct\ is $\BigOT(\mu_j^3)$ for any cell of level $j$. This gives the following Lemma.
\begin{lemma}
\label{lem:constructefficiency}
Any execution of \construct\ on a cell $\cell$ of layer $j$ takes $\BigOT(\mu_j^3)$ time. Furthermore, if $\cell$ is sparse, the time taken can be bounded by $\BigOT(\mu_j(\mu_j + |A_{\cell}| + |B_{\cell}|))$.
\end{lemma}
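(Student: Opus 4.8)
The statement is a running-time accounting of the \construct\ procedure, so the plan is to (i) count how many shortest-path computations a single call performs, (ii) bound the cost of each one in the full and sparse cases separately, and (iii) multiply.

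Recall the shape of \construct\ when called on a cell $\cell$ at level $j$: it processes the entry clusters $X\in\X^\entry_\cell$ one at a time; for each such $X$ it forms the augmented, slack-weighted associated graph $\mathcal{AG}_{\cell}'$ by attaching a source $s$ joined to every $X'\in D(X)$ with weight $|y(X')|$, runs one Dijkstra computation from $s$, and then for every exit cluster $Y\in\X^\exit_\cell$ emits the internal edge $(X,Y)$ with cost $\min_{Y'\in D(Y)}(d_{Y'}-|y(Y')|)$; a final pass assigns a dual weight to each cluster of $\cell$. By the cluster-count bound from Section~\ref{subsec:associatedgraphvertices} a level-$j$ cell has $|\X_\cell|=\BigOT(\mu_j)$ clusters, so there are $\BigOT(\mu_j)$ Dijkstra computations and the dual-weight pass takes $\BigOT(\mu_j)$ time overall.

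I would bound a single Dijkstra computation by cases. If $\cell$ is full, $\mathcal{AG}_\cell$ has $|V_\cell|=\BigOT(\mu_j)$ vertices, and its bridge edges together with the internal edges inherited from the already-built children of $\cell$ number $\BigOT(|V_\cell|^2)$, so $|E_\cell|=\BigOT(\mu_j^2)$ and Dijkstra costs $\BigOT(|E_\cell|)=\BigOT(\mu_j^2)$; emitting the $\BigOT(\mu_j)$ internal edges and building the augmented graph (here $|D(X)|=\BigOT(1)$) are dominated. If $\cell$ is sparse, $\mathcal{AG}_\cell=\res{\cell}$ with vertex set $A_\cell\cup B_\cell$ and possibly $\Theta(|A_\cell\cup B_\cell|^2)$ edges; here I would invoke the WSPD trick already used for the global Hungarian search in Section~\ref{sec:algorithm} --- the residual edges out of any vertex fall into only $\BigOT(1)$ WSPD pairs, and all edges of one pair share the same $d_Q$ value, so Dijkstra's next relaxation can be found in amortized $\BigOT(1)$ time --- so a single search (together with emitting edges and attaching $s$ to $D(X)$) costs $\BigOT(|\X_\cell|+|A_\cell|+|B_\cell|)=\BigOT(\mu_j+|A_\cell|+|B_\cell|)$. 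Since $\cell$ is sparse, $|A_\cell\cup B_\cell|\le\mu_j^2$ (for $\cell\in G_0$ the estimate is analogous), so this is again $\BigOT(\mu_j^2)$.

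Multiplying the $\BigOT(\mu_j)$ searches by the $\BigOT(\mu_j^2)$ per-search cost gives $\BigOT(\mu_j^3)$, and adding the $\BigOT(\mu_j)$ dual-weight pass leaves it at $\BigOT(\mu_j^3)$; this proves the first bound. For the sharper sparse bound I would not discard the dependence on $|A_\cell|,|B_\cell|$: keeping the per-search cost as $\BigOT(\mu_j+|A_\cell|+|B_\cell|)$ and multiplying by the $\BigOT(\mu_j)$ searches yields $\BigOT(\mu_j(\mu_j+|A_\cell|+|B_\cell|))$. The one delicate point is the sparse case: the implicit residual graph on $A_\cell\cup B_\cell$ is dense, and it is essential that the WSPD gives only $\BigOT(1)$ edge types per vertex so that each Dijkstra search is near-linear rather than near-quadratic in $|A_\cell\cup B_\cell|$; the rest is routine bookkeeping against the $\BigOT(\mu_j)$ cluster count, the only notational care being to keep straight that \construct\ is invoked on $\cell$, reads the already-computed associated graphs of $\cell$'s children, and writes internal edges into the associated graph of $\cell$'s parent.
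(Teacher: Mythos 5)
Your proof is correct and follows essentially the same route as the paper: count the $\BigOT(\mu_j)$ Dijkstra searches (one per entry cluster), bound each at $\BigOT(|E_\cell|)=\BigOT(\mu_j^2)$ in the full case and at $\BigOT(\mu_j+|A_\cell|+|B_\cell|)$ in the sparse case via the $\BigOT(1)$-WSPD-pairs-per-vertex trick, and multiply. The only difference is that you spell out the bookkeeping (augmented source, edge emission, dual-weight pass) in slightly more detail than the paper does.
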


\ignore{
\begin{lemma}
Suppose we are given a compressed feasible matching $M$ and $y(\cdot)$ in an active cell $\cell^* \in G_i$. Now suppose we update the matching $M$ to $M'$ and the dual assignments from $y(\cdot)$ to $y'(\cdot)$ with the following properties. For any cell $\cell \in \mathcal{T}_{\cell^*}$ and $\cell' \in D(\cell)$, every vertex $v \in V_{\cell'}$ has a dual assignment $y(v)$ such that
\begin{itemize}
    \item  $y'(v) \le y(v)$, and, 
    \item Every local bridge edge and internal edge of $\mathcal{AG}_{\cell'}$ satisfies compressed feasibility condition (C1). 
\end{itemize}
Then, applying construct for $\cell'$ will produce an a set of internal edges of $\cell'$ in $E_\cell$ and a dual assignment for their end points in $V_{\cell}$ such that (C1) and (C2) are satisfied.
\end{lemma}
}

\subsection{\sync\ Procedure}
\label{subsec:sync}
For an active cell $\cell^*$ and a compressed feasible matching $M_{\cell^*}$ along with a set of dual weights $y(\cdot)$, the \sync\ procedure takes the updated dual weights on clusters of $\X_{\cell}$ at any non-root cell $\cell \in \mathcal{T}_{\cell^*}$ and uses them to update the dual weights of $V_{\cell}$ such that the matching continues to be compressed feasible, and,
\begin{enumerate}[(T1)]
    \item For any entry cluster $X \in \X^\entry_\cell$, and for any $X' \in D(X)$, $|y(X')| \geq |y(X)|$.
    \item For any free or boundary cluster $X \in \X_\cell$, and for any $X' \in D(X)$, $y(X') = y(X)$.
\end{enumerate}

The \sync\ procedure consists of executing the following algorithm for each entry cluster $X \in \X_{\cell}^{\entry}$:
We create a new vertex $s$ and add an edge from $s$ to each vertex $X' \in D(X)$. We assign a weight $|y(X')|$ to the edge from $s$ to $X'$. Then, we execute Dijkstra's algorithm starting from $s$. Let $d_v$ be the shortest path distance from $s$ to $v$ as computed by Dijkstra's algorithm. For any vertex $v$ with $d_v < |y(X)|$, if $v \in \mathcal{B}_{\cell}$ we update the dual weight to $y(v) \leftarrow y(v)+|y(X)|-d_v$. Otherwise, $v \in \mathcal{A}_{\cell}$, and we update the dual weight $y(v) \leftarrow y(v) - |y(X)| + d_v$. Note that in both cases the magnitude of the dual weight increases by $|y(X)|-d_v$. The dual weight of every other vertex with $d_v \ge |y(X)|$ does not change. This completes the description of the algorithm initiated with respect to $X$.

Note that for any cluster $X' \in D(X)$ if  $|y(X')| \ge |y(X)|$, then the procedure will only further increase the magnitude of $y(X')$ and so, (T1) holds. If, on the other hand, $|y(X')| < |y(X)|$, then the length of the edge from $s$ to $X'$ is $|y(X')|$, and so the shortest path distance $d_{X'}\le |y(X')| < |y(X)|$. The magnitude of the dual weight of $X'$ increases by $|y(X)| - d_{X'} \ge |y(X)|- |y(X')|$ implying that the new magnitude of $y(X')$ is at least the magnitude of $y(X)$. Therefore (T1) holds for any entry cluster.  

After we execute this for all entry clusters, we perform the following dual adjustment: For any $X \in \X_{\cell}$, and for any $X' \in D(X)$, we will explicitly update the dual weight $y(X')$ to match $y(X)$. Therefore, (T2) holds after the execution of \sync.

To prove the correctness of \sync\ it remains to show that the updated dual weights satisfy compressed feasibility.

\begin{lemma}
\label{lem:syncproof}
For any compressed feasible matching $M_{\cell^*}$ on an active cell $\cell^*$, after the execution of $\sync$ at $\cell \in \mathcal{T}_{\cell^*}$ the updated dual weights of $V_{\cell}$ will continue to satisfy (C1) and (C2).
\end{lemma}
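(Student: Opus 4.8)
The plan is to verify that a call to $\sync$ at $\cell$ preserves every compressed-feasibility constraint it can possibly touch. The procedure rewrites only the dual weights of the clusters in $V_{\cell}$ --- the vertices of $\mathcal{AG}_{\cell}$ --- and leaves the input dual weights on $\X_{\cell}$ alone. Hence the only constraints at risk are (C1) for the bridge and internal edges of $\mathcal{AG}_{\cell}$, (C2) at $\cell$ (which bounds the descendants $D(X)\subseteq V_{\cell}$ of the unchanged exit clusters $X\in\X_{\cell}^{\exit}$), and (C2) at each child $\cell'$ of $\cell$ (which bounds the unchanged clusters of $V_{\cell'}$ by the possibly changed clusters of $\X_{\cell'}^{\exit}\subseteq V_{\cell}$). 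The justifications of (T1) and (T2) are as given in the procedure description, so it remains to check (C1) and (C2).

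For (C1) I would run the textbook Hungarian dual-adjustment argument on the slack-weighted graph $\mathcal{AG}_{\cell}'$, one round at a time. In the round for an entry cluster $X\in\X_{\cell}^{\entry}$, let $d_v$ be the Dijkstra distance from the added source $s$ (so $d_{X'}\le|y(X')|$ for every $X'\in D(X)$) and let $\Delta(v)=\max\{|y(X)|-d_v,\,0\}$ be the increase applied to $|y(v)|$. For an edge directed $u\to v$ of $\mathcal{AG}_{\cell}$, relaxation gives $d_v\le d_u+s(u,v)$, and a short case split on whether $d_u,d_v$ fall below $|y(X)|$ yields $\Delta(v)\ge\Delta(u)-s(u,v)$; hence the updated slack $s'(u,v)=s(u,v)-\Delta(u)+\Delta(v)\ge 0$, which is the inequality of (C1). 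If $(u,v)$ is local then $s(u,v)=0$, the edge has weight zero in $\mathcal{AG}_{\cell}'$, and the equality $|y(u)|-|y(v)|=\phi(u,v)$ is kept because the per-class uniformity of $d_Q$ --- by Lemma~\ref{lem:sim} all edges between the two clusters joined by a local edge are local with a common net-cost --- forces $\Delta(u)=\Delta(v)$. Feasibility thus survives each round, and since rounds only raise magnitudes it survives the whole loop. The concluding step re-sets $y(X')\gets y(X)$ for each free or boundary $X\in\X_{\cell}$ and $X'\in D(X)$; using the nesting identities \eqref{eq:subset} --- a descendant of a free (resp.\ boundary) cluster is free (resp.\ lies in the same class) --- together with Lemma~\ref{lem:sim}, one checks that the endpoints of any local edge are re-set consistently and hence that every incident edge of $\mathcal{AG}_{\cell}$ remains feasible.

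It remains to re-establish (C2). For a child $\cell'$ the inequality $|y(Z')|\le|y(Z)|$ with $Z\in\X_{\cell'}^{\exit}$ and $Z'\in D(Z)\subseteq V_{\cell'}$ has left side unchanged by $\sync$ at $\cell$, while $|y(Z)|$ is either raised by the loop or re-set to $|y(\hat Z)|$ for $Z$'s ancestor $\hat Z\in\X_{\cell}$, in which case the precondition (C2) at $\cell$ gives $|y(\hat Z)|\ge|y(Z)|_{\mathrm{old}}\ge|y(Z')|$; so $|y(Z)|$ never drops below $|y(Z')|$ and the inequality holds. For $\cell$ itself one must show, given $X\in\X_{\cell}^{\exit}$ and $X'\in D(X)$ with $|y(X')|\le|y(X)|$ beforehand, that the loop does not push $|y(X')|$ past the unchanged $|y(X)|$. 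I would argue this by telescoping slacks (Lemma~\ref{lem:dualsslacknetcost}): a rise of $|y(X')|$ in the round for an entry cluster $X_0$ equals $|y(X_0)|$ minus the slack-weighted distance from $D(X_0)$ to $X'$; extending that alternating path from $X'$ up to $X$ along a segment whose edges have non-negative slack by (C1), the total slack on the path from $D(X_0)$ to $X$ is at least $|y(X_0)|-|y(X)|$, and rearranging shows the updated magnitude of $X'$ stays at most $|y(X)|$ (the subsequent re-set for free/boundary $X'$ only lowers it further).

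The step I expect to be the main obstacle is this last one: preserving (C2) at $\cell$. The (C1) inequality is the standard primal--dual adjustment lemma and the child-(C2) part is essentially bookkeeping, but the cross-level domination between an exit cluster of $\cell$ and its descendants genuinely requires controlling how far a single Dijkstra round can push a vertex lying under an exit cluster and then absorbing that push into the slack that (C1) and the incoming (C2) guarantee along the linking alternating path; making those telescoping inequalities line up --- and, along the way, verifying that the concluding equalization never breaks a local-edge equality of (C1) --- is where the real care lies. Everything else follows from the uniformity of Lemma~\ref{lem:sim} and the nesting identities \eqref{eq:subset}.
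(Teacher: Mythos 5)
Your proposal follows the paper's proof in all essentials: a round-by-round primal--dual slack argument for (C1), special treatment of the local-edge equality, and a telescoping-slack argument for (C2) at $\cell$; you even spell out the check of (C2) at the children of $\cell$, which the paper leaves implicit. Two places need tightening, and in both the paper supplies the missing mechanism. First, to preserve the local-edge equality in (C1) you need $d_u=d_v$ for every zero-slack local edge $u\to v$ reached by the search; ``per-class uniformity of $d_Q$'' does not by itself force $\Delta(u)=\Delta(v)$. The paper derives $d_u=d_v$ structurally: in a full cell the local bridge edge is the unique edge into $v$, and in a sparse cell there is a reverse admissible local path $\langle v,u',v',u\rangle$, so $u$ and $v$ lie at slack-distance $0$ from each other in both directions. (Similarly, for the concluding equalization step the paper's point is that a free or boundary cluster has \emph{no} local edge inside $\mathcal{AG}_{\cell}$ and only one-sided incidences, so the monotone dual change can only increase slacks; consistency of local-edge endpoints is not the issue there.)

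Second, in the (C2)-at-$\cell$ step you cannot literally ``extend the alternating path from $X'$ up to $X$'': $X\in\X_{\cell}^{\exit}$ is a vertex of the \emph{parent's} associated graph, not of $\mathcal{AG}_{\cell}$, and $X'$ already lies in $D(X)$. The correct comparison is that the Dijkstra path from some $X_0'\in D(X_0)$ to $X'$ has net-cost at least $\phi(X_0,X)$, by the minimality of the projection $P(X_0,X)$ (Lemma~\ref{lem:correctcost}), and $\phi(X_0,X)\ge |y(X_0)|-|y(X)|$ by (C1) at the parent. Note also that the quantity which must be bounded below by $|y(X_0)|-|y(X)|$ is the net-cost $d_{X'}-|y(X')|$ of that path, not its total slack $d_{X'}-|y(X_0')|$; with the latter the final rearrangement does not close without additionally invoking (T1) and the old (C2). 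With these repairs your argument coincides with the paper's.
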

\begin{proof}
Assume that the claim holds prior to executing Dijkstra's algorithm from some vertex $X \in \X^\entry_\cell$. We argue that (C1) and (C2) continue to hold after executing the algorithm from $X$. Let $y(\cdot)$ (resp. $y'(\cdot)$) be the dual weights after (resp. before) executing this process with respect to $X$. Let $s(u,v)$ be the slacks with respect to $y(\cdot)$.  

We begin by arguing that if $(u,v)$ is local, and $d_u \neq \infty$, then $d_u = d_v$. First, consider if $\cell$ is full. Then, the only incoming edge to $v$ is via $u$. Therefore, if $u$ is reached during the Dijkstra search, $d_u = d_v$. Next, consider if $\cell$ is sparse. Then, if $(u,v) \in M$, $(u,v)$ is the only edge incoming to $v$, so clearly if $u$ is reached during the search, then $d_u = d_v$. Next, consider if $(u,v) \notin M$. Then, since $(u,v)$ is local, $u \in B$ must be matched to some vertex $v' \in A$ and $v \in A$ must be matched to some vertex $u' \in B$. Furthermore, there must be a non-matching local edge from $u'$ to $v'$. If $u$ is reached, then, since $(u,v)$ has $0$ slack, $d_v \leq d_u$. Furthermore, there is a path $\langle v, u', v', u\rangle$ from $v$ to $u$ consisting solely of local edges, which all have $0$ slack. Therefore, $d_u \leq d_v$, and we get that $d_u = d_v$.   

Next, we show (C1). Consider any edge $(u,v) \in \mathcal{AG}_{\cell}$. We consider four cases:
\begin{itemize}
    \item $d_u, d_v \geq |y(X)|$ : In this case, the dual weights of $u$ and $v$ are unchanged, so the edge remains feasible with respect to (C1).
    \item $d_u, d_v < |y(X)|$ : We first observe that, from the definition of shortest paths, $d_v - d_u \leq s(u,v)$. If $(u,v)$ is non-local, then,
    \begin{align*}
        |y'(u)| - |y'(v)| &= |y(u)| + (|y(X)| - d_u) - (|y(v)| + (|y(X)| - d_v))\\
        &= |y(u)| - |y(v)| + d_v - d_u\\
        &\leq |y(u)| - |y(v)| + s(u,v) \\
        &= \phi(u,v),
    \end{align*}
    and the edge $(u,v)$ satisfies (C1). Furthermore, if $(u,v)$ is local, then $d_v - d_u = 0 = s(u,v)$, and the same equation holds with equality, i.e., $|y'(u)| - |y'(v)| = \phi(u,v)$, which satisfies (C1).
    \item $d_u < |y(X)|$ and $d_v \geq |y(X)|$ : Since $d_u \neq d_v$, $(u,v)$ is non-local. We have:
    \begin{align*}
        |y'(u)| - |y'(v)| &= |y(u)| + (|y(X)| - d_u) - |y(v)|\\
        &\leq |y(u)| - |y(v)| + d_v - d_u\\
        &\leq |y(u)| - |y(v)| + s(u,v) \\
        &= \phi(u,v),
    \end{align*}
    and (C1) is satisfied.
    \item $d_u \geq |y(X)|$ and $d_v < |y(X)|$ : Since $d_u \neq d_v$, $(u,v)$ is non-local. We have:
    \begin{align*}
        |y'(u)| - |y'(v)| &= |y(u)| - (|y(v)| + (|y(X)| - d_v))\\
        &\leq |y(u)| - |y(v)|\\
        &\leq \phi(u,v),
    \end{align*}
    and (C1) is satisfied.
\end{itemize}
We conclude that (C1) is satisfied after executing the algorithm from $X$. Next, we argue (C2) continues to hold after executing the algorithm from $X$. Consider any exit cluster $Y \in \X^\exit_{\cell}$ and any $Y' \in D(Y)$. 
If $d_{Y'} \geq |y(X)|$, the claim trivially holds because $y(Y')$ did not change. Otherwise, consider the shortest path $P_{Y'}$ from $X$ to $Y'$ computed by Dijkstra's algorithm. Let $P_{X',Y'}$ be the path $P_{Y'}$ with the vertex $X$ removed; $X'$ is the first vertex after $X$ on $P_{Y'}$. 
From the feasibility of $(X,Y)$, and \eqref{eq:slacknetcostprojection} we have,
\begin{align*}
    |y(X)| - |y(Y)| &\leq \phi(X,Y)\\
                    &\leq \sum_{(u,v) \in P_{X',Y'}}\phi(u,v)\\
                    &= |y(X')| - |y(Y')| + \sum_{(u,v) \in P_{X',Y'}}s(u,v)\\
                    &= d_{Y'} - |y(Y')|.
\end{align*}
Combining this with the dual weight assignment of the procedure gives:
\[|y'(Y')| = |y(Y')| + |y(X)| - d_{Y'}\\
    \leq |y(Y)|,\]
and (C2) holds.

Finally, we argue that the final step of \sync, which assigns the dual weights of some free and boundary clusters to match their parent cluster's dual weight, does not violate (C1); it is clear that this operation does not violate (C2). Consider any free or boundary cluster $X \in \X_{\cell}$, and any child cluster $X' \in D(X)$. If $X$ is an exit cluster, then $X'$ is of type $A$, and there are no outgoing edges from $X'$ in $V_{\cell}$. For any incoming edge $(Z, X')$, the slack only increases because, from (C2), $|y(X')|$ only increases. Similarly, if $X$ is an entry cluster, then $X'$ is of type $B$, and there are no incoming edges to $X'$ in $V_{\cell}$. For any outgoing edge $(X', Z)$, the slack only increases because, from (T1), $|y(X')|$ only decreases. Therefore, the final step of \sync\ does not violate (C1) or (C2).
\end{proof}

\begin{lemma}
\label{lem:auguse}
Consider any internal edge $(X,Y) \in \X^\entry_{\cell} \times \X^\exit_{\cell}$ in the associated graph $\mathcal{AG}_{\hat{\cell}}$ of the parent $\hat{\cell}$ of $\cell$. Suppose $s(X,Y)$ is $0$. After execution of \sync\ on $\cell$, $P(X,Y)$ is an admissible path in $\mathcal{AG}_{\cell}$.
\end{lemma}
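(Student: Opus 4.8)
The plan is to reduce the statement to the telescoping identity relating net-costs, slacks, and dual-weight magnitudes along the projection path, combined with the monotonicity guarantees that \sync\ provides for entry clusters (property (T1)) and for exit clusters (condition (C2), re-established by Lemma~\ref{lem:syncproof}).

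First I would unpack the hypothesis $s(X,Y)=0$: by definition of slack in $\mathcal{AG}_{\hat{\cell}}$ it says $|y(X)| - |y(Y)| = \phi(X,Y)$, and by definition of the projection (see also Lemma~\ref{lem:correctcost}) we have $\phi(X,Y) = \phi(P(X,Y))$, where $P(X,Y)$ is a minimum net-cost path in $\mathcal{AG}_{\cell}$ (or in $\res{\cell}$ if $\cell$ is sparse) from some $X^\star \in D(X)$ to some $Y^\star \in D(Y)$. A small but important observation is that \sync\ modifies only dual weights and leaves the matching $M$ unchanged, so no net-cost $\phi(\cdot,\cdot)$ of an edge or path inside $\mathcal{AG}_\cell$, nor $\phi(X,Y)$, is affected by the call; in particular $P(X,Y)$ is still a path of $\mathcal{AG}_\cell$ with net-cost $\phi(X,Y)$ afterwards.

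Next, letting $y(\cdot)$ denote the dual weights \emph{after} \sync\ is run at $\cell$, I would apply the slack/net-cost identity (Lemma~\ref{lem:dualsslacknetcost} applied to $\mathcal{AG}_\cell$, equivalently \eqref{eq:slacknetcostprojection}) to the path $P(X,Y)$ with respect to these post-\sync\ weights. Using $\phi(P(X,Y)) = |y(X)| - |y(Y)|$ and rearranging gives
\[\sum_{(u,v) \in P(X,Y)} s(u,v) \;=\; \bigl(|y(X)| - |y(X^\star)|\bigr) + \bigl(|y(Y^\star)| - |y(Y)|\bigr).\]
Since $X$ is an entry cluster and $X^\star \in D(X)$, property (T1) of \sync\ gives $|y(X^\star)| \ge |y(X)|$, so the first bracket is $\le 0$; since $Y$ is an exit cluster and $Y^\star \in D(Y)$, Lemma~\ref{lem:syncproof} (which re-establishes (C2) at $\cell$ after the call) gives $|y(Y^\star)| \le |y(Y)|$, so the second bracket is $\le 0$. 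Hence the left-hand sum is $\le 0$. On the other hand, Lemma~\ref{lem:syncproof} also guarantees (C1) at $\cell$ after \sync, so every edge of $\mathcal{AG}_\cell$, in particular every edge of $P(X,Y)$, has nonnegative slack; a sum of nonnegative terms that is $\le 0$ must vanish termwise, so $s(u,v)=0$ for every edge of $P(X,Y)$, i.e.\ $P(X,Y)$ is admissible in $\mathcal{AG}_\cell$.

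The telescoping identity is routine; the part that takes care is lining up the two inequalities in the correct directions — (T1) says entry-cluster children have \emph{larger} magnitude than their parent, while (C2) says exit-cluster children have \emph{smaller} magnitude than their parent, and it is exactly this pairing (an entry cluster at the source end and an exit cluster at the sink end of $P(X,Y)$) that makes both brackets nonpositive. A secondary point to verify is the sparse case, where $D(X),D(Y)$ are sets of original points rather than clusters and $\mathcal{AG}_\cell = \res{\cell}$; here one notes that the proof of Lemma~\ref{lem:syncproof} establishes the (C2)-type inequality $|y(Y^\star)| \le |y(Y)|$ for exit clusters regardless of whether $\cell$ is sparse or full, so the argument carries over verbatim.
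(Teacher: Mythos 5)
Your proposal is correct and follows the paper's own argument essentially verbatim: unpack $s(X,Y)=0$ via $\phi(X,Y)=\phi(P(X,Y))$, apply the telescoping slack/net-cost identity to $P(X,Y)$, use (T1) at the entry end and (C2) at the exit end to force the total slack to be nonpositive, and invoke (C1) to conclude every edge slack vanishes. The extra remarks (that \sync\ leaves net-costs unchanged, and that the sparse case works the same way) are accurate but not needed beyond what the paper already does.
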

\begin{proof}
By its definition, $P(X,Y)$ is a path from some $X' \in D(X)$ to some $Y' \in D(Y)$. Note that the net-cost $\phi(X,Y) = \phi(P(X,Y))$. From our assumption that $(X,Y)$ is admissible and \eqref{eq:slacknetcostprojection},
\[
    |y(X)|-|y(Y)|=\phi(X,Y) = \phi(P(X,Y)) = |y(X')| - |y(Y')| + \sum_{(u,v) \in P(X,Y)}s(u,v).
\]

Since $X$ is an entry cluster, from (T1), $|y(X')| \ge |y(X)|$.  Since $Y$ is an exit cluster, from (C2), it follows that $|y(Y')| \le |y(Y)|$. Therefore,
\[\sum_{(u,v) \in P(X,Y)}s(u,v) = |y(X)| - |y(X')| + |y(Y')| - |y(Y)| \leq 0.\]
From Lemma \ref{lem:syncproof}, the edges of $P(X,Y)$ satisfy (C1), and every edge of $P(X,Y)$ has slack at least $0$. Therefore, every edge of $P(X,Y)$ must be admissible.
\end{proof}

By recursively applying the above lemma, we get the following.

\begin{cor}
\label{cor:auguse}
Let $\cell$ be a level $i$ cell. For any internal edge $(X,Y) \in \X^\entry_{\cell} \times \X^\exit_{\cell}$ in the associated graph $\mathcal{AG}_{\hat{\cell}}$ of the parent $\hat{\cell}$ of $\cell$, suppose $s(X,Y)$ is $0$. We can recursively apply \sync\ on all internal edges of $P(X,Y)$ to obtain its projection $\Pi_{u,v,\cell}$ with $u\in X$ and $v \in Y$. This projection will be an admissible path. For every vertex $p$ in $\Pi_{u,v,\cell}$, let $\mathbb{P}(p)$ be all the clusters for cells of level $i$ or lower that contain the point $p$. Then, for every $v' \in \mathbb{P}(p)$, $y(v') = y(p)$. 
\end{cor}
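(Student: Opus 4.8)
The plan is to prove this by strong induction on the level $i$ of $\cell$, using Lemma~\ref{lem:auguse} as the single-step engine and Lemma~\ref{lem:syncproof} together with the \sync\ guarantees (T1), (T2) and the compressed-feasibility condition (C2) to track dual weights. The core quantitative tool throughout is Lemma~\ref{lem:dualsslacknetcost}: along any alternating path the net-cost equals the difference of the endpoint dual-weight magnitudes plus the sum of slacks; since every slack is nonnegative and zero on admissible edges, an admissible path has $\sum s = 0$ and its net-cost is pinned down by its endpoints.

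For the base case $\cell$ is sparse, so $\mathcal{AG}_\cell = \res{\cell}$, $D(\cdot)$ of a cluster of $\cell$ is the set of points it contains, and the projection $P(X,Y)=\Pi_{u,v,\cell}$ is already an alternating path in $\res{\cell}$ from a point $u\in X$ to a point $v\in Y$; admissibility is exactly Lemma~\ref{lem:auguse}. The only clusters of level $\le i$ meeting a point $p$ of this path are the single cluster $X_p\in\X_\cell$ of $\cell$ containing $p$, so we must show $y(p)=y(X_p)$. At the two endpoints this is the tightness argument already inside the proof of Lemma~\ref{lem:auguse}: from $0\le\sum_{(u',v')\in P(X,Y)}s(u',v') = |y(X)|-|y(u)|+|y(v)|-|y(Y)|\le 0$ (using (T1) at the entry end and (C2) at the exit end) both bracketed differences vanish, forcing $|y(u)|=|y(X)|$, $|y(v)|=|y(Y)|$, hence equality of the signed weights. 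For an interior point $p$ (necessarily matched, hence flanked on the path by its partner and by a non-matching neighbour) I would walk inward from an endpoint, using at each step that the traversed edge is admissible and that, by Lemma~\ref{lem:sim}, its net-cost and orientation agree with the corresponding associated-graph edge between the clusters of $\cell$; (T2) disposes of free and boundary clusters directly, while an internal cluster is handled by combining the zero slack of its incoming matching edge with that of an outgoing local edge to chain $y(p)=y(X_p)$.

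For the inductive step $\cell$ is full. Lemma~\ref{lem:auguse} gives that $P(X,Y)$ is an admissible path in $\mathcal{AG}_\cell$, alternating between bridge edges and internal edges. Every internal edge $(X_1,Y_1)\in\X^\entry_{\cell_1}\times\X^\exit_{\cell_1}$ on $P(X,Y)$ has zero slack, so Lemma~\ref{lem:auguse} applies to the child $\cell_1$: after \sync\ on $\cell_1$ its projection $P(X_1,Y_1)$ is admissible in $\mathcal{AG}_{\cell_1}$, and since $\mathrm{level}(\cell_1)<i$ the induction hypothesis lets us recursively \sync\ its internal edges to unfold it to an admissible path in $\res{\cell_1}$ along which every point already agrees in dual weight with all clusters of level $\le\mathrm{level}(\cell_1)$ containing it. Splicing each internal edge's fully unfolded projection in place — the bridge edges of $\mathcal{AG}_\cell$ are literal residual-graph edges by Lemma~\ref{lem:sim} — yields $\Pi_{u,v,\cell}$, which is admissible because all of its pieces are. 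For the dual weights, fix a point $p$ on $\Pi_{u,v,\cell}$; it lies on the unfolded projection of some internal edge of a descendant $\cell_1$, so by induction every cluster of level $\le\mathrm{level}(\cell_1)$ containing $p$ has weight $y(p)$, and it remains only to extend this to the cluster $X_p^{(i)}\in\X_\cell$ at $\cell$. This is the base-case mechanism one level up: (T1)/(C2) relate $|y(X_p^{(i)})|$ to the weight of the level-$\mathrm{level}(\cell_1)$ cluster of $p$ by a one-sided inequality, and the global $\sum s=0$ identity for $\Pi_{u,v,\cell}$ forces that inequality to be an equality, closing the chain $y(p)=\dots=y(X_p^{(i)})$.

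The step I expect to be the main obstacle is exactly this upward propagation. The guarantee (T2) gives $y(X')=y(X)$ only for free and boundary clusters; for internal and for general entry/exit clusters we are handed only one-sided inequalities from (T1), (C2) and (C1). The delicate point is to show that, \emph{simultaneously} for every point on the (long, fully unfolded) path and every intermediate level, these inequalities are tight — equivalently, that all the nonnegative slacks along the path vanish — so that the telescoping chain of equalities from the point up to the level-$i$ cluster actually closes. Carrying this out requires identifying, for each $p$ and each level, which cluster type it falls into and which associated-graph edge(s) certify the inequality that must be tightened, then invoking Lemma~\ref{lem:dualsslacknetcost} at every level of the recursion rather than only at the root, together with the observation that the cascade of \sync\ calls (one per child cell visited by the path) modifies only the vertices of that child's own associated graph and hence does not disturb the weights already fixed at coarser levels.
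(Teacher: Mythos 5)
Your approach is exactly the one the paper intends: the paper's entire justification for this corollary is the single sentence ``By recursively applying the above lemma, we get the following,'' so your level-by-level recursive application of Lemma~\ref{lem:auguse}, splicing the unfolded projections of internal edges across the bridge edges, is the same argument --- you have simply written out in detail what the paper leaves implicit. The obstacle you flag, namely propagating $y(v')=y(p)$ upward through internal clusters where only the one-sided bounds (T1) and (C2) are available, is indeed the crux, and the paper offers no argument for it either; your proposed resolution (for a fixed point every containing cluster has the same type, so the chain of (T1)/(T2)/(C2) inequalities is monotone in one direction, and the vanishing of all slacks along the admissible projection, via Lemma~\ref{lem:dualsslacknetcost} applied to each internal edge's projection at each level, forces the chain tight at the entry/exit clusters the path actually traverses) is the correct mechanism and is the same tightness computation that appears explicitly in the proof of Lemma~\ref{lem:auguse}.
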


\paragraph{Efficiency Analysis of \sync:}
Next, we bound the time taken for a single call to \sync\ executed on a cell $\cell$ that updates the dual weights of $V_{\cell}$. The argument is nearly identical to that used for \construct. Assume that $\cell$ appears at level $j$. The \sync\ procedure executes a Dijkstra search once from each of the $\BigOT(\mu_j)$ entry clusters of $\X_{\cell}$. If $\cell$ is full, then each Dijkstra search takes time $\BigOT(|E_{\cell}|) = \BigOT(|V_{\cell}|^2) = \BigOT(\mu_j^2)$ time. If $\cell$ is sparse, then each Dijkstra search can be executed efficiently in $\BigOT(|\X_\cell| + |A_{\cell} \cup B_{\cell}|)$ time. Since $\cell$ is sparse, $|A_\cell \cup B_\cell| \leq \mu_j^2$, and each Dijkstra search takes $\BigOT(\mu_j^2)$ time. This gives the following Lemma.
\begin{lemma}
\label{lem:syncefficiency}
Any execution of \sync\ on a cell $\cell$ of layer $j$ takes $\BigOT(\mu_j^3)$ time. Furthermore, if $\cell$ is sparse, the time taken can be bounded by $\BigOT(\mu_j(\mu_j + |A_{\cell}| + |B_{\cell}|))$.
\end{lemma}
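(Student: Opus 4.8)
\textbf{Proof plan for Lemma~\ref{lem:syncefficiency}.}
The plan is to charge the running time of a call to \sync\ at a level-$j$ cell $\cell$ to two things: the $\BigOT(\mu_j)$ single-source Dijkstra searches it runs (one out of each entry cluster of $\cell$), and the concluding sweep that copies each parent cluster's dual weight onto its children. The sweep visits each vertex of $V_\cell$ at most once, since the sets $\{D(X)\}_{X}$ partition $V_\cell$, so it is dominated by the searches; the estimate therefore reduces to pricing one Dijkstra search and multiplying. I would first pin down the number of searches: the clusters of $\cell$ are generated from the $\BigOT(\mu_j)$ subcells of $\cell$ (Lemma~\ref{lem:subcellcount}), and each subcell $\xi$ spawns only a free, an internal, and $|N^*(\xi)| = \BigOT(1)$ boundary clusters, so $|\X^\entry_\cell| \le |\X_\cell| = \BigOT(\mu_j)$.

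Next I would bound one search from the auxiliary source $s$ on the slack-weighted graph $\mathcal{AG}'_{\cell}$ (augmented with $s$ and its edges into $D(X)$). If $\cell$ is full, its four quad-tree children sit one level down, each carrying $\BigOT(\mu_{j-1}) = \BigOT(\mu_j)$ clusters by Lemma~\ref{lem:subcellcount} and $\mu_{j-1} = \Theta(\mu_j)$; hence $|V_\cell| = \BigOT(\mu_j)$, $|E_\cell| = \BigOT(|V_\cell|^2) = \BigOT(\mu_j^2)$, and a search costs $\BigOT(|E_\cell| + |V_\cell|\log|V_\cell|) = \BigOT(\mu_j^2)$. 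If $\cell$ is sparse then $\mathcal{AG}_\cell = \res{\cell}$, and here I would reuse the observation already exploited for the global Hungarian search in Section~\ref{sec:algorithm} and for \construct\ in Lemma~\ref{lem:constructefficiency}: every vertex of $\res{\cell}$ lies in only $\BigOT(1)$ WSPD pairs, and all residual edges inside a fixed pair share one $d_Q$-value and one $\mu^2$-value, so Dijkstra's relaxations can be batched and a search runs in $\BigOT(|\X_\cell| + |A_\cell \cup B_\cell|) = \BigOT(\mu_j + |A_\cell| + |B_\cell|)$ time. Multiplying by the $\BigOT(\mu_j)$ searches gives $\BigOT(\mu_j^3)$ when $\cell$ is full and $\BigOT(\mu_j(\mu_j + |A_\cell| + |B_\cell|))$ when $\cell$ is sparse, which is the second claimed bound; the latter collapses to $\BigOT(\mu_j^3)$ whenever $|A_\cell \cup B_\cell| = \BigOT(\mu_j^2)$, which holds for every sparse cell of level at least $1$ by the definition of sparseness.

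I do not expect a real obstacle: \sync\ runs precisely the same family of Dijkstra searches on the same associated graphs as \construct, so this lemma is essentially a transcription of Lemma~\ref{lem:constructefficiency}. The one point that warrants a careful sentence is that attaching the artificial source $s$ to $D(X)$ does not break the batched-relaxation implementation in the sparse case: $s$ is joined to $D(X)$ by at most $|D(X)| \le |A_\cell \cup B_\cell|$ ordinary weighted edges and takes part in no WSPD pair, so the search from $s$ is asymptotically no costlier than an ordinary batched Dijkstra on $\res{\cell}$. Everything else---the vertex and edge counts of the associated graphs, and the cheapness of the final dual sweep---has already been established, so the proof is a short assembly of these facts.
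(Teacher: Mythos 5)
Your proposal is correct and follows the paper's own argument essentially verbatim: count the $\BigOT(\mu_j)$ Dijkstra searches (one per entry cluster), price each at $\BigOT(|E_\cell|)=\BigOT(\mu_j^2)$ when $\cell$ is full and at $\BigOT(\mu_j+|A_\cell|+|B_\cell|)$ via WSPD-batched relaxations when $\cell$ is sparse, and multiply. The extra remarks about the final dual-weight sweep and the auxiliary source $s$ are correct but not needed beyond what the paper records.
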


\subsection{Data Structure Operations}
For any phase $i$, we present the implementation of the four operations supported by the data structure using the \sync\ and \construct\ procedures. Before we describe the operations, we will state an additional property that the compressed feasible matching maintained by the data structure satisfies. In any phase $i \ge 1$, suppose that $M_{\cell^*}$, $y(\cdot)$ is a compressed feasible matching with the additional condition being satisfied: 
\begin{itemize}
    \item[(J)] For each vertex $b \in \mathcal{B}_{\cell^*}$, $y(b) \ge 0$, and for each $a \in  \mathcal{A}_{\cell^*}$, $y(a) \le 0$. Furthermore, let $y_{\max}=\max_{v \in \mathcal{B}_{\cell^*}}y(v)$. For every free vertex $b \in \mathcal{B}_{\cell^*}$, $y(b)=y_{\max}$ and $\mu_{i-1}^2 \le y_{\max} \le \mu_i^2$. For every free cluster $a \in \mathcal{A}_{\cell^*}$, $y(a) =0$.
\end{itemize} 

As we show in Section~\ref{subsec:genduals}, a compressed feasible matching that satisfies (J) can be converted to an associated $Q$-feasible matching that satisfies (I1) and (I2). Therefore, it suffices to maintain (J) during the execution of our algorithm.

\subsubsection{\init\ Operation}
\label{subsec:build}
As input, the \init\ operation takes a $Q$-feasible matching $M_{\cell^*}$ and set of dual weights $y(\cdot)$ on the vertices of $A_{\cell^*} \cup B_{\cell^*}$. We execute the \construct\ procedure on every non-root cell $\cell$ of $\mathcal{T}_{\cell^*}$ in the order of their level in $Q$, processing lower layers first. This ensures that, when \construct\ is called on $\cell$, the associated graph $\mathcal{AG}_{\cell}$ has already been computed, along with the dual weights for vertices of $V_\cell$. After \construct\ is called on all pieces of $\cell^*$, the result is an associated graph $\mathcal{AG}_{\cell}$ for every full cell $\cell \in \mathcal{T}_{\cell^*}$ and dual weights $y(\cdot)$ for all vertices of $\bigcup_{\cell \in \mathcal{T}_{\cell^*}}V_{\cell}$. The following lemma argues that this set of dual weights is compressed feasible with respect to $M_{\cell^*}$.

\begin{lemma}
\label{lem:constructfull}
After executing \construct\ on all non-root cells of $\mathcal{AG}_{\cell^*}$, the matching $M_{\cell^*}$ and the dual assignment $y(\cdot)$ are compressed feasible.
\end{lemma}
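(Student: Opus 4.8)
The plan is to prove the lemma by an induction over the cells of $\mathcal{T}_{\cell^*}$ from the leaves up to the root, following exactly the order in which \init\ invokes \construct\ (non-root cells in increasing order of their $Q$-level, so every child is handled before its parent, and \init\ makes a single bottom-up pass in which no dual weight is ever revised). Along the way I would maintain one auxiliary invariant: for every cell $\cell \in \mathcal{T}_{\cell^*}$ and every cluster $X \in V_{\cell}$, the weight $y(X)$ produced by \init\ equals $y(p)$ for some point $p \in X$. This is immediate from the fact that \construct\ only ever sets $y(X) \leftarrow y(X')$ for some $X' \in D(X)$, that $X' \subseteq X$ as point sets by \eqref{eq:subset}, and that this recursion bottoms out at a leaf cell, where $D(X)$ is literally the set of points of $X$ and the weights are the input weights.

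For the base case, a leaf (sparse) cell $\cell$ has $\mathcal{AG}_{\cell}=\res{\cell}$ with the input dual weights, and condition (C2) is vacuous. Since $\res{\cell}$ is a vertex-induced subgraph of the residual graph on $A_{\cell^*}\cup B_{\cell^*}$ and the input $M_{\cell^*}, y(\cdot)$ is $Q$-feasible, every edge of $\res{\cell}$ still satisfies the $Q$-feasibility constraints, which is precisely (C1) for a sparse cell.

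For the inductive step, let $\cell$ be a full cell and assume every active-tree child $\cell'$ of $\cell$ is already compressed feasible; since each child has strictly smaller $Q$-level than $\cell$ and \init\ processes in increasing $Q$-level order, \construct\ has already been invoked on every such $\cell'$, and the hypotheses of Lemma~\ref{lem:correctcost} and Lemma~\ref{lem:internalfeasconstruct} (that $\mathcal{AG}_{\cell'}$ is built and (C1)-feasible) hold. Condition (C2) for $\cell$ is direct: for every exit cluster $X\in\X^\exit_{\cell'}$, \construct\ sets $y(X)=y(X')$ with $X'=\argmax_{Z\in D(X)}|y(Z)|$, so $|y(X)|\ge|y(X')|$ for all $X'\in D(X)$. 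For (C1) on an internal edge $(X,Y)\in\X^\entry_{\cell'}\times\X^\exit_{\cell'}$ of $E_{\cell}$, Lemma~\ref{lem:internalfeasconstruct} gives exactly $|y(X)|-|y(Y)|\le\phi(X,Y)$, and internal edges are not assigned a \emph{local} label, so the equality clause of (C1) does not apply to them. For (C1) on a bridge edge between clusters $X\in\B_{\cell_1}$ and $Y\in\A_{\cell_2}$ of distinct children of $\cell$, I would use the auxiliary invariant to pick points $p\in X$, $q\in Y$ with $y(X)=y(p)$ and $y(Y)=y(q)$; the edge $(p,q)$ lies in $X\times Y$, so by Lemma~\ref{lem:sim} it has net-cost $\phi(X,Y)$ and is oriented exactly as the bridge edge, and it lies inside $\cell^*$ so the input $Q$-feasibility constraint applies to it. If the bridge edge is non-local, then $(p,q)$ is a non-matching, non-local point-edge, and the $Q$-feasibility constraint gives $|y(X)|-|y(Y)|=|y(p)|-|y(q)|\le\phi(p,q)=\phi(X,Y)$. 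If the bridge edge is local, then by Lemma~\ref{lem:sim} every edge of $X\times Y$ must be a matching edge (a non-matching local edge carries the opposite sign of net-cost), $\phi(X,Y)$ is the net-cost of the matching edge $(p,q)$, and the equality clause of $Q$-feasibility for a matching edge yields $|y(X)|-|y(Y)|=\phi(X,Y)$ after reconciling the $|y(\cdot)|$-versus-signed-$y(\cdot)$ conventions. Applying this to every child of $\cell^*$ establishes compressed feasibility for all of $\mathcal{T}_{\cell^*}$.

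I expect the bridge-edge case to be the main obstacle: the argument is short only once one is careful about (i) matching the orientation and the local/non-local label of a bridge edge to those of a representative point-edge through Lemma~\ref{lem:sim} (in particular, that a local bridge edge forces its underlying point-edges to be matching edges, so the cost used for it is the right one for the equality in (C1)), and (ii) the sign bookkeeping in translating $Q$-feasibility of a point-edge into the $|y(X)|-|y(Y)|\le\phi(X,Y)$ form of (C1). The remaining pieces — (C2) from \construct's $\argmax$ rule, internal edges from Lemma~\ref{lem:internalfeasconstruct}, and the base case from edgewise feasibility of the input — are essentially immediate given the lemmas already proved.
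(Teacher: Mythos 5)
Your proposal is correct and follows essentially the same route as the paper's proof: a bottom-up induction in which internal edges of $E_{\cell}$ are handled by Lemma~\ref{lem:internalfeasconstruct}, and bridge edges are handled by observing that \construct's dual assignment guarantees representative points $u\in X$, $v\in Y$ with $y(u)=y(X)$, $y(v)=y(Y)$, so that $Q$-feasibility of the point-edge $(u,v)$ transfers to the cluster edge via $\phi(X,Y)=\phi(u,v)$. Your treatment is in fact slightly more complete than the paper's, which leaves the base case, condition (C2), and the orientation/sign bookkeeping for local bridge edges implicit.
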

\begin{proof}
Consider the circumstances after calling \construct\ on all children of some full $\cell \in \mathcal{T}_{\cell}$. Inductively assume that, for each child $\cell'$ of $\cell$, the edges of $\mathcal{AG}_{\cell'}$ were feasible prior to executing \construct\ on $\cell'$. Then, from Lemma \ref{lem:internalfeasconstruct}, the internal edges of $\mathcal{AG}_{\cell}$ are feasible. It remains to argue that the bridge edges of $\mathcal{AG}_{\cell}$ are feasible. Consider any such bridge edge $(X,Y)$ in $\mathcal{AG}_{\cell}$. From a simple inductive argument on the dual assignment of the \construct\ procedure, it is easy to see that for some point $u \in X$, $y(u) = y(X)$. Similarly, for some point $v \in Y$, $y(v) = y(Y)$. Furthermore, $(u,v)$ is an edge in $\res{M}$, and $\phi(X,Y) = \phi(u,v)$. Consider the case where $(X,Y)$ is local. Then, from the feasibility of $(u,v)$, we have,
\[|y(X)| - |y(Y)| = |y(u)| - |y(v)| = \phi(u,v) = \phi(X,Y),\]
and $(X,Y)$ is feasible. Similarly, consider if $(X,Y)$ is non-local. Then, from the feasibility of $(u,v)$, we have,
\[|y(X)| - |y(Y)| = |y(u)| - |y(v)| \leq \phi(u,v) = \phi(X,Y),\]
and $(X,Y)$ is feasible. This implies that $M_{\cell^*}, y(\cdot)$ are compressed feasible after executing \construct\ on all non-root cells of the active tree $\mathcal{T}_{\cell^*}$.
\end{proof}
\begin{cor}
Given a $Q$-feasible matching that satisfies (I1) and (I2), upon applying the \init\ procedure, the compressed feasible matching will satisfy (J).
\end{cor}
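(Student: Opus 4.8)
By Lemma~\ref{lem:constructfull}, once \init\ has run \construct\ on every non-root cell of $\mathcal{T}_{\cell^*}$, the matching $M_{\cell^*}$ together with the dual weights $y(\cdot)$ it produces is compressed feasible, so the whole task reduces to checking the extra property (J). Recall that on input $M_{\cell^*},y(\cdot)$ is $Q$-feasible by (I1), every $a\in A$ has $y(a)\le0$ with $y(a)=0$ for free $a\in A_F$, every $b\in B$ has $y(b)\ge0$, and---being the input to \init\ at the start of phase $i$---every free $b\in B_F$ has $y(b)=\mu_{i-1}^2$. The plan is to propagate these facts up $\mathcal{T}_{\cell^*}$ one \construct\ call at a time.

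The key is the following claim, which I would prove by induction following the order in which \init\ processes cells, taking sparse cells (where the cluster dual weights are exactly the input point dual weights) as the base case and full cells via \construct\ as the inductive step: for every cell $\cell\in\mathcal{T}_{\cell^*}$ and every cluster $X\in V_{\cell}$ there is a point $p\in X$ with $y(X)=y(p)$ such that $p\in A$ exactly when $X\in\mathcal{A}_{\cell}$, $p\in B$ exactly when $X\in\mathcal{B}_{\cell}$, and $p$ is free whenever $X$ is a free cluster. The base case is immediate since there $\mathcal{AG}_{\cell}=\res{\cell}$ and each ``cluster'' is a point of $A_{\cell}\cup B_{\cell}$ carrying its input dual weight. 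For the inductive step, when \construct\ runs on a child $\cell'$ of $\cell$ it sets $y(X)\leftarrow y(X')$ for some $X'\in D(X)$---the minimum-magnitude such cluster when $X$ is an entry cluster, the maximum-magnitude one when $X$ is an exit cluster---and since $D(X)$ consists of already-resolved clusters contained in $X$ of the same $A/B$ type as $X$, and of free clusters when $X$ is free (by the decomposition~\eqref{eq:subset} of free, internal, and boundary clusters), the claim follows by applying the induction hypothesis to $X'$.

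Granting the claim, the sign part of (J) is immediate: every $X\in\mathcal{A}_{\cell^*}$ has $y(X)=y(a)$ for some $a\in A$, hence $y(X)\le0$, and every $X\in\mathcal{B}_{\cell^*}$ has $y(X)=y(b)\ge0$ for some $b\in B$. For the free clusters, a free $A$-cluster is an exit cluster and, iterating~\eqref{eq:subset}, decomposes into free $A$-vertices all of dual weight $0$, so the maximum-magnitude selection rule of \construct\ forces $y(X)=0$; symmetrically a free $B$-cluster is an entry cluster and decomposes into free $B$-vertices all of dual weight $\mu_{i-1}^2$, so the minimum-magnitude rule forces $y(X)=\mu_{i-1}^2$. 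It remains to identify $y_{\max}=\max_{v\in\mathcal{B}_{\cell^*}}y(v)$ with $\mu_{i-1}^2$: by the claim $y_{\max}=\max_{b\in B_{\cell^*}}y(b)$, and since the algorithm maintains that at the start of a phase no dual weight of a $B$-vertex exceeds that of the free $B$-vertices (Hungarian searches cap $B$-dual weights at the phase target, while augmentations only decrease dual weights by Lemma~\ref{lem:augmentdecrease}), this maximum equals $\mu_{i-1}^2$. Hence every free $B$-cluster $b\in\mathcal{B}_{\cell^*}$ has $y(b)=\mu_{i-1}^2=y_{\max}$, and $\mu_{i-1}^2\le y_{\max}=\mu_{i-1}^2\le\mu_i^2$, which is exactly (J).

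The main obstacle is the bookkeeping in the inductive claim: keeping straight which clusters are entry versus exit, checking that the decomposition~\eqref{eq:subset} really keeps free clusters free and preserves the $A/B$ type, and verifying that the minimum/maximum-magnitude selection rule of \construct\ does what is claimed (especially when several descendant magnitudes coincide). The one place where (I1) and (I2) as literally stated are not quite enough is the upper bound $y(b)\le\mu_{i-1}^2$ for matched $b$ at the start of phase $i$; this relies on the auxiliary phase invariant that $B$-dual weights never surpass those of the free $B$-vertices, which holds because searches cap these weights and augmentations (Lemma~\ref{lem:augmentdecrease}) only decrease them.
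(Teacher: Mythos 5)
Your proof is correct and takes essentially the same route as the paper's, which simply observes that every cluster $X\in V_{\cell^*}$ inherits its dual weight from some point $u\in X$ of the matching type and then invokes (I1)--(I2); your inductive claim about representatives is exactly the detail the paper leaves implicit. Your remark that identifying $y_{\max}$ with $\mu_{i-1}^2$ needs the auxiliary fact that no matched $B$-vertex exceeds the free $B$-vertices' dual weight at the start of a phase is a legitimate refinement that the paper's one-line proof glosses over, and your justification of it (searches cap $B$-duals at the phase target, augmentations only decrease duals) is the right one.
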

\begin{proof}
Consider any cluster $X \in V_{\cell*}$. Then, there must be some $u \in X$ for which $y(u) = y(X)$ after \init. Since (I1) and (I2) were satisfied for $u$, it is easy to see that (J) holds for $X$.
\end{proof}
\paragraph{Execution Time for \init:}
We show that the time taken by \init\ during phase $i$ is $\BigOT(n\mu_i^{2/3})$. During, the \init\ procedure, \construct\ is called on all non-root cells of $\mathcal{T}_{\cell^*}$ for each full active cell $\cell^*$. We assign each non-root cell $\cell \in \mathcal{T}_{\cell^*}$ to one of four categories: 
 \begin{enumerate}[(a)]
     \item $\cell$ is full.
     \item $\cell$ is sparse and the parent of $\cell$ in $\mathcal{T}_{\cell^*}$ is a full cell that is not the root $\cell^*$.
     \item $\cell$ is sparse, its parent in $\mathcal{T}_{\cell^*}$ is the root $\cell^*$, and $|A_\cell \cup B_\cell| \leq \mu_i^{2/3}$.
     \item $\cell$ is sparse, its parent in $\mathcal{T}_{\cell^*}$ is the root $\cell^*$, and $|A_\cell \cup B_\cell| > \mu_i^{2/3}$.
 \end{enumerate}
We separately bound the total time taken for a single \construct\ call on every cell in each of the four categories, over all active cells for phase $i$, showing that the time taken is $\BigOT(n\mu_i^{2/3})$. 

First, we bound the time taken by cells of category (a). We bound the the time for a \construct\ call on all full cells in some grid $G_j$. Since these full cells together contain at most $n$ points, the total number of full cells in $G_j$ is bounded by $\BigOT(n/\mu_j^2)$. A \construct\ call on a full cell $\cell \in G_j$ takes $\BigOT(\mu_j^{3})$ time. Therefore, the total time taken for all full cells of $G_j$ is $\BigOT(n\mu_j)$. During phase $i$, \construct\ is only called on cells of $G_j$ where $j \leq \lfloor 2i/3 \rfloor$. The time taken by $G_{\lfloor 2i/3 \rfloor}$ dominates, taking $\BigOT(n\mu_i^{2/3})$ time. This completes the bound on cells in category (a).

Next, we bound the time taken for category (b). If a sparse cell $\cell$ of level $j$ in an active tree has a non-root parent $\cell'$ in level $j+1$, then its parent $\cell'$ must fall into category (a). The time taken for a call to \construct\ on $\cell$ is $\BigOT(\mu_j^3) = \BigOT(\mu_{j+1}^3)$, which can be taxed on the time taken to execute \construct\ on the parent $\cell'$. Specifically, since each non-root cell $\cell'$ in the active tree has at most $4$ children in the active tree, the time taken for a \construct\ call on all sparse children of $\cell'$ is $\BigOT(\mu_{j+1}^3)$, which is also the bound on the time taken for \construct\ on $\cell'$ itself. Therefore, the total time taken by category (b) is bounded by the time taken by (a), and is $\BigOT(n\mu_i^{2/3})$.

Now we bound the time for category (c). All cells of category (c) are pieces of some active tree, so we begin by bounding the total number of pieces over all active cells. Since these cells together contain at most $n$ points, the number of full active cells during phase $i$ is $\BigOT(n/\mu_i^2)$. Each such active cell $\cell^*$ has its pieces in grid $G_{\lfloor 2i/3\rfloor}$. Since $\cell^*$ has diameter $\BigOT(\mu_i^2)$ and each piece of $\cell^*$ has diameter $\Omega(\mu^2_{\lfloor 2i/3\rfloor} / \polys) = \Omega(\mu_i^{4/3}/\polys)$, $\cell^*$ has $\BigOT((\mu_i^{2/3})^2) = \BigOT(\mu_i^{4/3})$ pieces. Summing over all phase $i$ active full cells gives a total of $\BigOT(n/\mu_i^{2/3})$ pieces. The time taken by a single \construct\ call on one of these pieces $\cell$ is $\BigOT(\mu_{\lfloor 2i/3 \rfloor}|A_{\cell} \cup B_{\cell}| + \mu_{\lfloor 2i/3 \rfloor}^2) = \BigOT(\mu_{i}^{2/3}|A_{\cell} \cup B_{\cell}| + \mu_{i}^{4/3})$. However, since $|A_\cell \cup B_\cell| \leq \mu_i^{2/3}$, we can rewrite the time taken for a single \construct\ call as $\BigOT(\mu_i^{4/3})$. Summing over all $\BigOT(n/\mu_i^{2/3})$ pieces of category (c) gives a total time of $\BigOT(n\mu_i^{2/3})$ as desired.

Finally, we bound the time taken for category (d). The time taken for a single \construct\ call on a cell $\cell$ of category (d) is $\BigOT(|A_\cell \cup B_\cell|\mu_i^{2/3} + \mu_i^{4/3})$. However, since $|A_\cell \cup B_\cell| > \mu_i^{2/3}$, the first term dominates, and we can rewrite the time taken by $\construct$ on $\cell$ as $\BigOT(|A_{\cell} \cup B_{\cell}|\mu_i^{2/3})$. Summing over all such cells of category (d) gives a total time of $\BigOT(n\mu_i^{2/3})$.

\subsubsection{\genduals\ Operation}
\label{subsec:genduals}
The \genduals\ procedure simply consists of recursively calling the \sync\ procedure on all non-root cells of $\mathcal{\cell^*}$, processing cells closest to the root of $\mathcal{T}_{\cell^*}$ first. This process generates a set of dual weights $y(\cdot)$ for the vertices of $A_{\cell^*} \cup B_{\cell^*}$. Next, we show that after executing this \genduals\ procedure, $M_{\cell^*}, y(\cdot)$ are $Q$-feasible, meaning (I1) holds.

\begin{lemma}
\label{lem:syncfull}
After executing \sync\ on all non-root cells of $\mathcal{T}_{\cell^*}$, starting with the cells closest of the root of $\mathcal{T}_{\cell^*}$, let $y(\cdot)$ be the dual weights of vertices in $\bigcup_{\cell \in \mathcal{T}_{\cell^*}} V_{\cell}$. Then $M_{\cell^*}, y(\cdot)$ are $Q$-feasible.
\end{lemma}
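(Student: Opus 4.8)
The plan is to show in three steps that $M_{\cell^*}$ together with the dual weights returned by \genduals\ is $Q$-feasible, and then identify the delicate part. First, I would argue that after \genduals\ all of the compressed-feasibility conditions (C1), (C2) and the \sync\ post-conditions (T1), (T2) hold simultaneously at every cell of $\mathcal{T}_{\cell^*}$ with respect to the final weights. The structural observation is that a call \sync($\cell$) reads only the dual weights on the clusters of $\cell$ that reside in its parent's associated graph and writes only the dual weights on $V_\cell$; it never modifies a strict ancestor's vertex set, nor the vertex set of any cell outside the subtree of $\cell$. Since \genduals\ processes cells top-down, the weights on $\X_\cell$ are already final when \sync($\cell$) runs and are never touched again, so the guarantees of Lemma~\ref{lem:syncproof} (namely (C1) and (C2) at $\cell$), together with (T1) and (T2) at $\cell$, survive to the end of \genduals; for the root $\cell^*$, (C1) holds because $V_{\cell^*}$ is never altered and (C1) held at the input by compressed feasibility.

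Next, I would establish a downward monotonicity claim by induction on cluster granularity, from the single-point clusters at the leaves upward: for every cluster $X$ occurring in $\mathcal{T}_{\cell^*}$ and every point $p\in X$, $|y(X)|=|y(p)|$ if $X$ is free or boundary, $|y(X)|\le|y(p)|$ if $X$ is an internal cluster of type $A^I$, and $|y(X)|\ge|y(p)|$ if $X$ is an internal cluster of type $B^I$. The base case is a singleton. For the inductive step, the parent--child relations of~\eqref{eq:subset} determine the type of the cluster $X'\in D(X)$ containing $p$: a free (resp.\ boundary) cluster refines to free (resp.\ boundary, same WSPD pair) clusters, while an internal $A^I$ (resp.\ $B^I$) cluster refines to clusters that are either internal of the same type or $A$-boundary (resp.\ $B$-boundary) clusters whose pair lives at the current level. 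One then applies (T2) when $X$ is free/boundary, (T1) when $X$ is the entry cluster $A^I$, and (C2) when $X$ is the exit cluster $B^I$, and combines with the inductive hypothesis on $X'$. The subtle subcase is the refinement of an internal cluster into a boundary cluster one level down, which flips the entry/exit status; this is reconciled by the fact that boundary clusters satisfy the \emph{equality} $|y(X')|=|y(p)|$, compatible with whichever one-sided bound (T1) or (C2) provides.

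Finally, take any edge $(u,v)$ of $\res{M_{\cell^*}}$ directed from $u$ to $v$ and let $\cell$ be the lowest cell of $\mathcal{T}_{\cell^*}$ containing both endpoints. If $\cell$ is a (sparse) leaf, then $(u,v)$ is already an edge of $\mathcal{AG}_\cell=\res{\cell}$, and (C1) at $\cell$ is exactly the required $Q$-feasibility statement for $(u,v)$ (for edges internal to $\cell$, $Q$-feasibility with respect to $M_\cell$ and with respect to $M_{\cell^*}$ coincide, as do the two notions of locality). Otherwise $u$ and $v$ lie in distinct children of $\cell$, and their clusters $X_u,X_v\in V_\cell$ are joined in $\mathcal{AG}_\cell$ by a bridge edge oriented as $(u,v)$ -- this is the case when $(u,v)$ is a matching edge (always local) or a non-local non-matching edge -- whose net-cost equals $\phi(u,v)$ by Lemma~\ref{lem:sim}; (C1) at $\cell$ then gives $|y(X_u)|-|y(X_v)|\le\phi(u,v)$, with equality in the matching case. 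A matching edge makes both $X_u,X_v$ boundary clusters, so the monotonicity claim yields $|y(u)|=|y(X_u)|$, $|y(v)|=|y(X_v)|$ and hence the equality $|y(u)|-|y(v)|=\phi(u,v)$; for a non-local non-matching edge we have $u\in B$, $v\in A$, and the monotonicity claim gives $|y(u)|\le|y(X_u)|$ and $|y(v)|\ge|y(X_v)|$ irrespective of cluster type, whence $|y(u)|-|y(v)|\le|y(X_u)|-|y(X_v)|\le\phi(u,v)$. The only case left, a local non-matching edge, I would handle without the associated graph: the local matching edges of its class are already shown $Q$-feasible, so Lemma~\ref{lem:localsamedual} forces $y(\cdot)$ to be constant on the $A$-endpoints and on the $B$-endpoints of the class, which is precisely $Q$-feasibility of all local edges of that class.

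I expect the bulk of the effort to be in the monotonicity claim: carefully tracking, level by level down the active tree, how each endpoint's cluster type evolves (free/internal/boundary, and correspondingly its entry/exit role), so that the one-sided inequalities (T1), (C2) and the equalities (T2) compose in the correct direction, with the internal-to-boundary transition as the delicate point. The bridge-edge orientation conventions for local non-matching edges are also awkward, but routing those edges through Lemma~\ref{lem:localsamedual} avoids them entirely.
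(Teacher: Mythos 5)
Your proof follows essentially the same route as the paper's: reduce every residual edge either to a sparse leaf (where (C1) is literally $Q$-feasibility) or to a bridge edge at the lowest common cell, apply (C1) there, and transfer the cluster dual weights down to the points via (T2) for free/boundary clusters and the one-sided bounds (T1)/(C2) for internal clusters. Your explicit ``monotonicity claim'' is a careful spelling-out of the multi-level induction that the paper compresses into ``from (C2) (resp.\ (T1)) we immediately get $|y(u)|\le|y(X)|$,'' and routing local non-matching edges through Lemma~\ref{lem:localsamedual} instead of the local bridge edge is a harmless variation.
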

\begin{proof}

We consider any edge $(u,v)$ in $\res{\cell^*}$. The vertices $u$ and $v$ must each appear in some sparse cell in $\mathcal{T_{\cell^*}}$. If $u$ and $v$ are in the same sparse leaf cell $\cell \in \mathcal{T}_{\cell^*}$, then $(u,v) \in \res{\cell}$, and $(u,v)$ is $Q$-feasible because $M_\cell, y(\cdot)$ are compressed feasible. Otherwise, $u$ and $v$ appear at different leaves of $\mathcal{T}_{\cell^*}$, and there is some pair of clusters $(X,Y)$ and some cell $\cell \in \mathcal{T}_{\cell^*}$ such that $(X,Y)$ is a bridge edge in $E_{\cell}$, $u \in X$, and $v \in Y$. If $(X,Y)$ is local, then $(u,v)$ is also local, and we have that $|y(X)| - |y(Y)| = \phi(X,Y)$ from (C1). From (T2) we get that $y(u) = y(X)$ and $y(v) = y(Y)$. Since $\phi(X,Y) = \phi(u,v)$, we get that $(u,v)$ is $Q$-feasible. 

Otherwise, we consider the case where $(u,v)$ is non-local. We claim that $|y(u)| \leq |y(X)|$ (resp. $|y(v)| \geq |y(Y)|$). If $X$ (resp. $Y$) is a boundary or free cluster, then, from (T2), $y(u) = y(X)$ (resp. $y(v) = y(Y)$) and the claim holds. The remaining case is when $X$ (resp. $Y$) is an internal cluster. Since the edge $(X,Y)$ is directed from $X$ to $Y$, the cluster $X$ (resp. $Y$) must be an internal exit cluster of $B$ (resp. internal entry cluster of $A$). From conditions (C2) (resp. (T1)), we immediately get that $|y(u)| \leq |y(X)|$ (resp. $|y(v)| \geq |y(Y)|$). Therefore, the claim holds. From condition (C1), $|y(X)| - |y(Y)| \leq \phi(X,Y)$. Combining this with the facts that $|y(u)| \leq |y(X)|$ and $|y(v)| \geq |y(Y)|$ immediately implies that $(u,v)$ is $Q$-feasible.
\end{proof}

\begin{lemma}
\label{compressedinv}
Consider any compressed feasible matching that satisfies (J), then \genduals\ generates an associated $Q$-feasible matching that satisfies invariant (I2). 
\end{lemma}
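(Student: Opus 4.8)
The $Q$-feasibility of the matching $M_{\cell^*}$ returned by \genduals\ is exactly the content of Lemma~\ref{lem:syncfull}, so the only thing left to check is invariant (I2): that $y(b)\ge 0$ for every $b\in B$ and $y(a)\le 0$ for every $a\in A$, that $y(a)=0$ for every free $a\in A_F$, and that $\mu_{i-1}^2\le y(b)\le \mu_i^2$ for every free $b\in B_F$. My plan is to run a top-down induction over the levels of the active tree $\mathcal{T}_{\cell^*}$, tracking how a single \sync\ call carries the sign and magnitude constraints of (J) from the clusters of a cell down to the clusters of its children, until we reach the sparse leaves, where clusters are individual points. Vertices lying in no active cell never have their dual weights touched during the phase, so (I2) persists for them from the start of the phase, and it suffices to argue about the vertices of $\cell^*$.

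For the sign conditions I would maintain the inductive invariant that, once \genduals\ has processed every cell of $\mathcal{T}_{\cell^*}$ down to level $j$, every type $B$ cluster in $V_{\cell}$ of each processed $\cell$ carries a non-negative dual weight and every type $A$ cluster a non-positive one. The base case is the root clusters $\mathcal{B}_{\cell^*},\mathcal{A}_{\cell^*}$, handled by (J). For the step, note that when \sync\ is called on a cell $\cell$ it reads the weights on $\X_{\cell}\subseteq V_{\hat{\cell}}$, where $\hat{\cell}$ is the parent, which are correctly signed by the inductive hypothesis, and it modifies the weights of $V_{\cell}$ only by (i) a Dijkstra search from an entry cluster, which raises the magnitude of a reached cluster by $|y(X)|-d_v>0$ --- added to a type $B$ cluster, subtracted from a type $A$ cluster, so the sign is preserved --- or (ii) overwriting a free or boundary cluster's children with that cluster's own weight, which also preserves the sign. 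Pushing this all the way to the sparse leaves, where clusters are points, gives $y(b)\ge 0$ for all $b\in B_{\cell^*}$ and $y(a)\le 0$ for all $a\in A_{\cell^*}$.

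For the free-vertex conditions I would use that any free vertex $b\in B_F\cap \cell^*$ sits inside a \emph{free} cluster at every level of $\mathcal{T}_{\cell^*}$: its chain of ancestor clusters consists entirely of free clusters because of the hierarchy identity $B_\xi^F=\bigcup_{\xi'\in D(\xi)}B_{\xi'}^F$ (and symmetrically $A_\xi^F=\bigcup_{\xi'\in D(\xi)}A_{\xi'}^F$). Free clusters are governed by (T2), under which \sync\ copies a free cluster's dual weight verbatim onto its children; hence the value $y_{\max}$ assigned by (J) to the free clusters of $\mathcal{B}_{\cell^*}$ is carried unchanged down to every free point $b$, yielding $y(b)=y_{\max}$ with $\mu_{i-1}^2\le y_{\max}\le\mu_i^2$, and likewise the value $0$ on free clusters of $\mathcal{A}_{\cell^*}$ is carried down to give $y(a)=0$ for every free $a$. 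Combined with the sign conditions this is exactly (I2).

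The bookkeeping above is routine; the step I expect to be the most delicate is the inductive step for the sign conditions, i.e.\ pinning down that \sync\ never decreases a dual-weight \emph{magnitude} and never flips a sign. This amounts to checking that each of the four cases of the \sync\ Dijkstra update (as in the proof of Lemma~\ref{lem:syncproof}) only raises $|y(\cdot)|$, that the concluding ``copy to children'' step touches free and boundary clusters only --- so it cannot clash with (T1) or (C2) --- and that the containments $\X_{\cell}\subseteq V_{\hat{\cell}}$ really do line up, so that the output of one \sync\ call is precisely the input the induction needs for the next. With those in place, combining the two tracked invariants with Lemma~\ref{lem:syncfull} finishes the argument.
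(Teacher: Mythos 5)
Your proposal is correct and follows essentially the same route as the paper's proof: free vertices are handled by chaining (T2) down the hierarchy of free clusters so they inherit $y_{\max}$ (resp.\ $0$) from the root-level free clusters guaranteed by (J), and all other vertices are handled by observing that the \sync\ dual adjustments only increase dual-weight magnitudes in the sign-preserving direction. Your version merely spells out the top-down induction that the paper leaves implicit.
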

\begin{proof}
Consider any free vertex $v \in A_F \cup B_F$ of $\res{\cell^*}$. Then there must be some free vertex cluster $X \in \X_{\cell}$ for some piece $\cell$ of $\cell^*$ in $\mathcal{T}_{\cell^*}$, such that $v \in X$. From (T2), $y(v) = y(X)$ after \genduals. Since (J) holds for $X$, (I2) holds for $v$. Next, consider the case where $v \in A \cup B$ is not free. The \genduals\ procedure only modifies the dual weight of $v$ via the \sync\ procedure, which ensures that $y(v)$ remains non-positive (resp. non-negative) if $v \in A$ (resp. $v \in B)$. 
\end{proof}

\paragraph{Execution Time for \genduals:}
Since \construct\ and \sync\ have the same time bounds from Lemmas \ref{lem:constructefficiency} and \ref{lem:syncefficiency}, and both procedures are called on all non-root cells of the active tree $\mathcal{AG}_{\cell^*}$, it is easy to see that the time taken by \genduals\ can be bounded in a fashion identical to the argument used for the efficiency of \init. Therefore, \genduals\ takes $\BigOT(n\mu_i^{2/3})$ time for phase $i$.

\subsubsection{\search\ Operation}
\label{subsec:hungariansearch}
This procedure takes a compressed feasible matching $M_{\cell^*},y(\cdot)$ that also satisfies (J) as input. It then conducts a search identical to Hungarian search on the associated graph of $\cell^*$. The search procedure adjusts the dual weights of the vertices of $V_{\cell^*}$ so that we have a path consisting of admissible edges. Once an admissible path is found in $\mathcal{AG}_{\cell^*}$, the procedure projects this path to find an augmenting path of admissible edges in $\res{\cell^*}$ by recursively applying the \sync\ procedure. We describe the details of the procedure in two parts. First, we describe the dual adjustments conducted by the \search, and then we describe how the procedure projects the path. We show that (J) continues to hold after the execution of \search.

\paragraph{Dual Adjustments:} Recall that $\mathcal{A}_{\cell^*}$ (resp. $\mathcal{B}_{\cell^*}$) denotes the set of vertices of type $A$ (resp. type $B$) in $V_{\cell^*}$. Let $\mathcal{A}_F$ (resp. $\mathcal{B}_{F}$) be the set of free vertex clusters of $\mathcal{A}_{\cell^*}$ (resp. $\mathcal{B}_{\cell^*}$). We add a vertex $s$ to the graph $\mathcal{AG}_{\cell^*}'$ and add an edge from $s$ to every free cluster of $\mathcal{B}_F$. The weight associated with this edge is $0$. We set $\ell_{\max}= \mu_i^2-\max_{X\in \mathcal{B}_{\cell^*}} y(X)$.
We then execute a Dijkstra's search to compute the shortest path distance from $s$ to every vertex in $V_{\cell^*}$. For any  $v \in V_{\cell^*}$, let $\ell_v$ be the shortest path distance from $s$. Let $\ell = \min_{X\in \mathcal{A}_F} \ell_X$. If $\ell > \ell_{\max}$, we set $\ell = \ell_{\max}$ and continue. For every vertex $v \in V_{\cell^*}$ with $\ell_v \le \ell$, we update the dual weight as follows. If $v \in \mathcal{B}_{\cell^*}$, we increase the dual weight $y(v) \leftarrow y(v) + \ell - \ell_v$. Otherwise, if $v \in \mathcal{A}_{\cell^*}$, we reduce the dual weight $y(v) \leftarrow y(v) - \ell + \ell_v$. This completes the description of the dual weight changes. These dual adjustments will make some of the edges on the shortest path tree have a zero slack. If $\ell = \ell_{\max}$, the dual weight of every free cluster of type $B$ would be updated to $\mu_i^2$ and we return without finding an augmenting path. Otherwise, the dual adjustments will maintain compressed feasibility and create an admissible path $P$ from a free cluster $Z \in \mathcal{B}_{F}$ to a free cluster $Z'$ of $\mathcal{A}_{F}$ inside the associated graph $\mathcal{AG}_{\cell^*}$. Using a relatively straight-forward and standard argument very similar to that used in Lemma \ref{lem:syncproof}, one can show that these dual adjustments do not violate the compressed feasibility conditions. 

\paragraph{Projecting an Augmenting Path}
The dual adjustment ensures that there is some admissible augmenting path $P$ in $\mathcal{AG}_{\cell^*}$. We create an augmenting admissible augmenting path in $\res{\cell^*}$ from some free vertex $b \in Z$ to $a \in Z'$ as follows: For any internal edge $(U,V)$ in $P$,  we can recursively use  \sync\ (Corollary~\ref{cor:auguse}) to retrieve an admissible path $\Pi_{u',v',\cell^*}$ where $u' \in U$ and $v' \in V$.  We make $u'$ (resp. $v')$ the representative of $U$ (resp. $V$) and denote it by $r(U)$ (resp. $r(V)$). For every vertex $Y$ on the path $P$ that does not have a representative, we choose an arbitrary vertex $p \in Y$ as its representative, $p = r(Y)$. Note that $P$ cannot have any vertex with two internal edges incident on it.  Next, for any bridge edge $(x,y)$ in $P$, we show how to connect their representatives. Suppose the bridge edge $(x,y)$ is non-local edge. Then, we connect $r(x)$ and $r(y)$ directly by a non-local edge in $\res{\cell^*}$. Otherwise, suppose $(x,y)$ is a local bridge edge. In this case, if $r(x)$ is matched to $r(y)$, we simply add the matching edge between them. Otherwise, if $r(x)$ is matched to $x'$ and $r(y)$ is matched to $y'$, the edges $(r(x),x')$, $(x',y')$ and $(y', r(y))$ are all local and admissible. We add them the three edges in this order to connect $r(x)$ to $r(y)$. The resulting path obtained is a compact admissible path from a free vertex in $B_F$ to a free vertex in $A_F$ as desired.

Note that the input compressed feasible matching satisfied (J) and the dual weight of every free cluster $ v$ in $\mathcal{B}_{\cell^*}$ is $y_{\max}$.  The dual adjustments conducted by the \search\ procedure will not decrease the dual weights of any vertex $v \in \mathcal{B}_{\cell^*}$ and will not increase the dual weight of any vertex $v \in \mathcal{A}_{\cell^*}$.  Furthermore, each dual adjustment conducted by the \search\ procedure increases the dual weight  of all free clusters of $\mathcal{B}_{\cell^*}$  by $\ell$ which is the largest increase among all clusters. Therefore, the new dual weight of free clusters is $y_{\max}+\ell$ which is the largest among all vertices of $\mathcal{B}_{\cell^*}$. Finally, by definition, every free vertex cluster $v$ of $\mathcal{A}_{\cell^*}$ has $\ell_v \ge \ell$ and, therefore, $y(v)$ remains $0$. In conclusion, after the execution of \search\ procedure (J) continues to hold.

\paragraph{Efficiency of \search:}
Next, we bound the time taken by the \search\ procedure. First, we bound the time taken for the Dijkstra search over $\mathcal{AG}_{\cell^*}'$ during some phase $i$. The root cell $\cell^*$ has a diameter of $\BigOT(\mu_i^2)$, and each of its pieces have a diameter of $\BigOT(\mu_{\lfloor 2i/3 \rfloor}^2) = \BigOT(\mu_{i}^{4/3})$. Therefore, there are $\BigOT((\mu_i^2 / \mu_{i}^{4/3})^2) = \BigOT(\mu_{i}^{4/3}))$ pieces of $\cell^*$. Each piece contains $\BigOT(\mu_i^{2/3})$ vertices in $V_{\cell^*}$ and $\BigOT(\mu_i^{4/3})$ internal edges in $E_{\cell^*}$. The number of bridge edges in $E_{\cell^*}$ could be much higher, but we observe that, by using the WSPD, the bridge edges incident on every vertex of $V_{\cell}$ can be divided into only $\BigOT(1)$ groups where the edges of each group have the same net-cost and direction. A similar technique is used for the Hungarian search described in Section \ref{sec:algorithm}. Therefore, the Dijkstra search over $\mathcal{AG'}_{\cell^*}$ can be executed in time near-linear in the number of internal edges and vertices of $\mathcal{AG'}_{\cell^*}$, i.e., $\BigOT(\mu_i^{8/3})$ time. 

After executing the Dijkstra search over $\mathcal{AG'}_{\cell^*}$, the \search\ procedure executes the \sync\ procedure to produce an admissible augmenting path $P$ in $\res{M}$. During this process, \sync\ only needs to be executed once per affected cell $\cell \in \aff{P}$. From Lemma \ref{lem:syncefficiency}, each execution of \sync\ on a cell of level $j$ takes $\BigOT(\mu_j^3)$ time. Recall that \sync\ is not called on any cell with level higher than $\lfloor 2i/3 \rfloor$, i.e., the level of the pieces of $\cell^*$. Therefore, the total time taken by the executions of the \sync\ procedure can be expressed as:
\[\BigOT(\sum_{j=0}^{\lfloor 2i/3\rfloor}|\affj{P}{j}|\mu_j^3).\]
Combining this with the time taken by the Dijkstra search gives the following bound on the time taken by the \search\ procedure.
\[\BigOT(\mu_i^{8/3} + \sum_{j=0}^{\lfloor 2i/3\rfloor}|\affj{P}{j}|\mu_j^3).\]

\subsubsection{\augment\ Operation}
\label{subsec:augment}
The \augment\ procedure accepts an admissible augmenting path $P$ in $\res{M}$. It then augments $M$ along $P$, and updates the data structure accordingly. To augment $M$ along $P$, we set $M \leftarrow M \oplus P$ and perform very similar dual weight changes to those described in Section \ref{sec:algorithm}. For any edge $(a,b)$ that was non-local prior to augmentation and became local after augmentation, let $\cell$ be the least common ancestor of $a$ and $b$ in $Q$. If there is a local bridge edge $(X,Y) \in E_\cell$ prior to augmentation such that $a$ enters $X$ and $b$ enters $Y$ through augmentation, we simply set $y(a) \leftarrow y(X)$ and $y(b) \leftarrow y(Y)$. Otherwise, if no such local edge existed, we set $y(a) \leftarrow y(a) - \mu_{ab}^2$. Using similar arguments to those given in Section \ref{sec:algorithm}, it can be shown that this dual weight assignment only decreases the dual weights of $y(a)$ and $y(b)$. 

After augmenting along $P$, the data structure must perform updates to account for the changes to the matching. Recall that the set $\aff{P}$ of affected cells contains all non-root cells of $\mathcal{T}_{\cell^*}$ that contain at least one vertex of $P$. To update the data structure, the procedure executes the \construct\ procedure on all cells of $\aff{P}$, processing cells at lower layers of $Q$ first. 

\paragraph{Efficiency of \augment}
To bound the efficiency of the \augment\ procedure, we consider the most expensive portion, which is the time taken for the calls to the \construct\ procedure on all affected pieces. Consider an execution of \augment\ that produced an augmenting path $P$. Recall that, from Lemma \ref{lem:constructefficiency}, the time taken for a single call to \construct\ on a cell of level $j$ is $\BigOT(\mu_j^3)$, which matches the time taken for the calls to the \sync\ procedure during the execution of \search\ that generated $P$. Using an identical argument, we can conclude that the total time taken by \augment\ is:
\[\BigOT(\sum_{j=0}^{\lfloor 2i/3\rfloor}|\affj{P}{j}|\mu_j^3).\]

Next, we show that the \augment\ operation will not violate compressed feasibility. For any point $p$ on the augmenting path, all clusters that contain $p$ have the same dual weight as $p$; this follows from Corollary~\ref{cor:auguse} and the fact that $P$ was found by recursively applying \sync\ on an admissible path in $\mathcal{AG}_{\cell^*}$. As was the case in Section~\ref{sec:algorithm}, the \augment\ procedure only reduces the dual weights of vertices in $A\cup B$. The \init\ procedure, when applied at the ancestors, may reduce the dual weights of some clusters. Recollect that the dual updates are done so that the local edges that they participate in satisfy (C2). Reducing the the dual weight of any cluster of type $B$ or reducing (i.e., increasing the magnitude of) the dual weight of any cluster of type $A$ only increases the slack on non-local edges. As a result, the compressed feasibility conditions holds continue to hold.    

\subsection{Simple Augmenting Paths} When the algorithm executes the \search\ procedure to generate an admissible augmenting path $P$, it is important that this augmenting path is simple, having no self-intersections. In this section, we argue that all augmenting paths generated by the algorithm are simple. We begin by specifying a useful property of any cycle in $\res{\cell^*}$.
\begin{lemma}
    \label{lem:slacknetcostcycle}
    Let $C$ be any cycle in $\res{\cell^*}$. Then $s(C) = \phi(C)$.
\end{lemma}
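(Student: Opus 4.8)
The plan is to reuse the telescoping idea behind Lemma~\ref{lem:dualsslacknetcost}, now applied to a closed walk rather than an open path. Write $C = \langle v_0, v_1, \ldots, v_m = v_0\rangle$ as its sequence of directed edges in $\res{\cell^*}$, and recall the per-edge identity that defines the slack: for every edge $(u,v)$ of $\res{M}$ oriented from $u$ to $v$ we have $s(u,v) = \phi(u,v) - |y(u)| + |y(v)|$, equivalently $\phi(u,v) = |y(u)| - |y(v)| + s(u,v)$. This identity holds for every edge irrespective of whether it is local, non-local, matching, or non-matching (for local edges it holds with $s(u,v)=0$ by the equality in the $Q$-feasibility conditions), so it applies uniformly to all edges of $C$.

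Summing this identity over all edges of $C$ gives
\[
\phi(C) = \sum_{i=0}^{m-1} \phi(v_i,v_{i+1}) = \sum_{i=0}^{m-1} \bigl(|y(v_i)| - |y(v_{i+1})|\bigr) + \sum_{i=0}^{m-1} s(v_i,v_{i+1}).
\]
The first sum on the right telescopes to $|y(v_0)| - |y(v_m)|$, which is $0$ since $C$ is closed ($v_m = v_0$); the second sum is exactly $s(C)$ by definition. Hence $\phi(C) = s(C)$, as claimed.

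I do not expect any genuine obstacle here: the statement is a pure bookkeeping consequence of the definition of slack, and it would hold for any directed cycle (indeed any closed walk) in $\res{\cell^*}$ with respect to any compressed-feasible dual assignment. Alternatively, one obtains it in one line as a corollary of Lemma~\ref{lem:dualsslacknetcost}: viewing $C$ as an alternating path that starts and ends at $v_0$, the boundary term $|y(v_0)| - |y(v_0)|$ vanishes and the lemma reads $\sum_{(u,v)\in C} s(u,v) = \phi(C)$. The only point worth a sentence of care is that a cycle in $\res{\cell^*}$ is automatically alternating (matching edges are oriented $A\to B$ and non-matching edges $B\to A$), so the hypotheses of the referenced lemma are met.
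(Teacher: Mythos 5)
Your proof is correct and follows the same argument as the paper: sum the per-edge identity $\phi(u,v) = |y(u)| - |y(v)| + s(u,v)$ over the cycle and observe that the dual-weight terms cancel because each vertex appears once as a head and once as a tail. The paper's proof is the same telescoping/cancellation computation, so no further comment is needed.
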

\begin{proof}
    Observe that, 
    \[s(C) = \sum_{(u,v) \in C} \phi(u,v) - |y(u)| + |y(v)| = \phi(C) + \sum_{(u,v) \in C} - |y(u)| + |y(v)|.\]
    Since each vertex of $C$ occurs as the head and tail of exactly one edge of $C$, the net contribution of each dual weight to $s(C)$ is $0$. Therefore, for any alternating cycle $C$ in $\mathcal{AG}_{\cell}$, we have $s(C) = \phi(C)$.
\end{proof}
Note that any path $P$ returned by the \search\ procedure must contain at least one non-local edge. Therefore, to argue that the algorithm never produces an admissible cycle, it is sufficient to argue that the graph $\res{\cell^*}$ does not contain any cycles with both $0$ net-cost and at least one non-local edge.  It is worth noting that, since local edges are admissible, any cycle consisting solely of local edges of the same class is admissible. However, the \search\ procedure will never return such a cycle as part of a path.

At the beginning of the algorithm, all edges are non-local, and have a positive net-cost. Therefore, we can assume the claim holds initially. The only operation performed by the algorithm that changes net-costs in $\res{\cell^*}$ is augmentation. So, it is sufficient to argue that, if there were no $0$ net-cost cycles with a non-local edge prior to some augmentation, there are also no such cycles after augmentation. Lemma \ref{lem:slacknetcostcycle} implies that any cycle has zero net-cost iff it is admissible with respect to \emph{every} possible $Q$-feasible dual assignment. Therefore, it suffices to argue that every cycle with at least one non-local edge after augmentation is inadmissible with respect to any single $Q$-feasible dual assignment. 

The algorithm does not explicitly maintain a $Q$-feasible set of dual weights, but it does implicitly maintain an associated $Q$-feasible matching. Namely, the \genduals\ procedure accepts a compressed feasible matching as input and returns a $Q$-feasible set of dual weights $y(\cdot)$ for the points of $A_{\cell^*} \cup B_{\cell^*}$. Instead of generating this set of dual weights in its entirety, the algorithm only generates the dual weights that may change during augmentation, i.e., those along the augmenting path $P$. Furthermore, the augmenting path $P$ produced by the algorithm is admissible w.r.t. $y(\cdot)$  from Corollary \ref{cor:auguse}. We can describe a set of $Q$-feasible dual weights $y'(\cdot)$ after the augmentation; for any vertex $v$ not on $P$, $y'(v) = y(v)$, and for any vertex on $P$, the new dual weight $y'(v)$ is assigned explicitly by the \augment\ procedure. Since the \augment\ procedure only reduces dual weights, $y'(v)$ is a $Q$-feasible matching with respect to the matching $M'$ after augmentation. Therefore, we simply need to argue that there are no admissible cycles w.r.t. $y'(\cdot)$ in $\res{M'}$ that have at least one non-local edge. We argue this in the following lemma. 
\begin{lemma}
    Let $M, y(\cdot)$ be any $Q$-feasible matching such that $\res{M}$ does not contain any admissible cycles with at least one non-local edge, and let $P$ be an admissible compact augmenting path with respect to $y(\cdot)$. Consider the matching $M' = M \oplus P$ and the set of dual weights $y'(\cdot)$ assigned during augmentation. Then $\res{M'}$ does not contain any admissible cycles w.r.t. $y'(\cdot)$ with at least one non-local edge.
\end{lemma}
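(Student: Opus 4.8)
The plan is to argue by contradiction: suppose there is an admissible cycle $C$ in $\res{M'}$ (with respect to $y'(\cdot)$) containing at least one non-local edge. I will derive from $C$ an admissible cycle in $\res{M}$ with at least one non-local edge, contradicting the hypothesis. The key structural fact is that $\res{M'}$ and $\res{M}$ differ only on the edges of the augmenting path $P$: precisely, the edges of $P$ have their orientation reversed, and since $P$ was admissible with respect to $y(\cdot)$ and $P \oplus M = M'$, every reversed edge is also admissible with respect to $y'(\cdot)$ (its slack is $0$ both before and after, as $y'$ agrees with $y$ off $P$ and the \augment\ dual changes were designed to keep the edges of $P$ at zero slack). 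Edges disjoint from $P$ have the same orientation and, since $y' = y$ on their endpoints, the same slack.

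First I would set up the symmetric-difference argument at the level of edge sets. Consider $C \oplus P$ (viewing both as sets of undirected edges of $\origG$). Because $P$ is a simple path and $C$ is a cycle, $C \oplus P$ decomposes into a collection of edge-disjoint alternating cycles and alternating paths in $\res{M}$; more carefully, every such closed/open walk in $C \oplus P$ has all of its edges admissible with respect to $y(\cdot)$, since each edge of $C \oplus P$ is either an edge of $C$ not on $P$ (admissible w.r.t. $y' = y$) or an edge of $P$ (admissible w.r.t. $y$). The endpoints of $P$ are free in $M$ but not in $M'$; inside $C$ they are matched by $M'$-edges, so in $C \oplus P$ their $M$-degree and non-$M$-degree balance out, and $C \oplus P$ contains only cycles, no paths. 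So $C \oplus P$ is a disjoint union of admissible (zero-net-cost, by Lemma~\ref{lem:slacknetcostcycle}) alternating cycles in $\res{M}$.

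Next I would show that at least one of these cycles contains a non-local edge. Here I use the assumed compactness of $P$ together with Lemma~\ref{lem:slacknetcostcycle}: the net-cost of every cycle in $C \oplus P$ is zero, and $\phi(C) + \phi(P) $ relates to $\sum$ of the net-costs of the cycles in $C \oplus P$ up to edges counted twice with opposite sign; since $\phi(C) = s(C) = 0$ and $P$ is an admissible augmenting path from a free $B$-vertex to a free $A$-vertex, $\phi(P) = |y(b)| - |y(a)| + \sum_{e \in P} s(e) = |y(b)| \ge \mu_{i-1}^2 > 0$ (using invariant (I2)/(J) and $s(e)=0$ on $P$). Thus $\sum$ of net-costs over the cycles of $C \oplus P$ is strictly positive, so not all of them can be zero — unless some edge of $P$ appears in $C$ with the matching/non-matching roles that make cancellations possible. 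I would instead argue directly: if every cycle in $C \oplus P$ were purely local (all edges local and of a single class, the only kind of admissible cycle in $\res{M}$ by hypothesis), then $C$ would be expressible as a symmetric difference of $P$ with purely-local cycles, forcing every non-local edge of $C$ to lie on $P$; but the non-local edges of $P$ all become \emph{local} in $M'$ exactly when $C$ uses them in the matched direction, and a short case analysis on the at-most-3 contiguous local edges (the compactness of $P$) shows $C$ cannot then have a non-local edge in $\res{M'}$ — a contradiction. The main obstacle, and the step needing the most care, is precisely this last bookkeeping: tracking which edges of $P$ flip from non-local to local (and possibly which were already local), verifying that the \augment\ dual assignment leaves them admissible, and confirming that no admissible non-local cycle can survive in $\res{M'}$ without inducing an admissible non-local cycle in $\res{M}$; the compactness hypothesis on $P$ is what keeps this case analysis finite.
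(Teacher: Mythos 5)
There is a genuine gap here, and it sits exactly where you flag it yourself: the ``short case analysis'' at the end is the entire content of the lemma, and the machinery you set up before it does not deliver it. Two of your intermediate claims are also not sound as stated. First, $C\oplus P$ need not decompose into cycles only: an endpoint of $P$ has degree $1$ in $P$, so if it lies on $C$ it has odd degree in $C\oplus P$ (and degree $1$ if it does not lie on $C$), so the symmetric difference generically contains an open path; the ``$M$-degree and non-$M$-degree balance out'' remark does not repair this parity obstruction. Second, you assert that every edge of $C$ not on $P$ is admissible w.r.t.\ $y$ because ``$y'=y$ off $P$,'' but such an edge may be incident to a \emph{vertex} of $P$ whose dual weight strictly decreased during \augment; for that edge the slacks w.r.t.\ $y$ and $y'$ differ, so admissibility does not transfer. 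Your net-cost accounting ($\phi(C)=0$, $\phi(P)>0$) also does not close the argument, because $\phi$ of an edge flips sign and can change value (local vs.\ non-local) between $\res{M}$ and $\res{M'}$, so the net-costs of the pieces of $C\oplus P$ do not simply add up to $\phi(C)+\phi(P)$ — as you yourself concede.

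The missing idea, which the paper uses as its first and decisive step, is a monotonicity observation: every dual weight modified by \augment\ strictly decreases in value (equivalently, strictly increases in magnitude for $A$-vertices and decreases for $B$-vertices), and this gives every non-matching edge of $\res{M'}$ incident on such a vertex a \emph{strictly positive} slack w.r.t.\ $y'$. Since each vertex of an alternating cycle is incident to a non-matching edge of that cycle, an admissible $C'$ cannot touch any vertex whose dual changed. This immediately rules out $C'$ using any $M'$-edge that was non-local before augmentation (one of its endpoints must have had its dual drop by $\mu_{uv}^2$), so the only edges of $P$ available to $C'$ are new matching edges that were previously \emph{local} non-matching edges; each of these can be replaced by the length-$3$ admissible local path $\langle v,u',v',u\rangle$ in $\res{M}$, yielding an admissible cycle in $\res{M}$ that retains every non-local edge of $C'$ — the desired contradiction. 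Note the paper never needs the symmetric difference $C\oplus P$, nor the compactness of $P$, for this construction; without the monotonicity step your approach has no way to control which edges of $P$ the cycle $C'$ can use, and that is precisely the bookkeeping you leave open.
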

\begin{proof}
Assume for the sake of contradiction that $\res{M'}$ contains an admissible alternating cycle $C'$ with at least one non-local edge. Observe that any vertex that experiences a dual weight change during augmentation only has its dual weight strictly reduce. This causes all non-matching edges incident on it to accumulate a strictly positive slack with respect to $M', y'(\cdot)$. Since we assumed that $C'$ is admissible, $C'$ cannot contain any such non-matching edge, which implies that $C'$ does not contain a vertex that experienced a dual weight change.

Now, consider any edge $(u,v)$ shared between $C'$ and $P$ that is in $M'$ and was a non-local non-matching edge w.r.t. $M'$. Since $(u,v)$ was an admissible non-local edge, $y(u) + y(v) = d_Q(u,v) + \mu_{uv}$. After augmentation, $(u,v)$ is a feasible matching edge with $y'(u) + y'(v) = d_Q(u,v)$. Therefore, the dual weight of one of the endpoints of $(u,v)$ decreased. Since $C'$ cannot use any vertex that experienced a dual weight decrease, $C'$ cannot use any edge $(u,v)$ that is in $M'$ but was non-local w.r.t. $M$.

We conclude that any edge $(u,v)$ of $M'$ on $C'$ must have been a local non-matching edge prior to augmentation. Since $(u,v)$ was local in $M$, $u$ was matched to a vertex $v'$, $v$ was matched to a vertex $u'$, and there must have been another non-matching local edge directed from $u'$ to $v'$ in $\res{M}$. Therefore, $\res{M}$ contains an admissible path $P_{u,v}=\langle v, u', v', u\rangle$ from $v$ to  $u$ in $\res{M}$. 

Using this fact, we can craft an admissible cycle $C$ in $\res{M}$ as follows: For any edge $(u,v)$ on $C'$ that is not on $P$, we add $(u,v)$ to $C$. Since neither $u$ nor $v$ experienced a dual weight change, any such edge is admissible w.r.t. $\res{M}$. For any edge $(u,v)$ on $C'$ that is also on $P$, we add the edges of $P_{u,v}$ to $C$. Note that all edges of $P_{u,v}$ were admissible local edges in $\res{M}$. Thus, $C$ forms an admissible cycle in $\res{M}$. Furthermore, any non-local edge of $C'$ is also a non-local edge in $C$. Since we assumed that $C'$ contains at least one non-local edge, this contradicts the assumption that $\res{M}$ did not contain any admissible cycles with at least one non-local edge.
\end{proof}

\section{Transforming Input}
\label{sec:transform}

In this section, given any point sets $A', B'\subset \mathbb{R}^2$ of $n$ points, we generate point sets $A$ and $B$ with $n$ points each such that each point of $A'$ (resp $B'$) maps to a unique point of $A$ (resp. $B$) and:
\begin{itemize}
     \item[(A1)] Every point in $A \cup B$ has non-negative integer coordinates bounded by $\Delta =n^{\BigO(1)}$, 
    \item[(A2)] No pair of points $a,b$ where $a \in A$ and $b \in B$ are co-located, i.e., $\|a-b\| \ge 1$,
    \item[(A3)] The optimal matching of $A$ and $B$ has a cost of at most $\BigO(n/\eps^2)$, and,
    \item[(A4)] Any $\eps$-approximate matching of $A$ and $B$ corresponds to an $3\eps$-approximate matching of $A'$ and $B'$.
\end{itemize}
 We start by computing an $n^{\BigO(1)}$-approximation of the optimal matching cost. First, we compute a $2n^2$-approximate bottleneck matching $M_\mathcal{B}$ of $A',B'$ in $\BigO(n\log{n})$ time using the algorithm of~\cite{av_scg04}; see Lemma 2.2 in their paper. Let $\beta$ be the optimal bottleneck distance; then each edge of $M_{\mathcal{B}}$ has a length of at most $2n^2\beta$. Therefore, the cost of $M_{\mathcal{B}}$ under squared-Euclidean distance is at most $2n^5\beta^2$. On the other hand, at least one edge of the optimal squared-Euclidean matching $\Mopt$ must have a length that is at least $\beta^2$; otherwise, there is a smaller bottleneck matching distance than $\beta$. Therefore, the optimal squared Euclidean cost is at least $\beta^2$. We conclude that $\cost{M_{\mathcal{B}}} \leq 2n^5\cost{\Mopt}$. 

Next, let $\Gamma=\sqcost{M_\mathcal{B}}$.  Then for each integer $i$ such that $2^i \in [\Gamma / (2n^5),\Gamma]$, let $\gamma=2^i$. For at least one of these $\BigO(\log{n})$ values of $\gamma$, we will have
\begin{equation}
\label{eq:gamma1}
    \gamma \leq \sqcost{\Mopt} \leq 2\gamma.
\end{equation}
For a sufficiently large constant $c_1$, we can execute $c_1n^{5/4}\poly\{\log{n}, 1/\eps\}$ steps of the algorithm for each value of $\gamma$, and, out of all executions that terminate, choose the one whose generated matching has the smallest cost. Therefore, we can assume that our algorithm has a value of $\gamma$ that satisfies \eqref{eq:gamma1}.
We rescale the point set by dividing all coordinates by $\sqrt{\frac{\gamma\eps^2}{256n}}$.
Let $\hat{A}$ and $\hat{B}$ be the resulting scaled points from $A'$ and $B'$ respectively. Since the scaling was uniform, the optimal matching $\hat{M}_{OPT}$ with respect to the scaled points $\hat{A}, \hat{B}$ is also optimal with respect to the original point sets $A',B'$. Similarly, an $\eps$-approximate matching with respect to $\hat{A},\hat{B}$ is also an $\eps$-approximate matching with respect to the original points $A',B'$. As a result of the scaling, it is easy to see from \eqref{eq:gamma1} that the resulting optimal matching cost can be bounded by,
\begin{equation}
\label{eq:optscaledcost}
    256n/\eps^2 \leq \sqcost{\Mopth} \leq 512n/\eps^2.
\end{equation}
Next, we explain how to ensure that the diameter of the point set is polynomial in $n$. We construct a randomly shifted grid $\hat{G}$, where each cell in the grid has side-length $2048\cdot n^3/\eps^2$. Since each edge of the optimal matching has cost at most $512n/\eps^2$, each edge of $\hat{M}_\opt$ has a probability of at most $1/n^2$ of crossing between two different cells of $\hat{G}$. The probability that at least one of the $n$ optimal matching edges crosses between different cells is at most $1/n$. Therefore, we can split the point set using cells of $\hat{G}$, treating the points within each cell as a separate problem, and combine the resulting matchings together. With probability at least $1-1/n$, this splitting of points will not destroy any edges of the optimal matching. Therefore, we can assume that the points are non-negative coordinates bounded by $\Delta = n^{\BigO(1)}$.

Finally, we round the point sets $\hat{A},\hat{B}$ to integer coordinates $A$ and $B$. Given any location $p=(x,y)$ with integer coordinates, we say $p$ is \emph{even} if $x+y$ is even. Otherwise, $p$ is $\emph{odd}$. In order to ensure that no point of $A$ appears at the same location as a point of $B$, we round each point of $\hat{A}$ to the nearest even location and round each point of $\hat{B}$ to the nearest odd location. Note that, after this rounding, every edge $(a,b) \in A \times B$ has a length of at least $1$. The following Lemma proves that this rounding process distorts the cost of any matching by at most a $(1+\eps)$ factor. Noting that $(1+\eps)^2 \leq 3\eps$ implies (A4), completing the proof of properties (A1)--(A4). 
\begin{lemma}
For any point $\hat{a} \in \hat{A}$ (resp. $\hat{b} \in \hat{B}$) prior to rounding, let $a\in A$ (resp. $b \in B$) be the corresponding point after rounding. Let $\hat{M}$ be any matching with respect to $\hat{A}, \hat{B}$, and let $M = \bigcup_{(\hat{a},\hat{b}) \in \hat{M}}(a,b)$ be the corresponding matching with respect to the transformed points $A,B$. Then,
$$\sqcost{M} \leq (1+\eps)\sqcost{\hat{M}}.$$
\end{lemma}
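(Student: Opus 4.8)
The plan is to control how far each point moves under the rounding, propagate this edge-by-edge through the triangle inequality, and then sum, absorbing the two resulting lower-order error terms into an $\eps$-fraction of $\sqcost{\hat M}$ by means of the lower bound $\sqcost{\hat M} = \Omega(n/\eps^2)$ guaranteed by \eqref{eq:optscaledcost}. Throughout I will use that $\hat M$ is a perfect matching: this is the only case the transformation needs, and it is precisely what yields $\sqcost{\hat M} \ge \sqcost{\Mopth} \ge 256 n/\eps^2$.

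First I would bound the per-point displacement. The even integer locations form a sublattice of $\mathbb{Z}^2$ generated by $(1,1)$ and $(1,-1)$; its Voronoi cell about the origin is the square $\{(x,y) : |x+y|\le 1,\ |x-y|\le 1\}$, so the covering radius is exactly $1$, with the odd locations sitting as the deepest holes (and symmetrically for the odd sublattice after a unit shift). Hence $\|a-\hat a\|\le 1$ and $\|b-\hat b\|\le 1$ for every matched pair, so for each $(\hat a,\hat b)\in\hat M$ with image $(a,b)\in M$ the triangle inequality gives $\|a-b\| \le \|\hat a-\hat b\| + 2$, and therefore $\|a-b\|^2 \le \|\hat a-\hat b\|^2 + 4\|\hat a-\hat b\| + 4$.

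Next I would sum over the $n$ edges of $\hat M$, obtaining $\sqcost{M} \le \sqcost{\hat M} + 4\sum_{(\hat a,\hat b)\in\hat M}\|\hat a-\hat b\| + 4n$, and bound the two error terms. Cauchy--Schwarz gives $\sum_{(\hat a,\hat b)\in\hat M}\|\hat a-\hat b\| \le \sqrt{n}\,\sqrt{\sqcost{\hat M}}$; combined with $\sqrt n \le (\eps/16)\sqrt{\sqcost{\hat M}}$ (which is exactly $\sqcost{\hat M}\ge 256 n/\eps^2$) this bounds the linear term by $(\eps/4)\sqcost{\hat M}$, while $n \le (\eps^2/256)\sqcost{\hat M}$ bounds the constant term $4n$ by $(\eps^2/64)\sqcost{\hat M}\le (\eps/64)\sqcost{\hat M}$ since $\eps\le 1$. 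Adding these up yields $\sqcost{M}\le(1+\eps/4+\eps/64)\sqcost{\hat M}\le(1+\eps)\sqcost{\hat M}$, which is the claim; this also explains why the rescaling step targeted a constant as large as $256$.

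I do not expect a genuine obstacle here: the computation is elementary once the two ingredients are in place. The two points that need care are (i) the invocation of $\sqcost{\hat M}=\Omega(n/\eps^2)$, which is where perfectness of $\hat M$ and the earlier rescaling are essential (the statement would be false for an arbitrary tiny matching, since every rounded edge has length at least $1$), and (ii) the $O(1)$ covering-radius computation for the even and odd sublattices — if one instead implements the rounding by first snapping to the nearest integer point and then flipping parity, the displacement is at most $1/\sqrt2 + 1 < 2$ per point and the identical estimate, with the explicit constants enlarged, still fits under the $256 n/\eps^2$ budget.
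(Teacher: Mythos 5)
Your proof is correct and follows the same overall plan as the paper's: bound the per-point displacement by a constant, expand $(\disteuc{\hat{a}}{\hat{b}}+2)^2$, and absorb the resulting linear and constant error terms into $\eps\,\sqcost{\hat{M}}$ via the lower bound $\sqcost{\hat{M}} \geq 256n/\eps^2$ from the rescaling. The one step you handle differently is the bound on $\sum_{(\hat{a},\hat{b})\in\hat{M}}\disteuc{\hat{a}}{\hat{b}}$: the paper splits the edges into those of length at most $32/\eps$ and the rest, using \emph{both} the lower bound and the upper bound $\sqcost{\hat{M}}\leq 512n/\eps^2$ of \eqref{eq:optscaledcost} to control the two groups, whereas you use Cauchy--Schwarz together with only the lower bound. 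Your route is slightly cleaner and arguably more robust: the upper bound in \eqref{eq:optscaledcost} is stated for the optimal matching $\Mopth$, so invoking it for an arbitrary perfect matching $\hat{M}$ (as the paper's proof does) requires an extra word of justification that your argument does not need. Your explicit remark that perfectness of $\hat{M}$ is what licenses $\sqcost{\hat{M}}\geq\sqcost{\Mopth}$ is also the right caveat --- the same implicit assumption underlies the paper's proof.
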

\begin{proof}
First consider that the Euclidean length of any edge is distorted by at most $2$ from rounding. We have,
\begin{align*}
    \sqcost{M} &= \sum_{(a,b) \in M} \distsq{a}{b}\\
    &\leq \sum_{(\hat{a},\hat{b}) \in \hat{M}} \distsq{a}{b}\\
    &\leq \sum_{(\hat{a},\hat{b}) \in \hat{M}} (\disteuc{\hat{a}}{\hat{b}} + 2)^2\\
    &= \sum_{(\hat{a},\hat{b}) \in \hat{M}} (\distsq{\hat{a}}{\hat{b}} + 4\disteuc{\hat{a}}{\hat{b}} + 4)\\
    &=\sqcost{\hat{M}} + \sum_{(\hat{a},\hat{b}) \in \hat{M}} (4\disteuc{\hat{a}}{\hat{b}} + 4).
\end{align*}
We must show that $\sum_{(\hat{a},\hat{b}) \in \hat{M}} (4\disteuc{\hat{a}}{\hat{b}} + 4) \leq \eps\sqcost{\hat{M}}$.
Since $\sqcost{\hat{M}} \geq 256n/\eps^2$, it is sufficient to show that the quantity $\sum_{(\hat{a},\hat{b}) \in \hat{M}} (4\disteuc{\hat{a}}{\hat{b}} + 4)$ is at most $256n/\eps$, or that,  $\sum_{(\hat{a},\hat{b}) \in \hat{M}} \disteuc{\hat{a}}{\hat{b}} \leq 63n/\eps$. To bound this quantity, we divide the edges of $\hat{M}$ into two groups. First, consider that all the edges $(\hat{a},\hat{b}) \in \hat{M}$ with $\disteuc{\hat{a}}{\hat{b}} \leq 32 / \eps$ contribute a total value of at most $32n/\eps$. Next, consider the edges $(\hat{a},\hat{b}) \in \hat{M}$ with $\disteuc{\hat{a}}{\hat{b}} > 32/\eps$. For each such edge, we have, $\disteuc{\hat{a}}{\hat{b}} \leq \eps\distsq{\hat{a}}{\hat{b}}/32$. Since $\sqcost{\hat{M}} \leq 512n/\eps^2$, the total contribution from these edges is at most $16n^2 / \eps$. Thus, $\sum_{(\hat{a},\hat{b}) \in \hat{M}} \disteuc{a}{b} \leq 63n/\eps$, completing the proof.
\end{proof}
\ignore{See Comment}

\section{Quadtree Distance Proofs}
\label{sec:ommittedproofs}
\subsection{Proof of Lemma \ref{lem:subcellcount}}

For any cell $\cell$ of $Q$ with level $i$, the total number of subcells is $\BigOT(\mu_i)$. 

\begin{proof}
We bound the number of subcells by giving an upper bound on the number of leaves of $Q_{\cell}$. Let $p$ be the center of $\cell$ and let $j=  \lfloor i/2\rfloor - 2\log \frac{\log \Delta}{\eps} - c_1$ be such that $\mu_i=2^j$ is the minimum subcell size for $\cell$. Consider a set of concentric axis-parallel squares $S_1',\ldots, S_t'$, where each $S_r'$ is centered at $p$ with a side-length $2^i - (\frac{144}{\eps}+1)2^{j+r}$. Note that $t < i-j$. Consider all cells of $G_{j+r}$ that are completely contained inside $S_r'$. Let $S_r$ be the bounding square of these cells. Note that the distance $\ell_{\min}(S_r,\cell)$  is at least $\frac{144}{\eps}2^{j+r}$ and therefore, all subcells of $\subcells{\cell}$ inside the square $S_r$ (by condition (b) for subcell construction) are cells of $G_k$ for some $k \ge j+r$. The total number of subcells in the region $S_{r}\setminus S_{r+1}$ can be bounded by the maximum number of cells of $G_{j+r}$ that can fit inside this region. The side-length of $S_r$ is at least $2^i - (\frac{144}{\eps}+1)2^{j+r}$ and the side-length of $S_{r+1}$ is at most $2^i - (\frac{144}{\eps})2^{j+r+1}$. Therefore, the total number of cells of $G_{j+r}$ that can fit inside this region is $\BigOT(2^{i-j-r})$. The values for $r$ can range from $1$ to $i-j$. Therefore, the total number of subcells is at most $\BigOT(i 2^{i-j})= \BigOT(\mu_i)$. 
\end{proof}

\subsection{Proof of Lemma \ref{lem:distapprox}}
In the following lemma, we use the $Q$-feasibility conditions to upper bound the cost of any $Q$-optimal matching by $\sum_{(a,b) \in \Mopt}d_Q(a,b) + \mu_{ab}^2$. This will assist in proving Lemma \ref{lem:distapprox}.

\begin{lemma}
\label{lem:qopt}
For any $Q$-optimal matching $M$ and set of dual weights $y(\cdot)$ on the vertices of $A \cup B$, then 
$w(M) \le \sum_{(a,b) \in \Mopt}d_Q(a,b) + \mu_{ab}^2$.
\end{lemma}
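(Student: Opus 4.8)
The plan is to run the standard weak-duality argument for the primal--dual pair defined by $Q$-feasibility, exploiting the fact that \emph{every} matching edge is local. First I would observe that each edge $(a,b)\in M$ is local: by construction the classes $\{A_k\times B_k\}_{k=1}^h$ partition the matched vertices, and a matching edge $(a,b)\in M_k$ has $a\in A_k$ and $b\in B_k$, so $(a,b)\in A_k\times B_k$. Hence the equality condition \eqref{eq:feas2} applies to \emph{all} of $M$, giving $y(a)+y(b)=d_Q(a,b)$ for every $(a,b)\in M$. Combining this with the lower bound \eqref{eq:distlb}, namely $\distsq{a}{b}\le d_Q(a,b)$, yields
\[
w(M)=\sum_{(a,b)\in M}\distsq{a}{b}\;\le\;\sum_{(a,b)\in M}d_Q(a,b)\;=\;\sum_{(a,b)\in M}\bigl(y(a)+y(b)\bigr).
\]

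Next I would use that $M$ is a \emph{perfect} matching to rewrite $\sum_{(a,b)\in M}(y(a)+y(b))=\sum_{v\in A\cup B}y(v)$, since every vertex is covered exactly once; and then, because $\Mopt$ is also a perfect matching on the same vertex set, $\sum_{v\in A\cup B}y(v)=\sum_{(a,b)\in \Mopt}(y(a)+y(b))$. Finally, applying the feasibility \emph{inequality} \eqref{eq:feas1} to each edge of $\Mopt$ (these edges need not be local), we get $y(a)+y(b)\le d_Q(a,b)+\mu_{ab}^2$, and therefore
\[
w(M)\;\le\;\sum_{(a,b)\in \Mopt}\bigl(y(a)+y(b)\bigr)\;\le\;\sum_{(a,b)\in \Mopt}\bigl(d_Q(a,b)+\mu_{ab}^2\bigr),
\]
which is exactly the claimed bound.

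There is no genuine obstacle here; the one point that needs care is recognizing the asymmetry between $M$ and $\Mopt$ — on $M$ the \emph{tight} constraint \eqref{eq:feas2} is available (matching edges are always local), whereas on $\Mopt$ only the slack-admitting constraint \eqref{eq:feas1} can be used, and it is precisely this slack that produces the additive $\mu_{ab}^2$ terms in the statement. This lemma is exactly the bridge needed for Lemma~\ref{lem:distapprox}: taking expectations over the random shift of $Q$ and then bounding $\expect{\sum_{(a,b)\in \Mopt}\bigl(d_Q(a,b)+\mu_{ab}^2\bigr)}$ by $(1+\eps/2)\sum_{(a,b)\in \Mopt}\distsq{a}{b}$ gives the stated approximation guarantee.
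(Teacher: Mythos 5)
Your proof is correct and follows the paper's argument essentially verbatim: lower-bound $w(M)$ by $\sum_{(a,b)\in M} d_Q(a,b)$ via \eqref{eq:distlb}, use the tightness of \eqref{eq:feas2} on the (local) matching edges of $M$ to convert to $\sum_{v\in A\cup B} y(v)$, regroup over the perfect matching $\Mopt$, and apply \eqref{eq:feas1}. Your remark on the asymmetry between the tight constraint on $M$ and the slack-admitting constraint on $\Mopt$ is exactly the right reading of where the $\mu_{ab}^2$ terms come from.
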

\begin{proof}
 For any edge $(a,b)$ in the matching $M$, from equation~\eqref{eq:distlb}, $\distsq{a}{b} \le d_Q(a,b)$ and so, 
\begin{equation}
    \label{eq:opt}
    w(M) = \sum_{(a,b) \in M}\distsq{a}{b} \le \sum_{(a,b) \in M}d_Q(a,b).
\end{equation}

If $(a,b) \in M$, then $(a,b)$ is local and from~\eqref{eq:feas2} we have $y(a)+y(b) = d_Q(a,b).$
Since $M$ is a perfect matching, 
$$\sum_{(a,b)\in M} d_Q(a,b) = \sum_{(a,b) \in M} (y(a) + y(b)) = \sum_{v \in A\cup B} y(v).$$

Finally, consider the edges of the optimal matching $M_{\mathrm{OPT}}$. From the fact that $\Mopt$ is a perfect matching, and from the $Q$-feasibility conditions,
\begin{equation}
\label{eq:cost}
\sum_{v \in A \cup B} y(v) = \sum_{(a,b) \in \Mopt}y(a)+y(b)\le \sum_{(a,b) \in \Mopt} d_Q(a,b) + \mu_{ab}^2.
\end{equation}
Combining equations \eqref{eq:feas2}, \eqref{eq:opt}, and \eqref{eq:cost} completes the proof.
\end{proof}
To prove Lemma \ref{lem:distapprox}, we first need to show the following auxiliary claim.
\begin{lemma}
\label{lem:distaux}
For any two points $p,q \in A\cup B$, let $\square$ be the least common ancestor of $p$ and $q$ in $Q$, where $\square$ is a cell in $G_i$, and let $(\Psi_p, \Psi_q)$ be the WSPD pair in $\mathcal{W}_{\cell}$ that contains $p$ and $q$ respectively. 
\begin{itemize}
    \item[(i)] If $\disteuc{p}{q} \ge (144/\eps)\mu_i$, then $d_Q(p,q)+\mu_{pq}^2 \le (1+3\eps/8)\distsq{p}{q}$,
    \item[(ii)] If $\disteuc{p}{q} < (144/\eps)\mu_i$, then the subcells that contain $p$ and $q$, i.e., $\xi_p, \xi_q$,  have a side-length of $\mu_i$.
\end{itemize}
\end{lemma}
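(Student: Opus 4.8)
The plan is to exploit the two stopping rules (a) and (b) in the subcell construction together with the WSPD guarantee~\eqref{eq:wspdguarantee} and the geometry of the quadtree. First I would record a basic geometric fact: if $\square$ is the least common ancestor of $p$ and $q$ at level $i$, then $p$ and $q$ lie in two different children $\square_a, \square_b$ of $\square$, and since each child has side-length $2^{i-1}$ and the whole of $\square$ has side-length $2^i$, we have $\disteuc{p}{q} \le \ell_{\max}(\xi_p,\xi_q) \le \ell_{\max}(\Psi_p,\Psi_q) \le \sqrt{2}\,2^i$, and also $\disteuc{p}{q}\ge \ell_{\min}(\square_a,\square_b)\ge 0$ (this last lower bound is not useful by itself, but it tells us the children are distinct). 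This gives the crude upper bound $\mu_{pq}=\mu_i = 2^{\lfloor i/2\rfloor - 2\log(\log\Delta/\eps) - c_1}$, which is polynomially smaller than $2^i$; this smallness is what drives part (i).

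For part (ii), suppose $\disteuc{p}{q} < (144/\eps)\mu_i$. I would argue by contradiction: if, say, $\xi_p$ had side-length strictly larger than $\mu_i$, then by the subcell construction $\xi_p$ satisfied stopping rule (b) rather than (a), so its side-length is at most $(\eps/144)\,\ell_{\min}(\square_1,\xi_p)$, where $\square_1$ is the child of $\square$ containing $\xi_p$. But a cell whose side-length exceeds $\mu_i$ is strictly contained in its parent in $Q_\square$, whose side-length is at least $2\mu_i$; tracing this, $\ell_{\min}(\square_1,\xi_p)$ must be comparable to the side-length of $\xi_p$, hence $\Omega(\mu_i)$, and in fact large enough that $\disteuc{p}{q}$ — which is at least the distance from $\xi_p$ to the opposite child, and thus at least $\ell_{\min}(\square_1,\xi_p)$ up to constants — would exceed $(144/\eps)\mu_i$, a contradiction. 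So both $\xi_p$ and $\xi_q$ have side-length exactly $\mu_i$. (I would be careful here to use the precise form of rule (b) and the fact that $\ell_{\min}(\square_1,\cdot)$ grows as one moves inward, as noted right after the subcell definition.)

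For part (i), assume $\disteuc{p}{q} \ge (144/\eps)\mu_i$. The key estimate is to bound $\ell_{\max}(\Psi_p,\Psi_q)$ in terms of $\disteuc{p}{q}$. By the WSPD guarantee~\eqref{eq:wspdguarantee}, $\ell_{\max}(\Psi_p,\Psi_q) \le (1+\eps/12)\,\ell_{\max}(\xi_p,\xi_q)$. Next I would bound $\ell_{\max}(\xi_p,\xi_q)$ by $\disteuc{p}{q}$ plus the diameters of $\xi_p$ and $\xi_q$: concretely $\ell_{\max}(\xi_p,\xi_q) \le \disteuc{p}{q} + \sqrt{2}(\operatorname{width}(\xi_p)+\operatorname{width}(\xi_q))$. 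Now each subcell width is either $\mu_i$ (rule (a)) — in which case it is $\le (\eps/144)\disteuc{p}{q}$ by hypothesis — or at most $(\eps/144)\ell_{\min}(\square_1,\xi_p) \le (\eps/144)\disteuc{p}{q}$ by rule (b) and the fact that $\ell_{\min}$ to the relevant child is at most $\disteuc{p}{q}$. Either way each width is $O(\eps)\disteuc{p}{q}$, so $\ell_{\max}(\xi_p,\xi_q)\le (1+\eps/24)\disteuc{p}{q}$ for a suitable choice of the constants, and then $\ell_{\max}(\Psi_p,\Psi_q) \le (1+\eps/12)(1+\eps/24)\disteuc{p}{q} \le (1+\eps/6)\disteuc{p}{q}$. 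Squaring gives $d_Q(p,q) = \ell_{\max}(\Psi_p,\Psi_q)^2 \le (1+\eps/6)^2\distsq{p}{q} \le (1+\eps/3)\distsq{p}{q}$ for $\eps\le 1$. Finally, $\mu_{pq}^2 = \mu_i^2 \le (\eps/144)^2\distsq{p}{q}$, which is negligible compared to the slack remaining in $3\eps/8 - \eps/3$; adding it in still keeps the total below $(1+3\eps/8)\distsq{p}{q}$.

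The main obstacle I anticipate is part (ii): pinning down the geometry of rule (b) precisely enough to force $\xi_p$ and $\xi_q$ down to the minimum size. One has to relate $\ell_{\min}(\square_1, \xi_p)$ — the distance from $\xi_p$ to the boundary of the child of $\square$ it lives in — to the actual separation $\disteuc{p}{q}$ between $p$ and $q$ across two different children, and carefully track constants through the $(\eps/144)$ factor so that the threshold $(144/\eps)\mu_i$ comes out exactly as stated. The arithmetic in part (i) is routine once these geometric lemmas are in hand.
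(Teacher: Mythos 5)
Your proposal is correct and follows essentially the same route as the paper: for (ii), the contrapositive of stopping rule (b) combined with $\disteuc{p}{q}\ge\ell_{\min}(\square_1,\xi_p)$ (which holds exactly, since $q$ lies outside the child containing $\xi_p$ — no ``up to constants'' is needed), and for (i), bounding both subcell widths by $(\eps/144)\disteuc{p}{q}$ via rules (a) and (b), applying the triangle inequality and the WSPD guarantee, and squaring. (One small arithmetic slip: $(1+\eps/6)^2\le 1+\eps/3$ is false — the correct bound is $1+\eps/3+\eps^2/36$ — but the slack you reserve up to $3\eps/8$ easily absorbs this extra term together with $\mu_{pq}^2$, so the conclusion stands.)
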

\begin{proof}
Let $j = \lfloor i/2\rfloor - 2\log \frac{\log \Delta}{\eps} - c_1$. Then the minimum subcell size $\mu_i$ is $2^j$. 
Let $\cell_1$ and $\cell_2$ be the two children of $\cell$ such that $p$ is inside $\cell_1$ and $q$ is inside $\cell_2$.
Let $t$ be an integer such that $\frac{144}{\eps}2^t \le \disteuc{p}{q} \le \frac{144}{\eps}2^{t+1}$. For (i), $t \ge j$ and $\xi_p$ (resp. $\xi_q$) is a cell of grid $G_k$ (resp. $G_{k'}$) such that $k \le t+1$ (resp. $k' \le t+1$). Let $(\Psi_p,\Psi_q)\in \wspd_{\cell}$ be the representative pair of $(p,q)$. The diameters of $\xi_p$ and $\xi_q$ are at most $\sqrt{2}\times2^{t+1} \le \frac{\eps}{36\sqrt{2}}\disteuc{p}{q}$, and, therefore, 
\begin{align*}
    (1+\eps/12)\ell_{\max}(\xi_p,\xi_q) &\le (1+\eps/12)(\disteuc{p}{q} + \sqrt{2}\times 2^{t+2}) \\
    &\le \disteuc{p}{q} (1 + \frac{\eps}{18\sqrt{2}})(1+\eps/12) \\
    &\le (1+\eps/12)(1+\eps/24)\disteuc{p}{q}\\ &\le (1+\eps/8)\disteuc{p}{q}.
\end{align*}
 Similarly, we can bound 
 \begin{align*}
     \mu_{pq}^2 \le 2^{2t+2} \le (\eps/72)^2\distsq{p}{q} \le (\eps/72)\distsq{p}{q}.
 \end{align*}
Combining the previous two bounds together with \eqref{eq:wspdguarantee} gives the following:
\begin{align*}
    d_Q(p,q) +\mu_{pq}^2&=\ell_{\max}(\Psi_p,\Psi_q)^2  +\mu_{pq}^2\\
    &\le (1+\eps/12)^2\ell_{\max}(\xi_p,\xi_q)^2 +\eps/72\distsq{p}{q}\\
    &\le (1+\eps/8)^2\distsq{p}{q} +\eps/72\distsq{p}{q}\\
    &\le (1+3\eps/8)\distsq{p}{q}.
\end{align*}

For (ii), observe that the distances of $\xi_p$ and $\xi_q$ to the boundaries of $\cell_1$ and $\cell_2$ respectively are less than $\frac{144}{\eps}\mu_i$. From the subcell construction procedure, $\xi_p$ and $\xi_q$ should be cells of the minimum subcell size $\mu_i$.
\end{proof}
Finally, in the following Lemma, we argue that, for any edge $(p,q)$ the expected value of the quadtree distance $d_Q(p,q)$ plus the additional additive error $\mu^2_{pq}$ is at most $(1+\eps)/2$ times the squared Euclidean distance $\distsq{p}{q}$. Combining this with Lemma~\ref{lem:qopt} and applying linearity of expectation immediately gives Lemma~\ref{lem:distapprox}.
\begin{lemma}
Given a randomly shifted quad tree $Q$, for any pair of points $(p,q) \in A \times B$, $$\expect{d_Q(p,q) + \mu_{pq}^2} \le (1+\eps/2)\distsq{p}{q}.$$
\end{lemma}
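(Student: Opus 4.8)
The plan is to condition on the level $i$ of the least common ancestor $\square$ of $p$ and $q$ in $Q$ (the only randomness in play) and to split the estimate according to whether the pair is \emph{far} ($\disteuc{p}{q} \ge (144/\eps)\mu_i$) or \emph{close} ($\disteuc{p}{q} < (144/\eps)\mu_i$) at that level. Write $\delta = \disteuc{p}{q}$, recall from (A2) that $\delta \ge 1$, and that every edge appears at some level $i \ge 1$. Then $\expect{d_Q(p,q) + \mu_{pq}^2} = \sum_i \expect{(d_Q(p,q)+\mu_{pq}^2)\,\mathbf{1}[\text{LCA is at level }i]}$, and I would bound the far-level terms and the close-level terms separately.

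For the far levels, Lemma~\ref{lem:distaux}(i) already gives, conditioned on the LCA being at such a level, the deterministic bound $d_Q(p,q) + \mu_{pq}^2 \le (1+3\eps/8)\delta^2$. Summing these terms and bounding the total probability of landing in a far level by $1$ shows that far levels contribute at most $(1+3\eps/8)\delta^2$ to the expectation.

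The bulk of the work is the close levels. First I would show that whenever the LCA is at a close level $i$ the distorted cost is only $\BigO(\mu_i^2/\eps^2)$: by Lemma~\ref{lem:distaux}(ii) the subcells $\xi_p,\xi_q$ containing $p$ and $q$ both have side length $\mu_i$, so $\ell_{\max}(\xi_p,\xi_q) \le \delta + 2\sqrt 2\,\mu_i = \BigO(\mu_i/\eps)$ using $\delta < (144/\eps)\mu_i$; the WSPD guarantee~\eqref{eq:wspdguarantee} then gives $d_Q(p,q) = (\ell_{\max}(\Psi_p,\Psi_q))^2 \le (1+\eps/12)^2(\ell_{\max}(\xi_p,\xi_q))^2 = \BigO(\mu_i^2/\eps^2)$, while $\mu_{pq}^2 = \mu_i^2$ is even smaller. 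Second, I would invoke the standard property of randomly shifted grids: $\prob{\text{LCA of }p,q\text{ is at level }i} \le \prob{p,q\text{ in different cells of }G_{i-1}} \le (|x_p-x_q|+|y_p-y_q|)/2^{i-1} \le \sqrt 2\,\delta/2^{i-1}$. Multiplying, level $i$ contributes $\BigO(\delta\,\mu_i^2/(\eps^2 2^i))$. Plugging in $\mu_i = 2^{\lfloor i/2\rfloor - 2\log(\log\Delta/\eps) - c_1}$ gives $\mu_i^2/2^i \le (\eps/\log\Delta)^4 2^{-2c_1}$, so each close level contributes $\BigO(\delta\,\eps^2 2^{-2c_1}/(\log\Delta)^4)$; since there are only $\BigO(\log\Delta)$ relevant levels (from the close threshold up to the root level $\log_2(2\Delta)$), the total close-level contribution is $\BigO(\delta\,\eps^2 2^{-2c_1}/(\log\Delta)^3)$, which is at most $\eps\delta^2/8$ because $\delta\ge 1$ (hence $\delta\le\delta^2$), $\eps\le 1$, and $c_1$ is a sufficiently large constant.

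Combining the two cases gives $\expect{d_Q(p,q)+\mu_{pq}^2} \le (1+3\eps/8)\delta^2 + \eps\delta^2/8 = (1+\eps/2)\distsq{p}{q}$; by linearity of expectation this together with Lemma~\ref{lem:qopt} also yields Lemma~\ref{lem:distapprox}. The main obstacle is the close-level analysis: one needs both the deterministic bound $\BigO(\mu_i^2/\eps^2)$ on the distorted cost — which is exactly what Lemma~\ref{lem:distaux}(ii) was designed to enable — and the verification that the sum over the logarithmically many high levels, weighted by the $\BigO(\delta/2^i)$ separation probabilities, remains negligible, which is precisely where the aggressively small choice of $\mu_i$ (carrying the $(\eps/\log\Delta)^4$ factor) and the bound $\delta\ge 1$ from the input transformation are used.
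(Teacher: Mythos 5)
Your proposal is correct and follows essentially the same route as the paper's proof: condition on the level of the least common ancestor, use Lemma~\ref{lem:distaux}(i) for the far levels, and for the close levels combine Lemma~\ref{lem:distaux}(ii) with the $\BigO(\delta/2^i)$ separation probability of the random shift and the aggressively small choice of $\mu_i$. The only cosmetic difference is that the paper absorbs the $\delta \le \delta^2$ step directly into its bound $\prob{\square \in G_i} \le \distsq{p}{q}/2^{i-1}$, whereas you invoke $\delta \ge 1$ at the end; both are the same argument.
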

\begin{proof}
Let $\square \in G_i$ be the least common ancestor of $p$ and $q$ in $Q$, and let $\xi_p, \xi_q$ be the subcells that contain $p$ and $q$ respectively. Let $\mathbb{E}_i[d_Q(p,q)]$ be the expected value of of the distance given that $\square \in G_i$.
\begin{eqnarray*}
\expect{d_Q(p,q)} &=& \sum_{i=1}^{\log \Delta} \prob{\square \in G_i}\mathbb{E}_i[d_Q(p,q)+\mu_{pq}^2]\\
&\le& \sum_{i=1}^{\log \Delta} \prob{\square \in G_i}\biggl(\mathbb{E}_i \biggl[d_Q(p,q) + \mu_{pq}^2\mid \disteuc{p}{q} \ge (144/\eps)\mu_i\biggr]\\
& &+\mathbb{E}_i\biggl[d_Q(p,q)+\mu_{pq}^2\mid \disteuc{p}{q} < (144/\eps)\mu_i)\biggr]\biggr)\\
&\le& (1+3\eps/8)\distsq{p}{q} + \sum_{i=1}^{\log \Delta} \prob{\square \in G_i}\biggl(\mathbb{E}_i\biggl[d_Q(p,q)+\mu_{pq}^2\mid \disteuc{p}{q} < (144/\eps)\mu_i\biggr]\biggl).
\end{eqnarray*}
To complete the proof, we upper bound the second term of the RHS by $(\eps/8)\distsq{p}{q}$. First, note that 
$$\prob{\square\in G_i} \le \|p-q\|_1/2^{i-1} \le \distsq{p}{q}/2^{i-1}.$$
Since $\disteuc{p}{q} < (144/\eps)\mu_i$, by Lemma~\ref{lem:distaux}(ii), $\xi_p$ and $\xi_q$ have a side-length of the minimum subcell size $\mu_i$ and therefore, we can bound 
\begin{align*}
  d_Q(p,q) +\mu_{pq}^2&\le (1+\eps/12)\ell_{\max}(\xi_p,\xi_q)^2 +\mu_{pq}^2\\
  &\le (1+\eps/12)(\disteuc{p}{q}+2\sqrt{2}\mu_i)^2 +\mu_i^2 \\
  &\le (1+\eps/12)((144/\eps)\mu_i+2\sqrt{2}\mu_i)^2 +\mu_i^2  \\
  & \le 9000\mu_i^2/\eps^2.
\end{align*}
Recall that the minimum subcell size $\mu_i = 2^{\lfloor i/2\rfloor - 2\log \frac{\log \Delta}{\eps}-c_1}$, where $c_1 > 0$ is a constant. By setting $c_1$ to be sufficiently large, we get $\mu_i^2 \leq \eps^42^i/(16\log^4\Delta)$. Therefore, $d_Q(p,q) +\mu_{pq}^2 \leq \eps2^i/(16\log\Delta)$, and we finally have,
\begin{eqnarray*} \sum_{i=1}^{\log \Delta} \prob{\square \in G_i}\biggl(\mathbb{E}_i\biggl[d_Q(p,q) +\mu_i^2\mid \disteuc{p}{q} < (144/\eps)\mu_i\biggr]\biggl) 
&\le& \sum_{i=1}^{\log \Delta} \biggl((\distsq{p}{q}/2^{i-1}) \biggl(\frac{\eps}{16\log \Delta}\biggr)2^i\biggr) \\&\le& \eps/8 \distsq{p}{q},
\end{eqnarray*}
as desired.
\end{proof}
\subsection{Computing an $\eps$-Approximate Matching with High Probability}
\label{subsec:with-high-probability}
From Lemma \ref{lem:distapprox}, we have
\begin{equation}
    \expect{\sum_{(p,q) \in \Mopt} d_Q(p,q) + \mu_{pq}^2} \le (1+\eps/2)w(\Mopt).
    \label{eq:expected-total-dist}
\end{equation}
However, it is desirable to remove the need for expected values. Instead, we explain how to ensure that
\begin{equation}
    \sum_{(p,q) \in \Mopt} d_Q(p,q) + \mu_{pq}^2 \leq (1+\eps)w(\Mopt)
    \label{eq:distance-approx-no-expectation}
\end{equation}
with high probability. We can then design a $\BigOT(n^{5/4})$ time algorithm for computing a $Q$-optimal matching under the assumption that \eqref{eq:distance-approx-no-expectation} holds. To ensure this assumption, we can execute our algorithm $\log_2(n)$ times, and among all executions that terminate in $\BigOT(n^{5/4})$ time, use the one that produces the smallest cost. Within each execution, from \eqref{eq:distlb} we have
\[w(\Mopt) = \sum_{(p,q) \in \Mopt} \distsq{p}{q} \leq \sum_{(p,q) \in \Mopt} d_Q(p,q) + \mu_{pq}^2.\]

Furthermore, by combining this with \eqref{eq:expected-total-dist} we have that \eqref{eq:distance-approx-no-expectation} holds with probability at least $1/2$. Therefore, the probability that \eqref{eq:distance-approx-no-expectation} is satisfied by at least one of the $\log_2(n)$ random shifts is at least $1-1/n$. We present an $\BigOT(n^{5/4})$ time algorithm for computing a $Q$-optimal matching under the assumption that \eqref{eq:distance-approx-no-expectation} holds. By combining this assumption, with Lemma \ref{lem:qopt}, we have that any $Q$-optimal matching $\eps$-approximates the optimal RMS matching. Therefore, to prove Theorem \ref{theorem:main}, it is sufficient to give an $\BigOT(n^{5/4})$ time algorithm for computing a $Q$-optimal matching. 
\bibliographystyle{mystyle}
\bibliography{bib.bib}

\end{document}